\newcommand{\BPP}{\mathsf{BPP}}
\newcommand{\PP}{\mathsf{PP}}
\newcommand{\UPP}{\mathsf{UPP}}
\newcommand{\THR}{\mathsf{THR}}
\newcommand{\MAJ}{\mathsf{MAJ}}
\newcommand{\GHR}{\mathsf{GHR}}
\newcommand{\xor}{\mathsf{XOR}}
\newcommand{\XOR}{\mathsf{XOR}}
\renewcommand{\mod}{\mathsf{MOD}}
\newcommand{\UTHR}{\mathsf{UTHR}}
\newcommand{\cq}{\mathsf{CQ}}
\newcommand{\R}{{\mathbb R}}
\newcommand{\E}{{\mathbb E}}
\renewcommand{\exp}{\mathsf{EXP}}
\newcommand{\PM}{\mathsf{PM}}
\newcommand{\mathify}[1]{\ifmmode{#1}\else\mbox{$#1$}\fi}
\newcommand{\bra}[1]{\{#1\}}
\newcommand{\abs}[1]{\mathify{\left| #1 \right|}}
\newcommand{\wh}{\widehat}
\newcommand{\disc}{\mathrm{disc}}
\newcommand{\corr}{\mathrm{corr}}
\newtheorem{theorem}{Theorem}[section]
\newtheorem{lemma}[theorem]{Lemma}
\newtheorem{definition}[theorem]{Definition}
\newtheorem{claim}[theorem]{Claim}
\newtheorem{fact}[theorem]{Fact}
\newtheorem{remark}[theorem]{Remark}
\newtheorem*{theorem*}{Theorem}
\newtheorem{cor}{Corollary}
\newcommand\restr[2]{{
  \left.\kern-\nulldelimiterspace 
  #1 
  \vphantom{\big|} 
  \right|_{#2} 
  }}
\begin{document}
\title{Dual polynomials and communication complexity of \textsf{XOR} functions}
\author{Arkadev Chattopadhyay\thanks{Partially supported by a Ramanujan fellowship of the DST. arkadev.c@tifr.res.in}}
\author{Nikhil S.~Mande\thanks{Supported by a DAE fellowship. nikhil.mande@tifr.res.in}}
\affil{School of Technology and Computer Science, TIFR, Mumbai}
\date{}                     
\setcounter{Maxaffil}{0}
\renewcommand\Affilfont{\itshape\small}
\maketitle

\begin{abstract}
We show a new duality  between the \emph{polynomial margin} complexity of $f$ and the \emph{discrepancy} of the function $f \circ \xor$, called an $\xor$ function. Using this duality, 
we develop polynomial based techniques for understanding the bounded error ($\BPP$) and the weakly-unbounded error ($\PP$) communication complexities of $\xor$ functions.  This enables us to show the following.

\begin{itemize}
\item A weak form of an interesting conjecture of Zhang and Shi\footnote{The full conjecture has just been reported to be independently settled by Hatami and Qian \cite{HQ17}.
However, their techniques are quite different and are not known to yield many of the results we obtain here.} \cite{SZ09} asserts that for symmetric functions $f : \bra{0, 1}^n \rightarrow \bra{-1, 1}$, the weakly unbounded-error complexity of $f \circ \xor$ is essentially characterized by
the number of points $i$ in the set $\{0,1, \dots,n-2\}$ for which $D_f(i) \ne D_f(i+2)$, where $D_f$ is the predicate corresponding to $f$. The number of such points is called the \emph{odd-even} degree of $f$.
We observe that a much earlier work of Zhang \cite{Zhang91} implies that the $\PP$ complexity of $f \circ \xor$ is $O(k \log n)$, where $k$ is the odd-even degree of $f$.
We show that the $\PP$ complexity of $f \circ \xor$ is $\Omega(k/ \log(n/k))$.

\item We resolve a conjecture of Zhang \cite{Zhang91} characterizing the Threshold of Parity circuit size of symmetric functions in terms of their odd-even degree.

\item We obtain a new proof of the exponential separation between $\PP^{cc}$ and $\UPP^{cc}$ via an $\xor$  function.

\item We provide a characterization of the \emph{approximate spectral norm} of symmetric functions, affirming a conjecture of Ada et al.~\cite{AFH12} which has several consequences (cf.~\cite{AFH12}).
This also provides a new proof of the characterization of the bounded error complexity of symmetric $\xor$ functions due to \cite{SZ09}.

\end{itemize}

Additionally, we prove strong $\UPP$ lower bounds for $f \circ \xor$, when $f$ is symmetric and periodic with period $O(n^{1/2-\epsilon})$, for any constant $\epsilon > 0$.
More precisely, we show that every such $\xor$ function has unbounded error complexity $n^{\Omega(1)}$, unless $f$ is constant or parity or its complement, in which case the complexity is just $O(1)$.
As a direct consequence of this, we derive new exponential lower bounds on the size of depth-2 threshold circuits computing such $\xor$ functions.
Our $\UPP$ lower bounds do not involve the use of linear programming duality.

\end{abstract}

\thispagestyle{empty}
\newpage
\setcounter{page}{1}
\section{Introduction}\label{sec:intro}

We consider three well known models of randomized communication, in all of which Alice and Bob use only \emph{private random coins}. Alice and Bob receive a pair of inputs $X \in \mathcal{X}$ and $Y \in \mathcal{Y}$ respectively. 
They want to jointly evaluate a function $f : \mathcal{X} \times \mathcal{Y} \to \{0,1\}$ on the pair $(X,Y)$ by using a communication protocol that minimizes the total \emph{cost} in the worst case. 
The protocol is probabilistic with the requirement that $\Pr\left[\Pi(X,Y) = f(X,Y)\right] \ge 1/2 + \epsilon$, where $\epsilon > 0$. Each of the three models specifies a basic requirement on $\epsilon$, 
and the goal of the players is to design an \emph{efficient} protocol meeting this requirement that minimizes the cost. Further, each model has its own cost function. 
Protocols are efficient if their cost is poly-logarithmic in $n$, the length of the inputs to Alice and Bob.  

Yao \cite{Yao79} introduced the model where the advantage $\epsilon$ needs to be a positive constant independent of the length of the inputs and the cost is the total number of bits communicated. 
The cost of the best protocol for computing a function in this model is called its \emph{bounded error} complexity. Paturi and Simon \cite{PS86} relaxed the requirement on advantage completely: $\epsilon$ only needs to be positive, 
but it can be otherwise decreasing arbitrarily with $n$. The complexity of $f$ in this model is called its \emph{unbounded error} complexity. 
Babai et al.~\cite{BFS86} introduced a semi-relaxed model whose power is sandwiched between the two above models: while the correctness requirement is the same as that in the unbounded error case, low advantage is penalised by
introducing a term in the cost function: the cost of a protocol is the sum of the total number of bits communicated \emph{and} $\log\big(1/\epsilon\big)$. The complexity of a function in this semi-relaxed model is called its
\emph{weakly-unbounded error} complexity. The set of functions that have efficient bounded, weakly-unbounded and unbounded error protocols are called $\BPP^{cc}, \PP^{cc}$ and $\UPP^{cc}$ respectively, closely borrowing terminology from 
standard Turing machine complexity classes. 

Clearly, $\BPP^{cc} \subseteq \PP^{cc} \subseteq \UPP^{cc}$. Set-Disjointness, denoted by DISJ, separates $\BPP^{cc}$ from $\PP^{cc}$ due to \cite{BFS86} and the following simple $\PP$ protocol of logarithmic cost:
Alice randomly chooses an index $i$ in $[n]$ and sends the value of $i$ and her $i$th bit to Bob. If both Alice and Bob have 1 as their $i$th bit, Bob outputs that they are not disjoint. Otherwise Alice and Bob output a random answer.
Thus, the weakly-unbounded error complexity of DISJ, commonly considered to be a hard function \cite{BFS86, Razborov92, KS92}, is exponentially smaller than its bounded error complexity.

There are fewer known strong lower bounds for the $\PP$ model than the bounded error one. This is partly explained by the fact that while techniques based on corruption and information theory yield lower bounds for bounded error model,
the $\PP$ model is exactly characterized by the stronger measure of discrepancy \cite{Klauck07}. Still, there are several functions for which discrepancy can be bounded. $\PP^{cc}$ was separated from $\UPP^{cc}$ in independent works of
Sherstov \cite{She08} and Buhrman et al.~\cite{BVW07}. Proving lower bounds for the unbounded error model, on the other hand, is even more difficult. The only known way for proving bounds here is lower bounding the
\emph{sign-rank} of the communication matrix \cite{PS86}. The sign rank of a real matrix $M$ with non-zero entries is the smallest number $r$ such that there exists a matrix $M'$ of rank $r$ and the same dimension as that of
$M$ such that each of its entries has the same sign as the corresponding one in $M$. Clearly, there is a matrix rigidity-like flavor to this definition, perhaps explaining the difficulty of estimating this quantity well.
In a beautiful and breakthrough work, Forster \cite{Forster01} managed to show that the Inner-Product (IP) function has high sign rank and consequently high unbounded-error complexity. The technique of Forster,
relating the spectral norm of a matrix to its sign rank, forms the basis for the few subsequent works on lower bounds for explicit functions in the model. In particular, Razborov and Sherstov \cite{RS10} and
Sherstov \cite{Sherstov11a} prove lower bounds on different functions making very interesting use of additional tools from approximation theory.

We consider a different class of functions in the unbounded error model. To explain this, let us introduce function composition. Given $f:\{0,1\}^n \to \{0,1\}$ and $g:\{0,1\}^{2b} \to \{0,1\}$,
we denote by $f \circ g$ the following composed function: its input is naturally viewed as a $2 \times bn$ matrix consisting of $n$ blocks each of which is a $2 \times b$ matrix.
Alice and Bob get the first and second row respectively of this matrix. We define $\big(f \circ g\big)\big(w_1,\ldots,w_n\big) = f\big(g(w_1),\ldots,g(w_n)\big)$, where $w_i$ is the $i$th block. Here $b$ is called the block length.
Thus, $\text{DISJ}$ is $\textsf{NOR} \circ \textsf{AND}$, with block length 1. Similarly, IP is the block size 1 function $\xor \circ \textsf{AND}$. Both have \textsf{AND} as the inner function but, as pointed out earlier,
have widely different unbounded error complexity. What makes \textsf{AND} functions, that is functions of the form $f \circ \textsf{AND}$ with block length 1, difficult? 

An important step towards understanding this was taken in the works of Sherstov \cite{She09,She11} and Shi and Zhu \cite{ShiZ09}. 
These papers reduced the task of proving lower bounds on the cost of both (quantum) bounded error and weakly unbounded error protocols for functions of the form $f \circ \textsf{AND}$ to that of analyzing
the approximability of $f$ by low degree real polynomials. This passage was achieved by making very elegant use of linear programming duality. This method spawned further progress in at least two directions.
One was the adaptation of the technique to multi-party communication complexity in \cite{Cha07,CA08,LS09a,Cha09}, resulting in the first super-polynomial lower bounds for Disjointness in the hard NOF model.
Using even more powerful approximation theoretic tools for polynomials, Sherstov \cite{She14} significantly improved these bounds. In another direction,
Razborov and Sherstov \cite{RS10} and Sherstov \cite{Sherstov11a} further demonstrated the power of these dual polynomial based techniques by analyzing the unbounded error complexity of $f \circ \textsf{AND}$
when $f$ is a certain $\text{AC}^0$ function or it is symmetric. In short, dual polynomial techniques provide a systematic way of analyzing the communication complexity of \textsf{AND} functions.
Besides these impressive developments, this approach relates to research on approximation theoretic questions on boolean functions, that are of independent interest (see for example \cite{BT17, Thaler16}).

There are essentially two inner functions of block length 1, \textsf{AND} and $\xor$. A natural example of an \textsf{XOR} function is $\mathsf{AND} \circ \xor$, better known as Equality.
However, even its bounded error (private coin) complexity is only $O(\log n)$, while its unbounded error complexity is just $O(1)$. In fact, in some contexts as discussed later in this work,
proving even $\PP$ lower bounds for $\xor$ functions seems more challenging than proving lower bounds for $\textsf{AND}$ functions. Interestingly, Sherstov \cite{She08} used an $\XOR$ function introduced by
Goldmann, H{\aa}stad and Razborov \cite{GHR92}, to separate $\PP^{cc}$ from $\UPP^{cc}$.  Zhang and Shi \cite{SZ09} characterized the
bounded error and quantum complexity of all symmetric \textsf{XOR} functions. Recently, Hatami, Hosseini and Lovett \cite{HHL16} nearly characterized the deterministic complexity of all \textsf{XOR} functions.
Even more recently, after an initial version of this manuscript containing weaker results was submitted, Hatami and Qian \cite{HQ17} have just reported settling a conjecture of Zhang and Shi \cite{SZ09}
on the unbounded error complexity of symmetric $\XOR$ functions. Both \cite{SZ09,HQ17} analyze $\XOR$ functions by finding simple reductions to appropriate \textsf{AND} functions. While such arguments are short, as commented by Ada et al.~\cite{AFH12},
it seems they do not provide new insights and techniques that can be applied more broadly to \textsf{XOR} functions.

In this work, we develop a dual polynomial based technique for analyzing $\XOR$ functions. 

Along the way, we discover an independently interesting general connection between the discrepancy of functions of the form $f \circ \XOR$ and the polynomial margin complexity of $f$.
Using this and other tools, we characterize the $\PP$ complexity of symmetric $\XOR$ functions and provide a new proof of the exponential separation between $\PP^{cc}$ and $\UPP^{cc}$ via an $\XOR$ function.
We further provide a new proof of the characterization of Zhang and Shi \cite{SZ09} of the bounded error complexity of symmetric $\XOR$ functions.
Our argument, unlike theirs, is based on a connection between the approximate spectral norm of $f$ and the bounded error communication complexity of $f \circ \xor$.
While this connection seems to have been first reported in the survey by Lee and Shraibman \cite{LS09}, as far we know, and as expressed in Ada et al.~\cite{AFH12}, 
it has not been used before this work in deriving explicit lower bounds on communication complexity. 

In the course of proving lower bounds on communication complexity, we obtain new results on two complexity measures of symmetric functions that are of independent interest. 
First, we characterize symmetric functions computable by quasi-polynomial size depth 2 boolean circuits of the form Threshold of Parity, resolving an old conjecture of Zhang \cite{Zhang91}.
Further, we characterize the approximate spectral norm of symmetric functions, confirming a conjecture of Ada et al.~\cite{AFH12}, which has various consequences (cf.~\cite{AFH12}).
We feel that these developments exhibit the potential of the dual polynomial based technique for proving lower bounds against $\XOR$ functions in general (that are not necessarily symmetric).

\subsection{Our Results}\label{sec:results}

In this section, we outline our main results.

\subsubsection{Polynomial complexity measures of symmetric functions}

In this section, we outline results we obtain by amplifying hardness of functions using the method of lifting as defined in Krause and Pudl\'{a}k.
We list applications of this `hardness amplification' to symmetric functions, which resolve conjectures by Ada et al.~\cite{AFH12} and Zhang \cite{Zhang91}.

For any function $f : \bra{-1, 1}^n \rightarrow \bra{-1, 1}$, let $f = \sum_{S \subseteq [n]}c_S \prod\limits_{i \in S}x_i$ be the unique multilinear expansion of $f$.
Define the \emph{weight} of $f$, denoted by $wt(f)$ to be $\sum_{S \subseteq [n]}\abs{c_S}$.
\footnote{Note that this notion coincides with $||\hat{f}||_1$, the spectral norm of $f$.  However, for the purposes of this paper, we shall use the former notation.}

\begin{definition}[Approximate weight]\label{defn: appwt}
Define the \textnormal{$\epsilon$-approximate weight} of a function $f : \bra{-1, 1}^n \rightarrow \bra{-1, 1}$, denoted by $wt_{\epsilon}(f)$ to be the weight of a minimum weight polynomial such that
for all $x \in \bra{-1, 1}^n, ~ \abs{p(x) - f(x)} < \epsilon$.
\footnote{This notion coincides with the notion of the \emph{$\epsilon$-approximate spectral norm} of $f$, denoted by $||\hat{f}||_{1, \epsilon}$, as defined in \cite{AFH12}.}
\end{definition}

\begin{definition}\label{defn: rf}
Let $F : \bra{-1, 1}^n \rightarrow \bra{-1, 1}$ be a symmetric function.  Define $r_0 = r_0(F), r_1 = r_1(F)$ to be the minimum integers $r_0'$ and $r_1'$ respectively, 
such that $r_0', r_1' \leq n/2$ and $D_F(i) = D_F(i + 2)$ for all $i \in [r_0', n - r_1')$.  Define $r = r(F) = \max\bra{r_0, r_1}$.
\end{definition}
\begin{definition}[Margin]\label{defn: margin}
The margin of a function $f : \bra{-1, 1}^n \rightarrow \bra{-1, 1}$ is defined as follows.
\[
m(f) \triangleq \max_{p : wt(p) \leq 1}\left(\min_{x \in \bra{-1, 1}^n}{p(x)f(x)}\right)
\]
Here, the maximum is only taken over those polynomials $p$ which sign represent $f$ everywhere.
\end{definition}

We prove the following powerful theorem which gives us lower bound tools against approximate weight, signed monomial complexity, and polynomial margin of symmetric functions.

\begin{theorem}\label{thm: liftsym}
Let $F : \bra{-1, 1}^n \rightarrow \bra{-1, 1}^n$ be any symmetric function.
\begin{enumerate}
\item If $r(F) \geq 5$, then there exists a universal constant $c_1 > 0$ such that
\[
\log(wt_{1/3}(F)) \geq c_1 \cdot r(F).
\]
\item If $k = \deg_{oe}(F) \geq 16$, then there exists a universal constant $c_2$ such that
\[
\mathrm{mon}_{\pm}(F) \geq 2^{c_2 \cdot k / \log(n/k)}
\]
\item If $k = \deg_{oe}(F) \geq 16$, then there exists a universal constant $c_3$ such that
\[
m(F) \leq \frac{1}{2^{c_3 \cdot k / \log(n/k)}}
\]
\end{enumerate}
\end{theorem}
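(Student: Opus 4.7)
The three statements will follow from a unified lifting strategy in the spirit of Krause and Pudl\'ak: identify a ``hard core'' symmetric function $f_0$ on few variables whose approximate degree is large by Paturi's theorem, compose it with an appropriate gadget $g$, and transfer the resulting hardness of $f_0 \circ g$ to $F$ via a restriction. I would proceed as follows.

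First, I would use the hypothesis on $r(F)$ (resp.\ $\deg_{oe}(F)$) to realize a lifted hard function inside $F$. Since $F$ is symmetric, fixing $n-m$ input bits to carefully chosen constants produces a restriction $F|_\rho$ that is itself symmetric, whose predicate inherits the jumps of $D_F$. For part~1, one chooses $\rho$ so that $F|_\rho$ is a symmetric function on $m = \Theta(r(F))$ variables whose predicate oscillates near the boundary, making Paturi's theorem applicable with approximate degree $\Omega(r(F))$. For parts~2 and~3, one further engineers $\rho$ so that $F|_\rho = f_0 \circ g$, where $f_0$ is symmetric on $k = \deg_{oe}(F)$ bits and $g$ is a gadget on $b = \Theta(\log(n/k))$ bits (for example an $\mathsf{OR}_b$ or an unbalanced threshold), so that $kb$ fits inside the free coordinates of $\rho$. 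Paturi's theorem, dualized, then yields a witness $\psi$ of unit $\ell_1$-norm, orthogonal to polynomials of degree below $\Omega(k)$, and correlated with $f_0$.

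Second, I would invoke the Krause--Pudl\'ak lifting lemma to pass from the approximate-degree hardness of $f_0$ to the three lower bounds on $f_0 \circ g$. Specifically, the lifted dual witness $\Psi := \psi \otimes \psi_g^{\otimes k}$, where $\psi_g$ is a balanced dual for the gadget, certifies $\mathrm{mon}_{\pm}(f_0 \circ g) \geq 2^{\Omega(k/b)}$ and $m(f_0 \circ g) \leq 2^{-\Omega(k/b)}$, giving parts~2 and~3 after substituting $b = \Theta(\log(n/k))$. For part~1, the approximate weight bound follows directly from Paturi's bound on the restricted core $F|_\rho$ using the standard inequality $\log wt_{1/3} \geq \Omega$(approximate degree) on symmetric axes, which delivers $\Omega(r(F))$ without the gadget factor. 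Finally, since $wt_{1/3}$, $\mathrm{mon}_{\pm}$, and $1/m$ are all non-increasing under restrictions to $\pm 1$, the bounds on the restricted core transfer back to $F$.

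The principal obstacle is the embedding in step one: exhibiting a restriction $\rho$ of a general symmetric $F$ with many jumps in $D_F$ that literally realizes a lifted function $f_0 \circ g$, where the gadget output controls exactly the jump locations picked up by Paturi's hard predicate for $f_0$. This requires a careful choice of which bits of $F$ to fix and to what, together with a gadget $g$ whose Hamming-weight image partitions the jump set of $D_F$ into balanced halves; the asymmetry between parts~1 and~2--3 in the final bound reflects exactly this gadget-size trade-off. A secondary technical point is verifying the lifting lemma in the form needed for the approximate weight measure, which is classically stated for sign-monomial complexity and margin; this should go through by redualizing the approximate-weight LP and recycling the same lifted witness $\Psi$.
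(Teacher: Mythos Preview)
There are genuine gaps. First, the transfer mechanism you rely on --- restricting $F$ by fixing input bits --- cannot produce a function of the form $f_0 \circ g$ for a nontrivial gadget $g$: any restriction of a symmetric function to $\pm 1$ values is again fully symmetric in the surviving variables, whereas $f_0 \circ g$ is only block-symmetric. So the embedding step in your parts~2 and~3 fails before you ever get to lift a dual witness. The paper's device is different and is really the heart of the argument: it shows (Lemma~\ref{lemma: symm_lift}) that for a symmetric $F$ one can find a symmetric $f$ on roughly $n/4$ variables such that the Krause--Pudl\'ak lift $f^{op}$ (with the fixed $3$-bit selector gadget, not a variable-size $\mathsf{OR}_b$) is a \emph{monomial projection} of $F$. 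Monomial projections, unlike restrictions, can manufacture block structure inside a fully symmetric function, and all three complexity measures are monotone under them (Lemma~\ref{lem: monproj}). The $\log(n/k)$ loss in parts~2 and~3 does not come from a gadget of that size; it comes from a pigeonhole over a geometric family of scales $4n/3^i$, one of which must capture $\Omega(k/\log(n/k))$ sign changes and hence give an $f$ of large sign degree.

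Second, your route to part~1 invokes a ``standard inequality'' $\log wt_{1/3} \ge \Omega(\widetilde{\deg})$ on the symmetric core, but this inequality is false: $\mathsf{PARITY}_m$ has approximate degree $m$ and weight $1$, and $\mathsf{AND}_n$ has approximate degree $\Theta(\sqrt{n})$ but $\log wt_{1/3} = O(\log n)$. So you cannot bypass the lifting step. The paper instead uses part~3 of Lemma~\ref{lem: lift}, which says $\log wt_{1/3}(f^{op}) \ge \Omega(\widetilde{\deg}_{2/3}(f))$ --- the degree-to-weight amplification happens \emph{through} the lift $f^{op}$, and is then pushed back to $F$ via the monomial projection. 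Concretely: pass to a subfunction $F'$ on $\Theta(r)$ variables, take $f$ with $D_f(b) = D_{F'}(2b + \text{const})$ so that $f$ has a predicate flip at the midpoint (hence $\widetilde{\deg}_{2/3}(f) = \Theta(r)$ by Paturi), observe $f^{op}$ is a monomial projection of $F'$, and chain the inequalities $wt_{1/3}(F) \ge wt_{1/3}(F') \ge wt_{1/3}(f^{op}) \ge 2^{\Omega(r)}$.
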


We also use Part 1 of Theorem \ref{thm: liftsym}, to prove the following theorem, posed as a conjecture by Ada et al.~\cite{AFH12}.

\begin{theorem}[Conjecture 1 in \cite{AFH12}]\label{thm: afhsolve}
For any symmetric function $F : \bra{-1, 1}^n \rightarrow \bra{-1, 1}$, there exist universal constants $c_0, c_1 > 0$ such that
\[
c_0 \cdot r(F)\log\left(\frac{n}{r(F)}\right) \geq \log wt(F) \geq \log wt_{1/3}(F) \geq c_1 \cdot r(F)
\]
\end{theorem}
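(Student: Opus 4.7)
The strategy is to treat each inequality in the sandwich separately. The middle inequality $\log wt(F) \geq \log wt_{1/3}(F)$ is immediate from Definition \ref{defn: appwt}: the unique multilinear extension of $F$ itself achieves approximation error zero, so $wt_{1/3}(F) \leq wt(F)$. The right inequality $\log wt_{1/3}(F) \geq c_1 r(F)$ is Part 1 of Theorem \ref{thm: liftsym} whenever $r(F) \geq 5$, and holds trivially for a sufficiently small $c_1$ in the range $r(F) < 5$ (any non-constant $F$ has $wt_{1/3}(F) \geq 1$, so $\log wt_{1/3}(F) \geq 0$).

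The real content is the left inequality $\log wt(F) \leq c_0 r(F) \log(n/r(F))$, which I plan to establish by exhibiting an explicit low-weight exact representation of $F$. Let $r_0, r_1 \leq n/2$ be as in Definition \ref{defn: rf}, so $r = \max\bra{r_0, r_1}$ and $D_F(i) = D_F(i+2)$ for every $i \in [r_0, n - r_1)$. This periodicity forces $F(x)$ on the middle slice $M := \bra{x : r_0 \leq \abs{x} < n - r_1}$ to depend only on the parity of $\abs{x}$. I would therefore choose constants $a, b \in \bra{-1, 0, 1}$ with $\abs{a} + \abs{b} \leq 2$ (matching the common even and odd middle values of $D_F$) so that $F_{\mathrm{mid}}(x) := a + b \prod_{j=1}^n x_j$ coincides with $F$ on $M$, and then decompose $F = F_{\mathrm{mid}} + G_{\mathrm{low}} + G_{\mathrm{high}}$, where the corrections $G_{\mathrm{low}}, G_{\mathrm{high}}$ are symmetric, bounded in absolute value by $2$, and supported on $\bra{x : \abs{x} < r_0}$ and $\bra{x : \abs{x} \geq n - r_1}$ respectively.

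Clearly $wt(F_{\mathrm{mid}}) \leq 2$. To bound $wt(G_{\mathrm{low}})$ (and symmetrically $wt(G_{\mathrm{high}})$), I would expand in the basis of Hamming-slice indicators $E_i(x) := \mathbf{1}[\abs{x} = i]$: since each $E_i$ is a sum of $\binom{n}{i}$ point-indicators of the form $\prod_{j=1}^n (1 \pm x_j)/2$, and each such product has spectral norm exactly $1$, the triangle inequality gives $wt(E_i) \leq \binom{n}{i}$. Writing $G_{\mathrm{low}} = \sum_{i < r_0} c_i E_i$ with $\abs{c_i} \leq 2$ then yields $wt(G_{\mathrm{low}}) \leq 2 \sum_{i < r_0} \binom{n}{i}$, which is $2^{O(r \log(n/r))}$ by the standard tail bound on partial binomial sums. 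The one place I expect to have to take care is the degenerate case where the middle slice $M$ is empty (so $n - r_1 \leq r_0$); there I would simply set $F_{\mathrm{mid}} \equiv 0$ and absorb $F$ entirely into the two tail terms, noting that $r \geq n/2$ in that regime already renders the claimed upper bound trivial. Everything else reduces to routine binomial estimates.
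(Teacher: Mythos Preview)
Your proposal is correct. For the right inequality you invoke Part 1 of Theorem~\ref{thm: liftsym}, exactly as the paper does; the middle inequality is indeed trivial.

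The difference is in the left inequality. The paper treats it as already known: it simply cites Theorem~\ref{thm: afh}, the upper bound of Ada et al.~\cite{AFH12}, as a black box. You instead supply a self-contained proof by explicitly writing $F = F_{\mathrm{mid}} + G_{\mathrm{low}} + G_{\mathrm{high}}$ and bounding the spectral norm of each piece via $wt(E_i) \le \binom{n}{i}$ and the partial binomial-sum estimate. This is essentially a re-derivation of the Ada et al.\ upper bound; it is elementary and perfectly valid, but from the paper's standpoint the novelty of Theorem~\ref{thm: afhsolve} lies entirely in the \emph{lower} bound $\log wt_{1/3}(F) \ge c_1 r(F)$, so your remark that ``the real content is the left inequality'' inverts the paper's emphasis. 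One small quibble: for $r(F)<5$ your justification that $wt_{1/3}(F)\ge 1$ is not quite right (one only gets $wt_{1/3}(F)\ge 2/3$ from $wt(p)\ge \max_x |p(x)|$), but as the paper itself ignores this boundary case, the statement is clearly meant asymptotically and this is harmless.
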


This has several consequences (cf.~\cite{AFH12}), which we do not state here.
We also resolve the following conjecture by Zhang \cite{Zhang91}.
\begin{theorem}[Conjecture 1 in \cite{Zhang91}]\label{thm: zhangsolve}
A symmetric function $f : \bra{-1, 1}^n \rightarrow \bra{-1, 1}$ is computable by a quasi-polynomial size Threshold of Parity circuit if and only if its odd-even degree is $\log^{O(1)}n$.
\end{theorem}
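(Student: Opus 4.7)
The plan is to use the standard fact that the size of the smallest Threshold-of-Parity circuit computing a function $f$ agrees, up to polynomial factors, with the signed monomial complexity $\mathrm{mon}_{\pm}(f)$: each parity gate realises one character $\prod_{i\in S}x_i$, and the top threshold gate converts a collection of such gates into a multilinear sign-representation of $f$ with the corresponding number of terms. So Zhang's conjecture reduces to showing that, for symmetric $f$, one has $\log \mathrm{mon}_{\pm}(f) = \log^{O(1)} n$ if and only if $\deg_{oe}(f) = \log^{O(1)} n$.

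The ``only if'' direction then falls out of Part 2 of Theorem \ref{thm: liftsym}. If $f$ is computed by a Threshold-of-Parity circuit of size $n^{\log^{O(1)}n}$, then $\mathrm{mon}_{\pm}(f) \le n^{\log^{O(1)}n}$, and Part 2 gives $2^{c_2 k / \log(n/k)} \le n^{\log^{O(1)}n}$ with $k := \deg_{oe}(f)$. Taking logarithms and using $\log(n/k) \le \log n$ forces $k \le \log^{O(1)} n$, exactly as required.

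For the ``if'' direction I would construct, for any symmetric $f$ with $\deg_{oe}(f) = k$, a multilinear sign-representation with only $n^{O(k)}$ monomials; substituting $k = \log^{O(1)} n$ then produces the desired quasi-polynomial Threshold-of-Parity circuit. The construction exploits that the subsequences $(D_f(2j))_j$ and $(D_f(2j+1))_j$ have combined sign-change count exactly $k$. Elementary univariate interpolation---place a real root strictly between each pair of consecutive same-parity integers at which $D_f$ flips and then fix the leading coefficient---yields polynomials $Q_e, Q_o$ of degrees summing to $k$ with $\mathrm{sgn}(Q_e(i)) = D_f(i)$ at every even $i \in \{0,\dots,n\}$ and $\mathrm{sgn}(Q_o(i)) = D_f(i)$ at every odd $i$. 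Writing $h(x) := (n - \sum_i x_i)/2$ for the Hamming weight (a linear form in $x$) and $\pi(x) := \prod_{i=1}^n x_i$ for the total parity, I then take
\[
  p(x) \;:=\; Q_e\bigl(h(x)\bigr)\bigl(1 + \pi(x)\bigr) + Q_o\bigl(h(x)\bigr)\bigl(1 - \pi(x)\bigr).
\]
When $h(x)$ is even, $\pi(x) = 1$ and $p(x) = 2 Q_e(h(x))$; when $h(x)$ is odd, $\pi(x) = -1$ and $p(x) = 2 Q_o(h(x))$. In either case $\mathrm{sgn}(p(x)) = D_f(h(x)) = f(x)$. Since $h(x)$ is linear, $Q_e \circ h$ and $Q_o \circ h$ multilinearize to polynomials supported on characters $\chi_S$ with $|S| \le k$, giving $\binom{n}{\le k}$ monomials each, and multiplying by $\pi(x)$ merely sends $\chi_S \mapsto \chi_{[n]\setminus S}$, at most doubling the count. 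Hence $p$ has $n^{O(k)}$ distinct monomials.

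The main obstacle is entirely encapsulated in Part 2 of Theorem \ref{thm: liftsym}, which supplies the sharp monomial-complexity lower bound driving the ``only if'' direction. Both of the remaining ingredients---the equivalence between Threshold-of-Parity size and $\mathrm{mon}_{\pm}$, and the interpolation-plus-parity-correction construction producing the sign-representation---are routine once that machinery is available.
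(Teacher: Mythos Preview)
Your proposal is correct and follows essentially the same approach as the paper. The paper's proof cites Zhang's upper bound (Theorem~\ref{thm: zhang_ub}) for the ``if'' direction and invokes Part~2 of Theorem~\ref{thm: liftsym} for the ``only if'' direction; your explicit interpolation-plus-parity-correction construction for the upper bound is precisely the argument the paper reproduces later in Theorem~\ref{thm: upper_bound} (and attributes to Zhang), so both directions align with the paper's treatment.
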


\subsubsection{$\PP$ complexity}\label{sec:PP}
In this section, we list our results regarding the $\PP$ complexity of $\xor$ functions.

Our main tool for analyzing the discrepancy of $\XOR$ functions is a tight relationship (upto constant factors) between $\disc(f \circ \xor)$ and $m(f)$. We derive this using linear programming duality.

\begin{theorem}[Polynomial Margin-Discrepancy theorem]\label{thm: equiv}
Let $f \rightarrow \bra{-1, 1}^n \rightarrow \bra{-1, 1}$.
\[
m(f) \leq m(f \circ \xor) \leq 4\disc(f \circ \xor) \leq 4m(f)
\]
\end{theorem}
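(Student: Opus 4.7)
The plan is to base everything on the LP-duality characterization
\[
m(f) \;=\; \min_\mu \; \max_{S \subseteq [n]} \; \Bigl| \E_{z \sim \mu}\Bigl[f(z) \textstyle\prod_{i \in S} z_i\Bigr] \Bigr|,
\]
where $\mu$ ranges over probability distributions on $\{-1,1\}^n$; this is obtained by standard strong duality applied to the LP defining $m(f)$. The analogous identity for $m(f \circ \xor)$ ranges over distributions on $\{-1,1\}^{2n}$ and characters $\prod_{i \in S_x} x_i \prod_{j \in S_y} y_j$ with $S_x, S_y \subseteq [n]$. All three inequalities turn on this min-max.

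The leftmost inequality $m(f) \leq m(f \circ \xor)$ follows by lifting: given a polynomial $p(z) = \sum_S c_S \prod_{i \in S} z_i$ attaining $m(f)$, the polynomial $P(x, y) := p(x \circ y) = \sum_S c_S \prod_{i \in S} x_i y_i$ (with $x \circ y$ the coordinatewise product) has the same weight and achieves the same margin against $f \circ \xor$. For the middle inequality $m(f \circ \xor) \leq 4 \disc(f \circ \xor)$, pick a distribution $\mu$ attaining $\disc(f \circ \xor) = \delta$, and for arbitrary $S_x, S_y$ decompose the character as
\[
\textstyle\prod_{i \in S_x} x_i \prod_{j \in S_y} y_j \;=\; \bigl(1_{A_+}(x) - 1_{A_-}(x)\bigr)\bigl(1_{B_+}(y) - 1_{B_-}(y)\bigr),
\]
where $A_\pm \subseteq \{-1,1\}^n$ (resp.\ $B_\pm$) are the $\pm 1$ level sets of the respective parities. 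Expanding produces a signed sum of four rectangle-indicators; since each such rectangle has signed mass at most $\delta$ in absolute value against $f \circ \xor$ under $\mu$, the character contributes at most $4\delta$. Plugging this $\mu$ into the dual of $m(f \circ \xor)$ completes the step.

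The rightmost inequality $\disc(f \circ \xor) \leq m(f)$ is the main content. Let $\mu^*$ attain the dual of $m(f)$ and define $\nu(x, y) := \mu^*(x \circ y)/2^n$, a distribution on $\{-1,1\}^{2n}$. For any rectangle $A \times B$, the substitution $z = x \circ y$ (at fixed $x$, so $y = x \circ z$) rewrites
\[
\E_\nu \bigl[ (f \circ \xor)(x, y) \cdot 1_A(x) 1_B(y) \bigr] \;=\; \E_{z \sim \mu^*}\bigl[ f(z) \, (1_A \ast 1_B)(z) \bigr],
\]
where $\ast$ is convolution on the group $(\{-1,1\}^n, \circ)$. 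Expanding the convolution in the Fourier basis, applying the bound $|\E_{\mu^*}[f(z) \prod_{i \in S} z_i]| \leq m(f)$ coefficient-wise, and estimating $\sum_S |\widehat{1_A}(S) \widehat{1_B}(S)| \leq \|1_A\|_2 \|1_B\|_2 \leq 1$ by Cauchy-Schwarz and Parseval yields exactly the bound $m(f)$. The principal obstacle is identifying the correct $\nu$: converting a distribution $\mu^*$ on $n$-bit inputs into one on $2n$-bit inputs under which every rectangle has small signed mass essentially forces one to exploit the group structure of $\xor$ via this convolution. Once the construction is in place, the Fourier calculation is routine, and the other two inequalities are mechanical given the LP dual of margin.
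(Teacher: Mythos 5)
Your proof is correct and takes essentially the same route as the paper: lift the dual-optimal distribution $\mu^*$ to $\mu^\oplus(x,y) = \mu^*(x\circ y)/2^n$ and bound rectangle discrepancy via Fourier coefficients of $f\mu^*$ together with Cauchy--Schwarz/Parseval, with the middle inequality handled by the same four-rectangle decomposition of a character. The only presentational difference is that the paper packages the final step as a spectral-norm computation for the $\XOR$ matrix (its Lemma~\ref{lem: xor_eigenvalues} plus $\disc_{\mu^\oplus}(F)\le 2^n\|\mu^\oplus\circ_H F\|$), whereas you re-derive the same estimate directly via the convolution identity $\E_\nu[F\cdot 1_A 1_B]=\E_{\mu^*}[f\cdot(1_A * 1_B)]$; these are the same calculation in two equivalent guises.
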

The proof of Theorem \ref{thm: equiv} shows that the discrepancy of every $\XOR$ function is attained on a lifted distribution. Indeed, our Margin-Discrepancy Theorem is a lifting theorem for XOR functions
that primarily reduces the task of lower bounding the discrepancy of $f \circ \xor$ with that of establishing bounds on the polynomial margin of $f$. The second task is likely easier using tools from approximation theory.
There is a compelling parallel here with the Degree-Discrepancy Theorem of Sherstov \cite{She09}. This theorem has yielded a methodical way of proving discrepancy bounds for $f \circ \PM$ by lower bounding the \emph{sign degree} of $f$,
where $\PM$ denotes the pattern matrix gadget, and is defined formally in Section \ref{sec: prelims}. This has led to much progress in understanding the communication complexity of $\textsf{AND}$ functions
(for example, \cite{Cha07, CA08, She11, Sherstov11a}). We believe our polynomial Margin-Discrepancy Theorem will yield a unified approach in making similar progress for XOR functions. As evidence of this, we provide two applications of this theorem.

The first shows that the $\PP$ complexity of functions of the form $F \circ \xor$ for symmetric $F$ is essentially the odd-even degree of $F$ (upto polylogarithmic factors) as predicted by the conjecture of Shi and Zhang.

\begin{theorem}\label{thm: sym_xor}
Let $F : \bra{-1, 1}^{4n} \rightarrow \bra{-1, 1}$ be any symmetric function, and let $r \geq 4$ be its odd-even degree.  Then, there exists a universal constant $c > 0$ such that $\PP(F \circ \xor) \geq cr/\log(n/r)$ where $\PP(F \circ \xor)$ denotes the $\PP$ complexity of $F \circ \xor$.
\end{theorem}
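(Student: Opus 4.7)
The plan is to obtain this $\PP$ lower bound by chaining together three previously stated ingredients: the polynomial margin upper bound for symmetric functions with large odd-even degree (Theorem~\ref{thm: liftsym}, Part~3), the Margin-Discrepancy Theorem (Theorem~\ref{thm: equiv}), and the well-known Klauck characterization of $\PP$ communication complexity in terms of discrepancy, $\PP(g) = \Theta(\log(1/\disc(g)))$.

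\textbf{Step 1.} Starting from the symmetric function $F:\{-1,1\}^{4n}\to\{-1,1\}$ with odd-even degree $k = \deg_{oe}(F) = r$, I would first dispose of the small-$r$ case. When $4\le r<16$, the desired lower bound $cr/\log(n/r)$ is a constant, and so is trivially achieved by any nonconstant protocol; choosing the constant $c$ small enough absorbs this range. Assume henceforth $r\ge 16$.

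\textbf{Step 2.} Apply Part~3 of Theorem~\ref{thm: liftsym} to $F$ (which is a symmetric function on $4n$ variables with $k=\deg_{oe}(F)\ge 16$). This gives a constant $c_3>0$ such that
\[
m(F) \;\le\; 2^{-c_3\, r/\log(4n/r)}.
\]

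\textbf{Step 3.} Now invoke the Polynomial Margin-Discrepancy Theorem (Theorem~\ref{thm: equiv}), whose right-hand chain of inequalities yields $\disc(F\circ \xor)\le m(F)$. Combining with Step~2,
\[
\disc(F\circ \xor) \;\le\; 2^{-c_3\, r/\log(4n/r)}.
\]

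\textbf{Step 4.} Finally, apply the discrepancy lower bound on weakly-unbounded error communication complexity (Klauck's theorem), which states $\PP(g) \ge \Omega(\log(1/\disc(g)))$ for any two-party function $g$. Plugging in $g = F\circ \xor$ gives
\[
\PP(F\circ \xor) \;\ge\; \Omega\!\left(\frac{r}{\log(4n/r)}\right) \;=\; \Omega\!\left(\frac{r}{\log(n/r)}\right),
\]
where the last equality uses that $\log(4n/r) = \log 4 + \log(n/r) = O(\log(n/r))$ whenever $n\ge r\ge 4$. Tracking constants through the three invoked theorems yields the desired universal constant $c>0$.

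The proof is essentially a composition of previously established results, so there is no new obstacle in this theorem itself; all the real work has already been done in establishing Theorems~\ref{thm: liftsym} and~\ref{thm: equiv}. The only minor care points are (i) ensuring the hypothesis $k\ge 16$ of Theorem~\ref{thm: liftsym} is met (handled by the small-$r$ base case), and (ii) absorbing the factor of $4$ in the input length into the asymptotics, both of which are routine.
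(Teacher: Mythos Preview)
Your proposal is correct and follows essentially the same route as the paper: combine Part~3 of Theorem~\ref{thm: liftsym} (the margin upper bound for symmetric functions with large odd-even degree) with Theorem~\ref{thm: equiv} (Margin-Discrepancy) and Theorem~\ref{thm: klauck} to obtain the $\PP$ lower bound. The paper's proof is stated in a single line without the case analysis for small $r$ or the bookkeeping on the input length, but your added details are sound and the argument is identical in substance.
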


To prove the above, Theorem \ref{thm: equiv} sets the goal of establishing a bound on the margin complexity of symmetric functions with large odd-even degree.
We do this by showing that symmetric functions with large odd-even degree can be projected onto a certain lift of symmetric functions with high sign degree.
This enables us to work with the more convenient notion of sign degree rather than odd-even degree of symmetric functions. 

As another application of our Margin-Discrepancy connection, we provide a new proof of the separation of $\PP^{cc}$ from $\UPP^{cc}$. We do this by proving that an $\XOR$ function,
almost identical to the $\GHR$ function (cf.~\cite{GHR92}) has exponentially small discrepancy. It is well known that this function has very efficient $\UPP$ protocols. 
We define the $\GHR$ function formally in Section \ref{subsec: comm}.

\begin{theorem}\label{thm: main}
\hspace{2em}
\begin{enumerate}
 \item There exists a linear threshold function $f : \bra{-1, 1}^n \rightarrow \bra{-1, 1}$ and an absolute constant $c > 0$ such that $\PP(f \circ \xor) \geq cn$.
 \item $\PP(\GHR) \geq \Omega(\sqrt{n})$.
\end{enumerate}
\end{theorem}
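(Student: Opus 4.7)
The plan is to combine the Polynomial Margin--Discrepancy Theorem (Theorem \ref{thm: equiv}) with the standard equivalence $\PP(g) = \Theta(\log(1/\disc(g)) + \log n)$ (Klauck) to reduce both parts to producing linear threshold functions with exponentially small polynomial margin.

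For Part 1, I would work with a Goldmann--H{\aa}stad--Razborov style LTF on $n$ bits, such as $f(x) = \mathrm{sign}\bigl(\sum_{i=1}^{n} 2^{i} x_i - \theta\bigr)$ for a suitable threshold $\theta$. The key technical task is to establish $m(f) \leq 2^{-\Omega(n)}$. By LP duality, this amounts to exhibiting a single distribution $\mu$ on $\bra{-1,1}^n$ such that $\bigl|\E_{x \sim \mu}[f(x)\chi_S(x)]\bigr| \leq 2^{-\Omega(n)}$ simultaneously for every character $\chi_S$. I would build such a $\mu$ to concentrate on the ``critical window'' of inputs that sit very close to the decision boundary of $f$, exploiting the geometric progression of the weights so that no single Fourier character can correlate with $f$ better than $2^{-\Omega(n)}$. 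Once this margin bound is in hand, Theorem \ref{thm: equiv} gives $\disc(f \circ \xor) \leq m(f) \leq 2^{-\Omega(n)}$, and hence $\PP(f \circ \xor) = \Omega(n)$.

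For Part 2, the point is that $\GHR$ is itself of the form $g \circ \xor$, where $g$ is an LTF on $\Theta(\sqrt{n})$ variables whose weights are chosen in \cite{GHR92} to admit an efficient $\UPP$ protocol while still retaining geometric weight growth on the coordinates. The construction sketched for Part 1, carried out now for $g$ on $\sqrt{n}$ rather than $n$ variables, yields $m(g) \leq 2^{-\Omega(\sqrt{n})}$. Applying Theorem \ref{thm: equiv} to $g \circ \xor = \GHR$ gives $\disc(\GHR) \leq 2^{-\Omega(\sqrt{n})}$, and therefore $\PP(\GHR) = \Omega(\sqrt{n})$.

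The main obstacle is the polynomial margin bound itself. While it is a classical fact that these LTFs require integer weights $2^{\Omega(n)}$ in their \emph{degree-one} representation, the margin measure $m(\cdot)$ is taken over arbitrary polynomial sign-representations, so the argument must preclude every higher-degree polynomial as well. Producing a single distribution $\mu$ whose correlation with $f$ is exponentially small against \emph{all} of the $2^n$ Fourier characters simultaneously is the delicate technical step, and is precisely where an approximation-theoretic dual witness construction --- rather than just a weight lower bound for linear representations --- is needed.
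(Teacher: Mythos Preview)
Your high-level reduction via Theorem~\ref{thm: equiv} and Theorem~\ref{thm: klauck} is exactly what the paper does. The divergence, and the gap, is in how you obtain a linear threshold function with exponentially small polynomial margin.

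You propose to take the specific LTF $f(x)=\mathrm{sign}\bigl(\sum_i 2^i x_i - \theta\bigr)$ and to \emph{directly} build a dual distribution $\mu$ with $\bigl|\E_\mu[f\chi_S]\bigr|\le 2^{-\Omega(n)}$ for every $S$. You never actually construct $\mu$; you describe it only as ``concentrating on the critical window,'' and you yourself flag this as the main obstacle. That is not a sketch that can be completed routinely: the margin is over \emph{all} polynomial sign-representations, so a single witness must defeat all $2^n$ characters simultaneously, and the pure powers-of-two LTF you name is highly structured (indeed for $\theta=0$ it collapses to a single coordinate). There is no indication that the particular $f$ you wrote down has exponentially small margin, and no construction of $\mu$ is offered. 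As it stands, Part~1 is a restatement of what needs to be proved.

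The paper avoids this entirely. It never builds a dual witness. Instead it invokes Sherstov's approximation-theoretic result (Theorem~\ref{thm: sherstov}) that some LTF $f$ satisfies $\varepsilon_{cn}(f)>1-2^{-cn}$; then the Krause--Pudl\'ak lift (Lemma~\ref{lem: lift}) converts this degree-hardness into $m(f^{op})\le 2^{-\Omega(n)}$; finally Lemma~\ref{lem: thr_lift} with Lemma~\ref{lem: monproj} shows $f^{op}$ is a monomial projection of a new LTF $f'$ on $4n$ variables, so $m(f')\le m(f^{op})\le 2^{-\Omega(n)}$. This is the missing idea: lift a degree lower bound to a margin lower bound, rather than attack the margin directly.

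For Part~2, your picture of $\GHR$ is slightly off: $\GHR=\UTHR\circ\xor$ where $\UTHR$ is on $\Theta(n)$ (not $\Theta(\sqrt n)$) variables; the $\sqrt n$ arises because $\UTHR$ is universal, so the $f'$ from Part~1 on $\Theta(m)$ variables embeds into $\UTHR$ on $\Theta(m^2)$ variables (Fact~\ref{fact: quad_blowup}), whence $m(\UTHR)\le 2^{-\Omega(\sqrt n)}$. Your Part~2 inherits the gap from Part~1 in any case.
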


\subsubsection{$\BPP$ complexity}

Using linear programming duality and the generalized discrepancy method (Theorem \ref{thm: gendisc}), we give a simple alternate proof of the following result from \cite{LS09}.
\begin{theorem}\label{thm: bpp}
For any function $f : \bra{-1, 1}^n \rightarrow \bra{-1, 1}$, there exists a universal constant $c > 0$ such that
\[
R_{1/3}(f \circ \xor) \geq c\log\left(wt_{1/3}(f) - 4\right).
\]
\end{theorem}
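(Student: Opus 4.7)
The plan is to combine LP duality for the approximate spectral norm with the generalized discrepancy method (GDM) applied to $F := f\circ\xor$, and then bridge $wt_{2/3}(f)$ and $wt_{1/3}(f)$ via a standard constant-degree amplification.

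I would first dualize the linear program defining $wt_\epsilon(f) = \min\{\|\hat p\|_1 : \|p-f\|_\infty \le \epsilon\}$. Splitting each Fourier coefficient into its positive and negative parts and introducing multipliers for the two-sided constraint $|p(x)-f(x)|\le\epsilon$ yields, after standard manipulation,
\[
wt_\epsilon(f) \;=\; \max_{\psi:\{-1,1\}^n\to\R}\Bigl\{\langle f,\psi\rangle - \epsilon\|\psi\|_1 \;:\; \|\hat\psi\|_\infty \le 1\Bigr\},
\]
where $\hat\psi(S) = \sum_x \psi(x)\chi_S(x)$. Fix an optimal witness $\psi^\star$ for $\epsilon = 2/3$ and lift it to $\{-1,1\}^n\times\{-1,1\}^n$ by setting $\Psi(x,y) := \psi^\star(x\cdot y)/2^n$, $\mu := |\Psi|/\|\Psi\|_1$, and $g := \operatorname{sgn}(\Psi)$. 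A change of variable $z = xy$ gives $\|\Psi\|_1 = \|\psi^\star\|_1$ and $\langle F, g\rangle_\mu = \langle f, \psi^\star\rangle/\|\psi^\star\|_1$.

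The central step is bounding $\disc_\mu(g)$. Since $\chi_S(xy) = \chi_S(x)\chi_S(y)$, expanding $\psi^\star = 2^{-n}\sum_S \hat\psi^\star(S)\chi_S$ produces $\Psi(x,y) = 2^{-2n}\sum_S \hat\psi^\star(S)\chi_S(x)\chi_S(y)$, and Cauchy--Schwarz together with Parseval give, for every rectangle $A \times B$,
\[
\Bigl|\sum_{x\in A,\,y\in B}\Psi(x,y)\Bigr| \;\le\; \|\hat\psi^\star\|_\infty \cdot \frac{\sqrt{|A|\cdot|B|}}{2^n} \;\le\; 1,
\]
so $\disc_\mu(g) \le 1/\|\psi^\star\|_1$. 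Plugging the pair $(\mu, g)$ into GDM (Theorem~\ref{thm: gendisc}) at error $1/3$ then gives
\[
2^{R_{1/3}(F)} \;\ge\; \frac{\langle F, g\rangle_\mu - 2/3}{\disc_\mu(g)} \;\ge\; \langle f,\psi^\star\rangle - \tfrac{2}{3}\|\psi^\star\|_1 \;=\; wt_{2/3}(f).
\]

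To upgrade $wt_{2/3}(f)$ to $wt_{1/3}(f)$, I would use a constant-degree real polynomial $h$ satisfying $|h(t) - \operatorname{sgn}(t)| \le 1/3$ for every $|t| \in [1/3, 5/3]$ (obtainable by Chebyshev-style approximation of sign on an interval bounded away from $0$). Given any $2/3$-approximator $p$ of $f$, the composition $h \circ p$ is a $1/3$-approximator of $f$, and submultiplicativity $wt(pq) \le wt(p)\,wt(q)$ implies $wt_{1/3}(f) \le C\cdot wt_{2/3}(f)^d$ for absolute constants $C, d$. Combining with the GDM bound above yields $R_{1/3}(F) \ge c\log(wt_{1/3}(f) - 4)$ for a universal $c > 0$, the $-4$ absorbing the amplification constants in the boundary regime of small weight. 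The main obstacle is the discrepancy estimate: it is precisely there that the XOR structure, via the diagonalization $\chi_S(xy) = \chi_S(x)\chi_S(y)$, is indispensable; everything else (LP duality, change of variables, and constant-degree amplification) is routine.
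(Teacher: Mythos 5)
Your proposal is correct and follows the same overall strategy as the paper: dualize an LP for the approximate spectral norm to obtain a real-valued witness whose (unnormalized) Fourier coefficients are uniformly small relative to its $L_1$ mass, lift it to a sign-times-distribution pair on $\bra{-1,1}^n\times\bra{-1,1}^n$, bound the discrepancy of the sign part via the diagonalization $\chi_S(x\oplus y)=\chi_S(x)\chi_S(y)$ (which is exactly what Lemma~\ref{lem: xor_eigenvalues} packages for the paper), and close with the generalized discrepancy method. The one genuine divergence is in constant bookkeeping. The paper fixes the weight budget at $wt_{1/3}(f)$, obtains a dual witness with correlation $\geq 1/3$, runs GDM at the larger advantage $7/15$ so that $\delta-1+2\epsilon$ stays positive, and then boosts on the protocol side. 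You instead dualize $wt_{2/3}(f)$, run GDM at the smaller advantage, and then bridge $wt_{2/3}$ back to $wt_{1/3}$ by composing a $2/3$-approximator with a fixed low-degree polynomial. Both routes work, and your submultiplicativity estimate $wt_{1/3}(f)\le C\cdot\max\bra{1,wt_{2/3}(f)}^d$ is standard; but note that the detour is unnecessary: your own dual identity, evaluated at $\epsilon=1/3$ and fed to GDM at advantage $1/3$, gives
\[
2^{R_{1/3}(F)} \;\geq\; \frac{\delta - 1/3}{\disc_\mu(g)} \;\geq\; \langle f,\psi^\star\rangle - \tfrac{1}{3}\|\psi^\star\|_1 \;=\; wt_{1/3}(f)
\]
directly. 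One small notational caution: the paper defines $R_\epsilon$ via \emph{advantage} $\epsilon$ (so error $1/2-\epsilon$); your $\delta-2/3$ line corresponds to advantage $1/6$, not advantage $1/3$. This is harmless here since a lower bound on $R_{1/6}$ is a fortiori a lower bound on $R_{1/3}$ in the paper's convention, but you should make the convention explicit.
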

\begin{remark}
In fact, lower bounds on $wt_{1/3}(f)$ yield lower bounds on the bounded error quantum communication complexity of $f \circ \xor$).
\end{remark}
Although Theorem \ref{thm: bpp} was known from \cite{LS09}, to the best of our knowledge, ours is the first work to use this technique to prove lower bounds for explicit functions.

Using Part 1 of Theorem \ref{thm: liftsym} in conjunction with Theorem \ref{thm: bpp} provides an alternate proof of the following result of Zhang and Shi \cite{SZ09}.
\begin{theorem}[\cite{SZ09}]\label{thm: SZ}
Let $F : \bra{-1, 1}^n \rightarrow \bra{-1, 1}$ be any symmetric function.  Then, $R_{1/3}(F \circ \xor) = \Omega(r(F))$.
\end{theorem}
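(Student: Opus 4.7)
The plan is to simply chain Theorem~\ref{thm: liftsym} (Part~1) with Theorem~\ref{thm: bpp}, both of which have already been established in the excerpt. The heart of the argument is a one-line calculation, with the only real care needed in handling the small-$r(F)$ regime and the additive $-4$ appearing inside the logarithm of Theorem~\ref{thm: bpp}.

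First I would dispose of the case $r(F) < 5$. For these values of $r(F)$ the claimed bound $\Omega(r(F))$ is $O(1)$, so it suffices to verify that $R_{1/3}(F \circ \xor) = \Omega(1)$ whenever $F$ is non-constant, which is immediate (any non-constant function requires at least one bit of communication); if $F$ is constant then $r(F) = 0$ and there is nothing to prove. So the interesting regime is $r(F) \geq 5$.

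Next, in the main regime $r(F) \geq 5$, I would invoke Part~1 of Theorem~\ref{thm: liftsym} to obtain
\[
  wt_{1/3}(F) \;\geq\; 2^{c_1 \cdot r(F)}
\]
for a universal constant $c_1 > 0$. Plugging this directly into Theorem~\ref{thm: bpp} yields
\[
  R_{1/3}(F \circ \xor) \;\geq\; c \cdot \log\!\bigl(wt_{1/3}(F) - 4\bigr) \;\geq\; c \cdot \log\!\bigl(2^{c_1 \cdot r(F)} - 4\bigr).
\]
For $r(F)$ above a suitable absolute constant (say $r(F) \geq 5$ already suffices after possibly adjusting $c_1$), the quantity $2^{c_1 r(F)} - 4$ is at least $\tfrac{1}{2} \cdot 2^{c_1 r(F)}$, and then the right-hand side is at least $c \cdot (c_1 \cdot r(F) - 1) = \Omega(r(F))$.

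Finally I would combine the two regimes into a single statement by choosing the smaller of the two implicit constants. The only mild obstacle I anticipate is bookkeeping: ensuring that the constant threshold beyond which $2^{c_1 r(F)} - 4 \geq \tfrac{1}{2} \cdot 2^{c_1 r(F)}$ is compatible with the hypothesis $r(F) \geq 5$ of Theorem~\ref{thm: liftsym}, and folding the few remaining small-$r(F)$ values into the trivial non-constancy bound. None of this requires any new ideas beyond the two cited theorems, which is precisely why the authors advertise this as an alternate proof obtained by combining their lifting lemma with the generalized discrepancy based bound of Lee--Shraibman.
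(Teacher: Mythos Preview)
Your proposal is correct and matches the paper's own argument exactly: the paper simply states that Theorem~\ref{thm: SZ} follows by combining Part~1 of Theorem~\ref{thm: liftsym} with Theorem~\ref{thm: bpp}, which is precisely the chain you carry out (with the additional, appropriate bookkeeping for small $r(F)$ and the additive~$-4$).
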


Blais et al.~\cite{BBG14} also provided an alternate proof to Theorem \ref{thm: SZ} by showing a lower bound on the information complexity of symmetric $\xor$ functions (this however, does not imply quantum lower bounds).

\subsubsection{$\UPP$ complexity}   \label{sec:UPP}
We consider the $\UPP$ complexity $f \circ \xor$ when $f$ is symmetric and periodic. More precisely,

\begin{definition}[\textsf{MOD} functions and simple accepting sets]\label{defn: mod}
A function $f: \bra{0, 1}^n \rightarrow \bra{-1, 1}$ is called a \textsf{MOD} function if there exists a positive integer $m < n$ and an `accepting' set $A \subseteq [m]$ such that
\begin{align*}
f(x) = \begin{cases}
        -1 & \sum\limits_{i = 1}^nx_i \equiv k \text{ mod } m \text{ for some } k \in A\\
        1 & \text{otherwise}
       \end{cases}
\end{align*}
We write $f = \mod_m^A$.
We call an accepting set $A$ \emph{simple} if $\mod_m^A$ either represents the constant 0 function, constant 1 function, or the parity function or its negation.
We also call the corresponding predicate \emph{simple} in this case.
\end{definition}

We now state our main result regarding unbounded error communication below:

\begin{theorem}\label{thm: actual_main}
For any integer $m \geq 3$, express $m = j2^k$ uniquely, where $j$ is either odd or 4, and $k$ is a positive integer.
Then for any non-simple $A$,
\[
\UPP(\mod_{m}^A \circ \xor) \geq \Omega\left(\frac{n - km}{jm}\right) - \frac{2j\log j}{m}
\]
where $\UPP(f)$ denotes the unbounded error communication complexity of $f$.
\end{theorem}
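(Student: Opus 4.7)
The plan is to apply Forster's sign-rank lower bound (in its Forster-Krause-Lokam-Mehlhorn-Simon strengthened form) to the $2^n\times 2^n$ communication matrix $M_{xy} = \mod_m^A(x\oplus y)$, using $\UPP(f\circ\xor) = \Theta(\log \text{sign-rank}(M))$. Naively, $\text{sign-rank}(M) \geq 1/\|\hat f\|_\infty$, but $\hat f(\emptyset) = 1 - 2|A|/m$ and $\hat f([n])$ can each be close to $\pm 1$ when $A$ is near-constant or near-parity, so this is insufficient. By the translation invariance of XOR functions, any real witness with $B_{xy}M_{xy} \geq 1$ can be averaged to one of the form $B_{xy} = g(x\oplus y)$ with $gf \geq 1$ pointwise, for which $\|B\|_2 = 2^n\|\hat g\|_\infty$; thus $\text{sign-rank}(M) \geq \|\hat g\|_\infty^{-1}$, and it suffices to design $g$ with small $\|\hat g\|_\infty$.

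To kill the two problematic Fourier modes, I would take $g = (1+c)f - c_0 - c_1\chi_{[n]}$, choosing $c_0,c_1$ so that $\hat g(\emptyset) = \hat g([n]) = 0$ and $c \geq |c_0|+|c_1|$ so that $gf \geq 1$ everywhere. This is feasible iff $|\hat f(\emptyset)| + |\hat f([n])| < 1$. Using the root-of-unity representation $f(x) = 1 - (2/m)\sum_r c_r \omega_m^{r|x|}$ with $c_r = \sum_{a\in A}\omega_m^{-ra}$, and the $\pm 1$-Fourier expansion $\omega_m^{r|x|} = \sum_S \alpha_r^{n-|S|}\beta_r^{|S|}\chi_S$ for $\alpha_r = (1+\omega_m^r)/2$ and $\beta_r=(1-\omega_m^r)/2$, the only ``trivial'' characters (those with $\omega_m^r\in\{\pm 1\}$) are $r=0$ and $r=m/2$, contributing solely to $\hat f(\emptyset)$ and $\hat f([n])$. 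Non-simplicity of $A$ is precisely the condition ruling out $f$ being a $\pm 1$-combination of these two modes, and a quantitative form of it yields the gap $1-|\hat f(\emptyset)|-|\hat f([n])| = \Omega(j/m)$, giving $1+c = O(m/j)$ and $\|\hat g\|_\infty \leq O(m/j)\cdot\max_{S\neq\emptyset,[n]}|\hat f(S)|$. This step is the source of the additive $2j\log j/m$ slack in the final bound.

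It then remains to bound $\max_{S\neq\emptyset,[n]}|\hat f(S)|$. Since $|\alpha_r|^2+|\beta_r|^2 = 1$, for $s=|S|\in[1,n-1]$ and each $r$ with $\omega_m^r\notin\{\pm 1\}$, the expression $|\alpha_r|^{n-s}|\beta_r|^s$ is monotone in $s$ and attains its maximum over $[1,n-1]$ at an endpoint, equaling $u_r^{n-1}v_r$ where $\{u_r,v_r\}=\{|\alpha_r|,|\beta_r|\}$ with $u_r\geq v_r$. The dominant contribution comes from $r$ closest to $0$ or $m$, giving $u_1^{n-1}v_1 \lesssim (\pi/m)\exp(-n\pi^2/(2m^2))$, with geometric decay as $r$ moves away; the 2-adic refinement enters via a case analysis on which $r^*\neq 0,m/2$ with $c_{r^*}\neq 0$ must exist under non-simplicity given the factorization $m=j2^k$, which determines the order of the relevant root of unity and sharpens the bound to $\max_{S\neq\emptyset,[n]}|\hat f(S)| \leq \exp(-\Omega((n-km)/(jm)))$. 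Combining with Forster then yields the theorem. The main obstacle is this final case analysis: one must show that the 2-adic factorization $m=j2^k$ correctly controls the minimum order of a root of unity $\omega_m^{r^*}$ carrying a non-zero Fourier coefficient under non-simplicity, and translate this into the claimed exponent $(n-km)/(jm)$ in the Fourier bound.
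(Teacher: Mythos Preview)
Your approach via Forster's bound and the subtraction $g = (1+c)f - c_0 - c_1\chi_{[n]}$ is essentially the paper's Theorem~\ref{thm: sufficient} specialized to $\mathcal{S}=\{\emptyset,[n]\}$, and it does correctly handle the odd-$m$ base case (this is exactly the paper's Theorem~\ref{thm: odd_sign_rank}). However, your central quantitative claim, that non-simplicity forces
\[
1 - |\hat f(\emptyset)| - |\hat f([n])| \;=\; \Omega(j/m),
\]
is false. Take $m=4$, $A=\{0\}$ (so $j=4$, $k=0$); this is non-simple, yet a direct computation (or the decomposition $f = \tfrac12 - \tfrac12\chi_{[n]} - \cos(\pi|x|/2)$, noting that $\cos(\pi|x|/2)$ contributes equal-magnitude, opposite-sign corrections to the $\emptyset$ and $[n]$ coefficients) gives $|\hat f(\emptyset)|+|\hat f([n])| = 1$ exactly. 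The same holds for $m=2^\ell$, $A=\{0,4,8,\dots\}$, which is $\mod_4^{\{0\}}$ in disguise and again non-simple. So your feasibility condition fails, $c$ blows up, and the argument dies. The paper flags precisely this obstruction in the paragraph following Theorem~\ref{thm: sufficient}.

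The paper's proof gets around this not by a sharper Fourier case analysis but by an entirely different mechanism: an inductive reduction on $m$ using the subadditivity $U(f\oplus g)\le U(f)+U(g)$ (Lemma~\ref{lemma: xorcomm}) together with the shift identity $\mod_m^A\oplus\mod_m^{A+i}=\mod_{m'}^{A'}$ for suitable $m'<m$ (Lemma~\ref{lemma: shifting_parameters}, Claim~\ref{claim: main}). Each step halves $m$ at the cost of halving the lower bound and shrinking $n$ by $m$; after $k$ steps one lands at modulus $j$ (odd or $4$), where the direct Fourier argument (your argument, and the paper's Theorem~\ref{thm: odd_sign_rank} and Claim~\ref{claim: mod4}) finally applies. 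The $km$ subtraction and the $2^k=m/j$ denominator in the bound are artifacts of this recursion, not of the Fourier spectrum of $\mod_m^A$ itself, which is why your attempt to read off the exponent $(n-km)/(jm)$ directly from the roots of unity cannot succeed. What you identify as the ``main obstacle'' (locating an $r^*$ with $c_{r^*}\neq 0$) is not the real issue; the real issue is that for the bad cases no choice of $\mathcal{S}\subseteq\{\emptyset,[n]\}$ gives any gap at all, and one must change the function (via XOR with a shift) rather than merely subtract characters from it.
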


\begin{remark}
A very recent result of Hatami and Qian \cite{HQ17} subsumes Theorem \ref{thm: actual_main}.  However, their result is based on a simple reduction to symmetric \textsf{AND} functions, whose unbounded error complexity has been tightly
characterized by Sherstov \cite{Sherstov11a} using sophisticated tools from approximation theory.  Our result, on the other hand, is based on first principles using Fourier analysis of boolean functions.
\end{remark}

The above implies that the \textsf{XOR} function corresponding to a symmetric and periodic $f$ with period $O(n^{1/2 - \epsilon})$, for some constant $\epsilon > 0$, has unbounded-error complexity
$n^{\Omega(1)}$ as long as $f$ is neither constant nor Parity nor its complement.

A well known consequence of proving unbounded error lower bounds against $f$ is a lower bound for the size of depth-2 circuits of the form $\THR \circ \mathsf{L_{comm}}$ computing $f$ where $\mathsf{L_{comm}}$
denotes the class of functions with low deterministic communication complexity.
As a result, Theorem \ref{thm: actual_main} implies that in particular, $\mod_m^A \circ \xor$ is not in polynomial sized $\THR \circ \mathsf{SYM}$ circuits, which we formally state in Theorem \ref{thm: ckt_lb}.
This generalizes a result of Zhang \cite{Zhang91} and Krause and Pudl\'{a}k \cite{KP97} who showed, among other things, that $\mod_p^{\bra{0}}$ cannot be computed by polynomial sized Threshold of Parity circuits.

\begin{figure}
\begin{center}
\tikzstyle{vis}=[draw=black,ellipse,]
\begin{tikzpicture}[transform shape, scale=0.9]
\node[vis] (fop) at (3,0) {$f^{op}$};
\node[vis] (f) at (-2,0) {$f$}
edge[->] node[below, midway,align=center,text width=3cm] {\small degree-monomial amplification}(fop);
\node[vis] (F) at (3,2) {$F$}
edge[->] node[midway, right] {\small projection} node[midway,left,align=right] {Lemma~\ref{lem: thr_lift}\\Lemma~\ref{lemma: symm_lift}} (fop);
\node[vis] (Fxor) at (9,2) {$F \circ \mathsf{XOR}$}
edge[<-] node[midway,above] {\small lift} (F);
\node at (9,2.7) {communication game};

\node (BPP) at (9,0.5) {$\BPP$ complexity};
\node (PP) at (9,-1) {$\PP$ complexity};

\node[align=right,anchor=east] (apdegree) at (-1,-1.5) {approximate degree};
\node[align=right,anchor=east,text width=5.8cm] (lerror) at (-1,-3) {large-error high-degree approximation};
\node[align=right,anchor=east] (signdeg) at (-1,-4.5) {sign degree};

\node[align=left,anchor=west] at (2,-1.5) {approximate weight}
edge[<-] (apdegree)
edge[->,dashed] node[midway,above,sloped] {Theorem~\ref{thm: bpp}} (BPP);
\node[align=left,anchor=west] at (2,-3) {polynomial margin}
edge[<-] (lerror)
edge[->,dashed] node[midway,above,sloped] {Theorem~\ref{thm: equiv}} (PP);

\node[align=left,anchor=west] at (2,-4.5) {signed monomial complexity}
edge[<-] node[midway, above] {\cite{KP97}} (signdeg);

\draw [decorate,decoration={brace,amplitude=5pt,mirror}] (-4.6,-1.3) -- node[midway, left=0.3cm] {Lemma~\ref{lem: lift}} (-4.6,-4.65);
\end{tikzpicture}
\caption{General framework}
\label{fig: fig}
\end{center}
 
\end{figure}
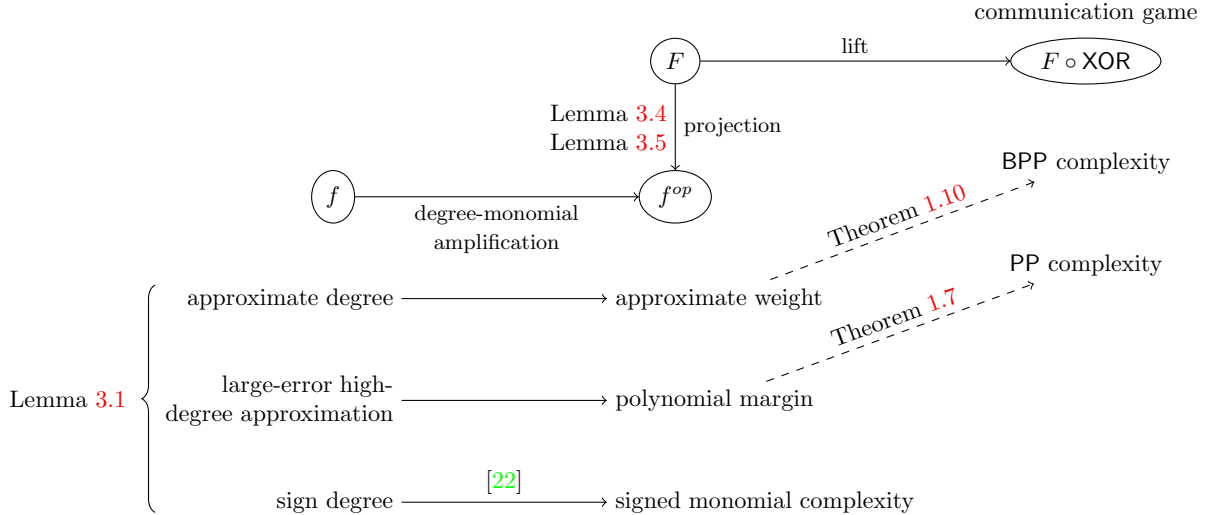

\subsection{Proof outline}

Our proof strategy is depicted in Figure \ref{fig: fig}. First, we use a nice idea due to Krause and Pudl{\'a}k \cite{KP97}, who showed that if a function $f$ has high sign degree, then a certain \emph{lift} of that function, 
denoted by $f^{op}$ has high sign monomial complexity. We observe that their argument can be easily adapted to show a more general result. In particular, our Lemma \ref{lem: lift} shows that the hardness of $f$ for \emph{low degree} polynomials,
with respect to natural notions like uniform approximation and sign representation, gets amplified to corresponding hardness of $f^{op}$ for \emph{sparse (low weight)} polynomials. 
Next, we observe that LP duality implies, via Theorems \ref{thm: equiv} and \ref{thm: bpp}, that such hardness of a function $F$ against sparse polynomials translates to the hardness of $F \circ \xor$ 
for the appropriate randomized ($\BPP, \PP$) communication model. The main problem at this point is to understand how $F$ relates to an appropriately hard $f^{op}$. In particular, our interest is when $F$ is a symmetric function or a linear halfspace.
These functions do not seem to have the structure of a lifted function $f^{op}$. 

At this point, inspired by the work of Krause \cite{Krause06}, we make a simple but somewhat counter-intuitve observation that turns out to be crucial. A function $g$ is called a monomial projection of $h$, 
if $g$ can be obtained by substituting each input variable of $h$ with a monomial in variables of $g$. What is nice about such projections is that for the polynomial sparsity measures (Lemma~\ref{lem: monproj}) that are relevant for us,
the complexity of $g$ is upper bounded by that of $h$.  We observe (Lemma~\ref{lemma: symm_lift} and Lemma~\ref{lem: thr_lift}) that if $f$ is a symmetric (linear threshold) function, then there exists a symmetric (linear threshold) function $F$
such that $f^{op}$ is a monomial projection of $F$. Moreover, the combinatorial parameters of $f$ that caused its hardness against low-degree polynomials, nicely translate to combinatorial parameters of $F$
that have been conjectured to cause hardness of $F$ against sparse (low weight) polynomials. By our LP duality theorems, these result in the hardness of $F \circ \xor$ against randomized communication protocols as well. 

The above describes the general framework of our passage from polynomials to communication protocols. We describe below the particular instantiations of this framework for each of the lower bounds that we prove.

\subsubsection{$\PP$ complexity}

We prove two main results regarding $\PP$ complexity by upper bounding margin complexity. The first is to reprove an exponential separation of $\PP$ protocols from those of $\UPP$, making use of the above framework. 
For this, it is natural to prove a strong $\PP$ lower bound against a function of the type $F \circ \xor$ where $F$ is a linear threshold function.
Proving a polylogarithmic $\UPP$ upper bound for such a function is straightforward. However, precisely this feature of $F$ makes it difficult to prove a strong $\PP$ lower bound.
Goldmann et al.~\cite{GHR92} used an ingenious specialized argument directly establishing that the discrepancy is small. We, on the other hand, use Theorem \ref{thm: equiv} which directs us in proving that $F$ must have small margin complexity.
The challenge here is to prove a strong unrestricted degree margin complexity lower bound against a function with sign degree just 1.
We use a variety of techniques to prove this. First, we use a result of Sherstov, Theorem \ref{thm: sherstov}, which states that there exists a linear threshold function $f$ which requires linear degree to approximate uniformly,
even with error inverse exponentially close to 1. Second, we use lifting as depicted in Figure \ref{fig: fig} to show that $f^{op}$ has a small upper bound on the (unrestricted degree) margin complexity.
We then use our monomial projection lemma for threshold functions, Lemma~\ref{lem: thr_lift}, to embed such a lifted function in a linear threshold function $F$ without blowing up the weights too much.  
Finally, we exploit the fact that the Universal Threshold function ($\UTHR$) embeds any other threshold function with at most a quadratic loss in number of variables. 
The last step of considering the $\UTHR$ is needed only to provide an \emph{explicit} exponential separation of $\PP$ and $\UPP$.

As a second application of our framework to $\PP$ complexity, we prove Theorem \ref{thm: sym_xor}, which states that the $\PP$ complexity of $F \circ \xor$ is essentially the odd-even degree of $F$ when it is symmetric.
The main challenge here is to work with the notion of odd-even degree, which has no immediate algebraic interpretation as opposed to sign degree. 
Lemma~\ref{lemma: symm_lift} solves this by essentially showing that there exists a symmetric $f$ whose sign degree corresponds to the odd-even degree of $F$, such that $f^{op}$ is a monomial projection of $F$.
Finally, our polynomial hardness amplification Lemma~\ref{lem: lift} shows that the margin of $f^{op}$ must be small if the base function $f$ has large sign degree.

\subsubsection{$\BPP$ complexity and approximate weight}
We first make a simple observation that the polynomial margin of a function $F$ equals its \emph{threshold weight}, as defined in Definition \ref{defn: thrwt}.
Just as the notion of threshold degree inspires the natural notion of approximate degree, threshold weight inspires the definition of approximate weight as in Definition \ref{defn: appwt}.
In Section \ref{sec: bdd}, we consider a linear program capturing the (1/3)-approximate weight of a symmetric function $F : \bra{-1, 1}^n \rightarrow \bra{-1, 1}$.
Using linear programming duality and the generalized discrepancy method, we show in Theorem \ref{thm: bpp} that $\log wt_{1/3}(F)$ is a lower bound on the bounded error communication complexity of $F \circ \xor$.

The general framework of Figure \ref{fig: fig} then prescribes us to find a suitable symmetric $f$ such that $f^{op}$ has large approximate weight and is a monomial projection of $F$.
Lemma~\ref{lemma: symm_lift} provides such a monomial projection in which the combinatorial quantity $r(F)$ corresponds to another combinatorial quantity $\Gamma(f)$, which is defined in Section~\ref{sec: prelims}.
Paturi's Theorem \cite{Paturi92} shows that $\Gamma(f)$ characterizes the approximate degree of $f$. The polynomial hardness amplification of Figure \ref{fig: fig}, via Lemma~\ref{lem: lift}, implies that $f^{op}$, and therefore $F$, 
has large approximate weight. This already proves Theorem~\ref{thm: afhsolve} which was conjectured by Ada et al.~\cite{AFH12}. Moreover, Theorem \ref{thm: bpp} implies the hardness of $F \circ \xor$ against bounded error protocols.

\subsubsection{$\UPP$ complexity}
We remark here that, although a very recent independent result of Hatami and Qian \cite{HQ17} subsumes our results on $\UPP$ complexity of symmetric $\xor$ functions,
our methods vary vastly from theirs.  We prove our lower bounds from first principles, and do not make a reduction to Sherstov's result \cite{Sherstov11a} on symmetric \textsf{AND} functions.
Interestingly, our $\UPP$ lower bounds are not obtained via linear programming duality, as opposed to our $\PP$ and $\BPP$ lower bounds.

The starting point of our work in proving $\UPP$ lower bounds is a modification of Forster's theorem \cite{Forster01} by Forster et al. \cite{FKLMSS01} who relate the sign-rank of a function
$f: \bra{0, 1}^n \times \bra{0, 1}^n \rightarrow \mathbb{R}$ in terms of the minimum value taken by $f$ and the spectral norm of the communication matrix of $f$.
Informally, the unbounded error complexity of $f$ is large if the minimum value taken by it is not too small, and the spectral norm is small.
Refer to Theorem \ref{thm: forster} for details.  We then note in Lemma \ref{lem: xor_eigenvalues} that the spectral norm of $f \circ \xor$ is just a scaling of the maximum Fourier coefficient of $f$.
Observe that $\mod_3^{\bra{0}}$ has a large principal Fourier coefficient even though the other coefficients are inverse exponentially small.  Thus, one cannot use Theorem \ref{thm: forster} directly.
Next, we prove in Theorem \ref{thm: sufficient} that if the $L_1$ mass of a subset of the Fourier coefficients of $f$ is sufficiently bounded away from 1, and the remaining coefficients are sufficiently small, 
we can still obtain a strong unbounded error lower bound for $f \circ \xor$. We then analyze the Fourier coefficients of $\mod$ functions, to show that they satisfy the above properties, 
and this helps us prove lower bounds for $\mod_m^A \circ \xor$ for odd integers $m$ with values upto $O(n^{1/2 - \epsilon})$ as long as $\mod_m^A$ does not represent a constant or parity function.
This still does not prove hardness for all $\mod$ functions with period at most $O(n^{1/2 - \epsilon})$ since it can be proved, for example, 
$\abs{\wh{\mod_4^{\bra{0}}}(\emptyset)} + \abs{\wh{\mod_4^{\bra{0}}}([n])} = 1$, thus not allowing us to use Theorem \ref{thm: sufficient}.
To handle this case, we make two crucial observations.  One is that setting a few variables (which we can view as shifting the accepting set by a small amount) does not change the 
unbounded error communication complexity of $\mod_m^A \circ \xor$ by much. The second is the fact that the unbounded error complexity of $f \oplus g$ is at most the unbounded error complexity of $f$ plus that of $g$.
Armed with these facts, we use a shifting and XORing trick that enables us to reduce the modulus of the target $\mod_m^A$ function to either 4 or a prime without using too large or too many shifts, or too many XORs.
We then use induction on $m$ to finish the proof of our main theorem regarding unbounded error communication (Theorem \ref{thm: actual_main}).

\section{Preliminaries}\label{sec: prelims}

We provide the necessary preliminaries in this section.

Note that in the following definitions, we interchangeably use the view of the input variables being $\bra{-1, 1}$ valued, and $\bra{0, 1}$ valued.
For most of our results regarding the discrepancy of $\xor$ functions, we view the input variables as $\bra{-1, 1}$ valued, whereas we view the inputs as $\bra{0, 1}$ valued
while dealing with the unbounded error model.  In general, $0$ corresponds to $1$, and $1$ corresponds to $-1$ in the two views.
We denote the \emph{Hamming weight} of a string $x \in \bra{0, 1}^n$ ($\bra{-1, 1}^n$) to be the number of variables set to $1$ ($-1$) in $x$.

\subsection{Types of functions}

A function $f : \bra{-1, 1}^n \rightarrow \bra{-1, 1}$ is called symmetric if $f(x_1, \dots, x_n) = f(x_{\sigma(1)}, \dots, x_{\sigma(n)})$ for all $\sigma \in S_n$ where $S_n$ denotes the set of all permutations on $n$ elements.
The value taken by a symmetric function on an input only depends on the Hamming weight of the input.
For a symmetric function $f: \bra{-1, 1}^n \rightarrow \bra{-1, 1}$, define its \textit{spectrum} or \textit{predicate} $D_f: \bra{0, 1, \dots, n} \rightarrow \bra{-1, 1}$ by $D_f(i) = f(x)$
where $x \in \bra{-1, 1}^n$ is such that there are $i$ many variables in $x$ taking the value $-1$.
Note that the spectrum (predicate) of a symmetric function is well defined.
Define the \emph{odd-even} degree of a symmetric function $f$, which we denote by $\deg_{oe}(f)$, to be $\abs{i \in \bra{0, 1, \dots, n - 2} : D_f(i) \neq D_f(i + 2)}$.

\begin{definition}[\textsf{XOR} functions]
A function $F: \bra{0, 1}^n \times \bra{0, 1}^n \rightarrow \bra{-1, 1}$ is said to be an $\xor$ function if there exists a function 
$f: \bra{0, 1}^n \rightarrow \bra{0, 1}$ such that $F(x_1, \dots, x_n, y_1, \dots, y_n) = f(x_1 \oplus y_1, \dots, x_n \oplus y_n)$  for all $x_1, \dots, x_n, y_1, \dots y_n \in \bra{0, 1}$.  We use the notation $F = f \circ \xor$.
\end{definition}

\begin{definition}[Threshold functions]
Define a function $f: \bra{0, 1}^n \rightarrow \bra{-1, 1}$ to be a linear threshold function if there exist integer weights $a_1, \dots, a_n$ such that for all inputs $x = (x_1, \dots, x_n) \in \bra{0, 1}^n$,
$f(x) = sgn\left(\sum_{i = 1}^na_ix_i\right)$.  Let $\THR$ denote the class of all such functions.
Let $\MAJ$ denote the class of linear threshold functions whose weights are polynomially bounded in $n$.
\end{definition}

\begin{definition}[Universal threshold]\label{defn: univ}
Define a class of threshold functions, $U_{l, k} : \bra{\bra{-1, 1}^{k}}^l \rightarrow \bra{0, 1}$ defined by
\[
U_{l, k}(x_{1, 1}, \dots, x_{1, k}, \dots, x_{l, 1}, \dots, x_{l, k}) = sgn\left(\sum\limits_{i = 1}^k \sum\limits_{j = 1}^l 2^ix_{i, j} + \frac{1}{2}\right)
\]
\end{definition}
The constant term $\frac{1}{2}$ is added to ensure that the sum inside the brackets is never 0.
\begin{fact}[Minsky and Papert \cite{MP69}]\label{fact: quad_blowup}
$U_{l, k}$ is universal in the sense that any linear threshold function on $n$ variables occurs as a subfunction of $U_{l, k}$ for some $l, k \in O(n\log n)$.
\end{fact}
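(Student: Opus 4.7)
The plan is to realize an arbitrary linear threshold function
\[
f(z_1,\dots,z_n) \;=\; \mathrm{sgn}\!\left(\sum_{j=1}^n a_j z_j - t\right), \qquad z_j \in \bra{-1,1},
\]
as a restriction of $U_{l,k}$ in which each coordinate $x_{i,j}$ is either identified with a (possibly negated) input literal or fixed to a $\pm 1$ constant. By the classical Muroga weight-size bound for linear threshold functions, we may assume such a representation exists with integer weights and threshold of magnitude at most $W = 2^{O(n \log n)}$. Flipping the sign of $z_j$ whenever $a_j < 0$ lets us assume $a_j \geq 0$ for every $j$; doubling every weight and the threshold makes each $a_j$ and $t$ even; and an infinitesimal perturbation of $t$ ensures $\sum_j a_j z_j \neq t$ on every $z \in \bra{-1,1}^n$.

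Next, write each weight in binary, $a_j = \sum_{i=1}^k b_{i,j}\,2^i$ with $b_{i,j}\in\{0,1\}$ and $k = O(\log W) = O(n \log n)$. For every $(i,j)$ with $b_{i,j}=1$, allocate one fresh slot in row $i$ of $U_{l,k}$ and set it equal to $z_j$; the variable-slots then contribute exactly $\sum_i 2^i \sum_{j:\,b_{i,j}=1} z_j = \sum_j a_j z_j$, consuming at most $n$ slots per row. Expand $-t$ in signed binary, $-t = \sum_{i=1}^k c_i\,2^i$ with digits $|c_i| = O(1)$; in row $i$, fill $|c_i|$ of the remaining slots with the constant $\mathrm{sgn}(c_i) \in \bra{-1,+1}$, producing the contribution $c_i\,2^i$. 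Finally pad any still-empty slots in each row with canceling $(+1,-1)$ pairs, inflating $c_i$ by $\pm 2$ and propagating a carry when a row has an odd number of unused slots, so that every row ends with exactly $l = n + O(1)$ slots.

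Under this restriction the argument of the outer $\mathrm{sgn}$ in $U_{l,k}$ evaluates to $\sum_j a_j z_j - t + \tfrac12$, which agrees in sign with $\sum_j a_j z_j - t$ since the latter is a nonzero integer by our perturbation; hence $f$ is a subfunction of $U_{l,k}$ with $l = O(n)$ and $k = O(n\log n)$, both within $O(n\log n)$. The main obstacle is the parity bookkeeping: the index $i$ starts at $1$ in the definition of $U_{l,k}$, so only even integer totals are realizable by the outer sum, and for each row the parity of the achievable constant sum is rigidly pinned down by the number of free slots. The initial doubling of weights resolves the first constraint, while allowing digits with $|c_i|>1$ (absorbed via carries in a redundant signed-binary expansion) reconciles the second; with these two adjustments in place the construction is a straightforward binary expansion.
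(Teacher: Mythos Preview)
The paper does not supply a proof of this fact; it is simply stated with a citation to Minsky and Papert \cite{MP69} and used later as a black box when embedding $f'$ into $\UTHR$. Your sketch is the standard argument: invoke Muroga's weight bound to get integer weights of magnitude $2^{O(n\log n)}$, expand each weight in binary to distribute copies of each input variable across the rows of $U_{l,k}$, and absorb the threshold and row-padding via constant $\pm 1$ entries. This is correct and is precisely the construction underlying the cited fact; the parity and carry bookkeeping you flag is real but routine, and your resolution (doubling weights and using a redundant signed-binary expansion for $-t$) handles it cleanly.
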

We use the notation $\UTHR$ to denote such a function.

\subsection{Fourier analysis}
Consider the vector space of functions from $\bra{0, 1}^n$ to $\mathbb{R}$, equipped with the following inner product.
\[
\langle f, g \rangle = \E_{x \in \bra{0, 1}^n}f(x)g(x) = \frac{1}{2^n}\sum\limits_{x \in \bra{0, 1}^n}f(x)g(x)
\]
Define characters $\chi_S$ for every $S \subseteq [n]$ by $\chi_S(x) = (-1)^{\sum_{i \in S}x_i}$.
The set $\bra{\chi_S : S \subseteq [n]}$ forms an orthonormal basis for this vector space.
Thus, every $f: \bra{0, 1}^n \rightarrow \mathbb{R}$ can be uniquely written as $f= \sum\limits_{S \subseteq [n]}\wh{f}(S)\chi_S$ where
\begin{equation}\label{eqn: fourier_coefficients}
\wh{f}(S) = \langle f, \chi_S\rangle = \E_{x \in \bra{0, 1}^n}f(x)\chi_S(x)
\end{equation}

\subsection{Polynomials}
For a polynomial of weight 1, say $p$, which sign represents a function $f$, we say that $p$ represents $f$ with a margin of value $\min_{x \in \bra{-1, 1}^n}f(x)p(x)$.
Let us also define a notion of the error in a pointwise approximation of a function by low degree polynomials.  This notion is studied widely in classical approximation theory, see \cite{ShiZ09, She11, Thaler16} for example.
Note that we do not restrict the weight of the approximating polynomial in this case.
\begin{equation}
\varepsilon_d(f) \triangleq \min_{p : deg(p) \leq d}\left(\max_{x \in \bra{-1, 1}^n}{\abs{p(x) - f(x)}}\right)
\end{equation}

Sherstov \cite{She16} proved that there exists a linear threshold function which cannot be approximated well, even by large degree polynomials.
\begin{theorem}[\cite{She16}, Cor 3.3]\label{thm: sherstov}
There exists a linear threshold function $f : \bra{-1, 1}^n \rightarrow \bra{-1, 1}$ and an absolute constant $c > 0$ such that 
\[
\varepsilon_{cn}(f) > 1 - 2^{-cn}
\]
Moreover, the weights of the coefficients in the function have magnitude at most $2^n$.
\end{theorem}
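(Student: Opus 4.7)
The plan is to reduce the claim to a dual polynomial construction via LP duality, and then to build an appropriate halfspace and dual witness in tandem.

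By standard LP duality for polynomial approximation,
\[
\varepsilon_d(f) \;=\; \max_{\psi}\,\bra{\langle \psi, f \rangle \,:\, \|\psi\|_1 \leq 1,\;\langle \psi, \chi_S\rangle = 0 \text{ for all } |S|\leq d},
\]
so the claimed bound is equivalent to exhibiting a \emph{dual witness} $\psi:\bra{-1,1}^n\rightarrow\R$ of unit $\ell_1$-norm, orthogonal to every monomial of degree at most $cn$, that correlates with some halfspace $f$ of weights at most $2^n$ to within $2^{-cn}$ of $1$. The problem therefore becomes a joint construction of $f$ and $\psi$.

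I would start from a ``hard core'' function $g:\bra{-1,1}^m\rightarrow\bra{-1,1}$ of threshold degree $\Omega(m)$ for $m = \Theta(n)$; the Minsky-Papert function $\mathsf{AND}_t\circ \mathsf{OR}_s$ or a Razborov-Sherstov iterated construction are natural choices. LP duality for threshold degree then furnishes a dual polynomial $\mu$ with $\|\mu\|_1 = 1$, $\mu \perp$ polynomials of degree $\leq \Omega(m)$, and $\mathrm{sgn}(\mu) \equiv g$, yielding $\langle \mu, g\rangle = 1$. Since $g = \mathrm{sgn}(q)$ for a polynomial $q$ of degree $O(m)$ with integer coefficients of magnitude $2^{O(m)}$, $g$ can be realized as a linear threshold function on the enlarged variable set $y_S = \prod_{i\in S}x_i$ for $|S|\leq O(m)$. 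Combined with Fact~\ref{fact: quad_blowup}, this embeds into the universal halfspace $\UTHR$ on $n$ variables with weights bounded by $2^n$; I would take $f$ to be the resulting halfspace.

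The final stage is to push the witness $\mu$ through both the monomial substitution and the universal-halfspace embedding to obtain a dual witness $\psi$ in the new $n$-variable system. Following the ``dual block method'' of Sherstov, one averages $\mu$ over fibres of the lift against a carefully chosen product measure, which concentrates the Fourier spectrum of $\psi$ on high levels. If arranged correctly, $\psi$ remains orthogonal to every monomial of degree at most $cn$ in the new variables, and $\langle\psi,f\rangle$ deviates from $\langle\mu,g\rangle = 1$ by only $2^{-cn}$.

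The main obstacle is precisely this last stage: simultaneously guaranteeing (i) that the lifted $\psi$ retains orthogonality to every monomial of degree $\leq cn$ in the \emph{new} variable set despite the monomial expansion, and (ii) that the correlation with $f$ stays within $2^{-cn}$ of $1$ despite the universal-halfspace embedding. The crucial point is that the weight bound $2^n$ absorbs the exponential blow-up from expressing the degree-$O(m)$ polynomial $q$ as linear threshold weights, so neither the orthogonality degree nor the correlation pays that cost. Carefully tracking these parameters through the two lifts should yield the universal constant $c>0$.
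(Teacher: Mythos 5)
This statement is cited from Sherstov (\cite{She16}, Cor.~3.3); the paper does not prove it, so there is no in-paper argument to compare against. Evaluating your proposal on its own merits, the central step is fatally broken, and you have in fact identified the location of the break yourself without resolving it.

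The problem is the monomial substitution $y_S = \prod_{i\in S}x_i$. You want a dual witness $\psi$ on the new $N = 2^{\Theta(m)}$-variable cube that is orthogonal to all monomials of degree up to $cN$. But consider what orthogonality to even \emph{very} low-degree $y$-monomials means after pulling back through the substitution. A degree-$1$ monomial $y_S$ in the new variables corresponds to $\chi_S$ in the old variables, and since $|S|$ ranges up to $O(m)$, these already sweep out $x$-monomials of degree exceeding the threshold-degree of $g$. A degree-$2$ monomial $y_{S_1}y_{S_2}$ pulls back to $\chi_{S_1\triangle S_2}$, which can be \emph{any} $\chi_T$ with $T\subseteq[m]$. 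So a dual witness $\psi$ supported on the image of the lift and satisfying $\langle\psi,\chi_T\rangle=0$ for $|T|\leq 2$ already forces $\mu$ to be orthogonal to every $x$-character, i.e.~$\mu\equiv 0$. The orthogonality degree does not merely degrade; it collapses to $O(1)$ no matter how strong the base hardness of $g$ is. Your appeal to the dual block method does not rescue this: that technique is designed for block composition $f\circ g$ where each block is an independent gadget with its own fresh input, and the crucial "zero low-level marginals" cancellation it exploits has no analogue when the new variables are deterministic monomials of a single shared input, so there are no fibres to average over.

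There is also a mismatch you should notice between what you prove about $g$ and what you need about $f$. Your dual witness $\mu$ certifies threshold-degree $\Omega(m)$ for $g$, and indeed this gives $\varepsilon_{\Omega(m)}(g)\geq 1$. But $m = \Theta(\log n)$ after the lift, so even if the lift preserved everything perfectly, you would only be inheriting orthogonality up to degree $O(\log n)$, nowhere near the $\Omega(n)$ required. Sherstov's actual proof does not go through any lift of a small constant-degree-hard function; it constructs the halfspace and its high-degree dual object in tandem, directly on $n$ variables, using a quantitative argument specific to the particular weight pattern of the halfspace. If you want to reconstruct a proof, the right starting point is to fix an explicit halfspace of exponential integer weights (an OMB- or GHR-type function) and exhibit the dual witness for \emph{that} function at linear degree directly, rather than trying to inherit hardness from a low-dimensional composed function through a monomial embedding.
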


\begin{definition}[Approximate degree]\label{def: appdeg}
For any function $f : \bra{-1, 1}^n \rightarrow \bra{-1, 1}$ and polynomial $p : \bra{-1, 1}^n \rightarrow \R$, we say that $p$ approximates $f$ to error $\epsilon$ if
for all $x \in \bra{-1, 1}^n, ~ \abs{p(x) - f(x)} \leq \epsilon$.  The $\epsilon$-approximate degree of $f$, denoted $\widetilde{\deg}_{\epsilon}(f)$ is the minimum degree of a polynomial $p$ which approximates $f$ to error $\epsilon$.
\end{definition}

\begin{definition}[Signed monomial complexity]\label{defn: smc}
The \emph{signed monomial complexity} of a function $f : \bra{-1, 1}^n \rightarrow \bra{-1, 1}$, denoted by $\mathrm{mon}_{\pm}(f)$ is the minimum number of monomials required by a polynomial $p$ to sign represent $f$ on all inputs.
\end{definition}
Note that the signed monomial complexity of a function $f$ exactly corresponds to the minimum size Threshold of Parity circuit computing it.

\begin{theorem}[\cite{Zhang91}]\label{thm: zhang_ub}
Let $f : \bra{-1, 1}^n \rightarrow \bra{-1, 1}$ be a symmetric boolean function such that $\deg_{oe}(f) = \log^{O(1)}n$.  Then, $f$ can be computed by a quasi-polynomial size Threshold of Parity circuit.
\end{theorem}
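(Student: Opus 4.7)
The plan is to directly upper-bound the signed monomial complexity $\mathrm{mon}_{\pm}(f)$ by $n^{O(k)}$ where $k := \deg_{oe}(f)$; by Definition~\ref{defn: smc} and the remark following it, this immediately yields a Threshold-of-Parity circuit of quasi-polynomial size whenever $k = \log^{O(1)} n$. The key combinatorial observation is that the parity of $|x|$ splits the predicate $D_f$ into two sub-sequences $D_f^+(s) := D_f(2s)$ and $D_f^-(s) := D_f(2s+1)$ whose \emph{total} number of adjacent sign changes is exactly $k$, by the very definition of $\deg_{oe}$.

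First, I would invoke the classical ``one linear factor per sign change'' construction to build univariate polynomials $P^+, P^-$ of degrees $k_+, k_-$ with $k_+ + k_- = k$ such that $\mathrm{sgn}(P^\pm(s)) = D_f^\pm(s)$ for every integer $s$ in the relevant range. Setting $U(t) := P^+(t/2)$ and $V(t) := P^-((t-1)/2)$ (polynomials in $t$ of degree $\le k$), I would then define
\[
Q(t) \;:=\; \tfrac{1}{2}\bigl(U(t)+V(t)\bigr) \;+\; (-1)^t \cdot \tfrac{1}{2}\bigl(U(t)-V(t)\bigr) \;=\; A(t) + (-1)^t B(t),
\]
with $\deg A, \deg B \le k$. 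A case check gives $Q(2s) = U(2s) = P^+(s)$ and $Q(2s+1) = V(2s+1) = P^-(s)$, so $\mathrm{sgn}(Q(t)) = D_f(t)$ for every $t \in \{0, 1, \ldots, n\}$.

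The second step is to lift $Q$ to the multilinear polynomial $p(x) := A(|x|) + \chi_{[n]}(x) \cdot B(|x|)$, which sign-represents $f$, and to count monomials in the $\chi_S$-basis. Since $\sum_{|S|=j}\chi_S(x)$ depends only on $|x|$, any univariate polynomial in $|x|$ of degree $\le k$ lies in the span of $\{\sum_{|S|=j}\chi_S : 0 \le j \le k\}$, contributing at most $\sum_{j=0}^k \binom{n}{j}$ monomials. Moreover, the identity $\chi_{[n] \setminus T} = \chi_{[n]}\chi_T$ gives
\[
\chi_{[n]}(x) \cdot \!\!\sum_{|S|=j}\chi_S(x) \;=\; \!\!\sum_{|T|=n-j}\chi_T(x),
\]
so $\chi_{[n]}(x) \cdot B(|x|)$ is supported entirely on monomials of degree $\ge n-k$. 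Thus the Fourier support of $p$ lies in $\{S : |S| \le k \text{ or } |S| \ge n-k\}$, of size at most $2 \sum_{j=0}^k \binom{n}{j} \le n^{O(k)}$, which is quasi-polynomial when $k = \log^{O(1)} n$.

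The only non-routine ingredient is the use of the symmetric-difference identity to show that multiplying a degree-$k$ symmetric polynomial by $\chi_{[n]} = (-1)^{|x|}$ \emph{exactly} shifts its Fourier support to the top $k$ levels rather than smearing it across all weights. This is what keeps the parity-based ``correction'' cheap in monomials, and it is the combinatorial reason the distance-$2$ notion of $\deg_{oe}$ (rather than ordinary sign changes of $D_f$) is the right handle on Threshold-of-Parity size.
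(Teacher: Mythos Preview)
Your proposal is correct and follows essentially the same construction the paper uses in Section~\ref{sec: upper_bound} (Theorem~\ref{thm: upper_bound}), which is explicitly attributed to Zhang: split the predicate by parity of $|x|$, build a product of one linear factor per $(i,i+2)$ sign change on each half, and glue the two halves together with $(1\pm\chi_{[n]}(x))$. The only cosmetic difference is that the paper bounds the \emph{weight} of the resulting polynomial by $4(2n)^k$ (sufficient for its $\PP$ statement), whereas you bound the \emph{number of monomials} via the observation that $\chi_{[n]}\cdot\sum_{|S|=j}\chi_S=\sum_{|T|=n-j}\chi_T$ forces the Fourier support into levels $\le k$ and $\ge n-k$; for the Threshold-of-Parity size statement your monomial count is the more directly relevant quantity, but both yield the same $n^{O(k)}$ bound.
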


The following is a result by Paturi \cite{Paturi92} which gives us tight bounds on the approximate degree of symmetric functions.
\begin{theorem}[\cite{Paturi92}]\label{thm: paturi}
For any symmetric function $f : \bra{0, 1}^n \rightarrow \bra{-1, 1}$, define the quantity $\Gamma(f) = \min\bra{\abs{2k - n + 1} : D_f(k) \neq D_f(k + 1) \text{ and } 0 \leq k \leq n - 1}$.
Then,
\[
\widetilde{\deg}_{2/3}(f) = \Theta(\sqrt{n(n - \Gamma(f))})
\]
\end{theorem}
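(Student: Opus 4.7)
My plan is to reduce the problem to univariate polynomial approximation on the integer grid $\{0, 1, \ldots, n\}$ by Minsky--Papert symmetrization, and then prove matching bounds using Chebyshev polynomial technology.

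\emph{Symmetrization.} For any symmetric $f$, averaging an approximating polynomial $p(x_1, \ldots, x_n)$ over coordinate permutations yields a polynomial of the same or smaller degree that depends only on the Hamming weight $\sum_i x_i$. Hence $\widetilde{\deg}_{2/3}(f)$ equals the least degree $d$ of a univariate real polynomial $q$ satisfying $|q(t) - D_f(t)| \leq 2/3$ for every integer $t \in \{0, 1, \ldots, n\}$, and I will work with this univariate problem throughout.

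\emph{Upper bound.} Let $k^\ast$ attain the minimum in the definition of $\Gamma(f)$, and assume by the symmetry $k \mapsto n-1-k$ that $k^\ast \leq (n-1)/2$. Then every index $k$ at which $D_f(k) \neq D_f(k+1)$ lies in the ``middle band'' $B = \{k^\ast, \ldots, n - k^\ast - 1\}$, of size at most $n - \Gamma(f)$. I would construct $q$ of degree $d = O(\sqrt{n(n - \Gamma(f))})$ as
\[
q(x) \;=\; \sum_{j \in B} D_f(j)\,\pi_j(x) \;+\; c(x),
\]
where $c(x)$ encodes the constant values of $D_f$ on the two outer strips and each selector $\pi_j$ has degree $O(\sqrt{n(n-\Gamma(f))})$, satisfies $\pi_j(j) = 1$, and $|\pi_j(i)| \leq 1/(10n)$ at every other integer $i \in \{0, \ldots, n\}$. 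The selectors can be built by rescaling Chebyshev polynomials: since $B$ sits in an interval of length $n - \Gamma(f)$ centred near $n/2$ while the endpoints $0$ and $n$ are at distance $\Theta(n)$, a Chebyshev polynomial of degree $O(\sqrt{n(n-\Gamma(f))})$ composed with a suitable quadratic change of variables can be made to peak on $B$ and decay below $1/(10n)$ at the far integers. A product-of-factors trick is then used to zero out the remaining integers of $B$. The cumulative error is at most $|B|/(10n) < 1/3$.

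\emph{Lower bound.} Given a univariate $q$ of degree $d$ with $|q(t) - D_f(t)| \leq 2/3$ on the grid, let $k$ realise $\Gamma(f) = |2k - n + 1|$ with $D_f(k) \neq D_f(k+1)$. Then $|q(k+1) - q(k)| \geq 2/3$, so by the mean value theorem some $\xi \in [k, k+1]$ satisfies $|q'(\xi)| \geq 2/3$. Since $\xi$ lies at distance $\approx (n-\Gamma(f))/2$ from the nearer endpoint of $[0, n]$ and $\approx (n+\Gamma(f))/2$ from the farther, Bernstein's inequality gives
\[
\tfrac{2}{3} \;\leq\; |q'(\xi)| \;\leq\; \frac{d}{\sqrt{\xi(n - \xi)}}\,M \;\leq\; \frac{2d}{\sqrt{n(n - \Gamma(f))}}\,M,
\]
where $M = \max_{[0, n]} |q|$. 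It remains to bound $M$ by an absolute constant, which follows from the Ehlich--Zeller / Rivlin estimate in the regime $d^2 \lesssim n$; in the complementary regime $d^2 > n$, the desired bound $d \geq \Omega(\sqrt{n(n - \Gamma(f))})$ is immediate since $\Gamma(f) \leq n$. Either way we get $d \geq \Omega(\sqrt{n(n - \Gamma(f))})$, matching the upper bound.

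\emph{Main obstacle.} The genuinely delicate step is the selector construction: calibrating a Chebyshev polynomial so that it is uniformly $\approx 1$ on all of $B$ while decaying below $1/n$ at both far endpoints on a degree budget of $\sqrt{n(n-\Gamma(f))}$ requires precise interplay between the shift, scaling, and degree, and the auxiliary product that zeroes out other integers in $B$ must not inflate this budget. A secondary technical point is bounding $M = \max_{[0,n]}|q|$ uniformly in the lower bound step, where a careful case split on the magnitude of $d$ (or a refined Ehlich--Zeller estimate) is needed to push the Bernstein argument across all parameter regimes.
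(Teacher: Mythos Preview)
The paper does not prove this statement at all: Theorem~\ref{thm: paturi} appears in the preliminaries section as a black-box citation of Paturi's 1992 result, with no proof or even proof sketch supplied. The paper merely \emph{uses} this theorem (specifically in the proof of Part~1 of Theorem~\ref{thm: liftsym}) to conclude that a certain symmetric function $f$ with $\Gamma(f)\le 1$ has approximate degree $\Theta(r)$. So there is nothing in the paper to compare your proposal against.

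That said, your sketch is broadly faithful to how Paturi's theorem is actually proved in the original reference: symmetrization to reduce to univariate approximation, a Chebyshev-based construction for the upper bound, and a Bernstein/Markov-type derivative bound combined with an Ehlich--Zeller/Rivlin control of $\max_{[0,n]}|q|$ for the lower bound. The lower-bound half of your sketch is essentially complete. The upper-bound half is where the real work lies, and your description of the selector construction is too vague to be verifiable as written---in particular, composing a Chebyshev polynomial with a quadratic and then multiplying by a ``product-of-factors'' to zero out the other integers in $B$ is not obviously degree-$O(\sqrt{n(n-\Gamma(f))})$, since $|B|$ can itself be as large as $n-\Gamma(f)$. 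Paturi's actual construction (and later cleaner versions) builds a single low-degree polynomial that approximates the sign pattern directly, rather than summing $|B|$ many selectors; you would need to either tighten the selector degree/error tradeoff substantially or switch to that approach.
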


\begin{definition}
For functions $f, g : \bra{-1, 1}^n \rightarrow \bra{-1, 1}$ and a distribution $\nu$ on $\bra{-1, 1}^n$, define the correlation between $f$ and $g$ under the distribution $\nu$ to be
\[
\corr_\nu(f, g) = \E_\nu[f(x)g(x)]
\]
\end{definition}

\begin{definition}[Threshold weight]\label{defn: thrwt}
Define the \emph{threshold weight} of a function $f : \bra{-1, 1}^n \rightarrow \bra{-1, 1}$, denoted by $wt_{\pm}(f)$ to be the weight of a minimum weight real polynomial $p$ such that $p(x)f(x) \geq 1$ for all $x \in \bra{-1, 1}^n$.
\end{definition}
Note that this definition differs from the notion of more widely studied notion of threshold weight (see for example \cite{Krause06}, \cite{She11}, \cite{BT15}), where the coefficients of $p$ are restricted to be integer valued.
It is convenient for us to work with the notion as defined in Definition \ref{defn: thrwt} because of its following relationship with the polynomial margin, which can be easily verified.

\begin{lemma}\label{lem: margin_wt_equiv}
For any function $f : \bra{-1, 1}^n \rightarrow \bra{-1, 1}$,
\[
m(f) = \frac{1}{wt_{\pm}(f)}
\]
\end{lemma}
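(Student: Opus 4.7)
The plan is to prove the identity by a direct scaling argument in both directions; the statement is essentially saying that the two optimization problems defining $m(f)$ and $wt_{\pm}(f)$ are equivalent up to normalizing the objective against the constraint.

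First I would show $m(f) \geq 1/wt_{\pm}(f)$. Take any polynomial $q$ realizing the minimum in the definition of $wt_{\pm}(f)$, so that $q(x)f(x) \geq 1$ for every $x \in \{-1,1\}^n$ and $wt(q) = wt_{\pm}(f) =: W$. Then $q$ automatically sign-represents $f$ (since $q(x)f(x) \geq 1 > 0$). Consider $p := q/W$, which has $wt(p) = 1$ and still sign-represents $f$. For every $x$, $p(x)f(x) = q(x)f(x)/W \geq 1/W$, so the inner minimum in the definition of $m(f)$ is at least $1/W$, giving $m(f) \geq 1/wt_{\pm}(f)$.

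Next I would show the reverse inequality $m(f) \leq 1/wt_{\pm}(f)$. Take any polynomial $p$ with $wt(p) \leq 1$ that sign-represents $f$ and achieves margin $\mu := \min_{x} p(x)f(x)$; we may assume $\mu > 0$ (otherwise there is nothing to show, since $wt_\pm(f)$ is always finite for a boolean $f$, so $1/wt_\pm(f) \geq 0 \geq \mu$ in that case). Rescale by setting $q := p/\mu$. Then for every $x$, $q(x)f(x) = p(x)f(x)/\mu \geq 1$, so $q$ is a feasible polynomial in the definition of $wt_{\pm}(f)$. Its weight is $wt(p)/\mu \leq 1/\mu$, so $wt_{\pm}(f) \leq 1/\mu$, i.e.\ $\mu \leq 1/wt_{\pm}(f)$. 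Taking the supremum over all such $p$ gives $m(f) \leq 1/wt_{\pm}(f)$.

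Combining the two bounds yields the claimed equality. There is no real obstacle here: the only minor point to handle cleanly is that the polynomial margin $m(f)$ is defined as a supremum over polynomials of weight \emph{at most} $1$ (not exactly $1$), which is harmless since multiplying $p$ by a constant $\leq 1$ can only shrink the margin, so the optimum is attained (in the limit) by polynomials of weight exactly $1$. The argument uses only positive homogeneity of both $wt(\cdot)$ and the bilinear form $p(x)f(x)$, which is exactly the structure needed to convert between a normalized-weight / maximize-margin formulation and a normalized-margin / minimize-weight formulation.
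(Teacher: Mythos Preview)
Your proof is correct; the paper itself does not spell out a proof of this lemma, merely remarking that the relationship ``can be easily verified.'' Your two-direction scaling argument is exactly the natural verification the paper has in mind, and the minor caveats you raise (weight $\leq 1$ versus $=1$, positivity of $\mu$) are handled correctly and are in any case automatic given that the margin is defined as a maximum over polynomials that sign-represent $f$.
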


The following theorem was proved by Ada et al.~\cite{AFH12}, which characterizes the weight of a symmetric function.
\begin{theorem}[\cite{AFH12}]\label{thm: afh}
For any symmetric function $f : \bra{-1, 1}^n \rightarrow \bra{-1, 1}$,
\[
\log(wt(f)) = \Theta\left(r(f)\log\left(\frac{n}{r(f)}\right)\right)
\]
\end{theorem}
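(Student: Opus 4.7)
Write the Fourier expansion $f = \sum_{k=0}^n a_k \, e_k$ in the basis $e_k = \sum_{|S|=k}\chi_S$, so that $wt(f) = \sum_k |a_k|\binom{n}{k}$ and the predicate is recovered via Krawtchouk polynomials as $D_f(j) = \sum_k a_k K_k(j;n)$. Both directions of the theorem connect the combinatorial parameter $r(f)$ to the Fourier profile $(a_k)$.

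For the upper bound, the key claim is that for a symmetric $f$ with $r(f) = r$, the coefficients $a_k$ are supported at the boundary levels $k \leq r$ and $k \geq n-r$. The tool is the Krawtchouk identity $K_k(j;n) - K_k(j+2;n) = 4K_{k-1}(j;n-2)$, immediate from the generating function $(1-z)^j(1+z)^{n-j} = \sum_k K_k(j;n) z^k$: the period-2 condition $D_f(j) = D_f(j+2)$ on $[r, n-r)$ translates into the statement that an auxiliary symmetric function on $n-2$ variables, with Fourier coefficients equal to the shift $a_{k+1}$, vanishes on a long middle stretch of Hamming weights. Inducting on $n$ forces $a_k = 0$ for $r < k < n-r$. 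Together with $|a_k| \leq \|f\|_\infty = 1$, this yields $wt(f) \leq \sum_{k \leq r}\binom{n}{k} + \sum_{k \geq n-r}\binom{n}{k} \leq 2(r+1)\binom{n}{r} = 2^{O(r\log(n/r))}$.

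For the lower bound, I would use LP duality for the $\ell_1$ norm: $wt(f) = \max\{\langle f,\phi\rangle : \|\widehat{\phi}\|_\infty \leq 1\}$. Since $r(f) = r$, the predicate $D_f$ has a $\pm 2$ jump at some boundary Hamming weight $j^\star$ (with $j^\star < r$ or $j^\star > n - r$). A natural dual witness is a symmetric function $\phi$ concentrated around $j^\star$, normalized appropriately. Estimating $\|\widehat{\phi}\|_\infty$ using standard extremal bounds on Krawtchouk polynomials at boundary-scale arguments — where they behave like Chebyshev polynomials of degree $\approx r$ — should give $\|\widehat{\phi}\|_\infty \leq 2^{O(r)}/\binom{n}{r}$, while $\langle f,\phi\rangle = \Omega(1)$ follows from the jump condition; rearranging yields $wt(f) = 2^{\Omega(r\log(n/r))}$.

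The main obstacle will be constructing the dual witness with tight parameters: the oscillation behavior of Krawtchouk polynomials near the boundary is delicate, and achieving the ratio $2^{O(r)}/\binom{n}{r}$ on $\|\widehat{\phi}\|_\infty$ requires careful balancing of support width against spectral spread. If a direct construction proves intractable, a restriction argument serves as a fallback: freezing $n - m$ variables of $f$ (for $m = \Theta(r)$) to values that keep $r(g) = \Omega(m)$ in the restricted symmetric function $g$ gives, by the non-increasing property of weight under restriction, $wt(f) \geq wt(g)$, and for such $g$ on $\Theta(r)$ variables the lower bound follows by direct Fourier computation.
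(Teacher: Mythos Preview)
The paper does not prove this theorem at all: it is stated in the preliminaries with a citation to \cite{AFH12} and is used as a black box (only the upper-bound direction, in the proof of Theorem~\ref{thm: afhsolve}). So there is no ``paper's own proof'' to compare against.

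That said, your upper-bound argument contains a genuine error. The structural claim that the Fourier coefficients $a_k$ must vanish for $r < k < n-r$ is false. Take $f = \mathsf{AND}_n$ (i.e.\ $D_f(j)=1$ for $j<n$ and $D_f(n)=-1$), which has $r(f) \le 2$. Its multilinear expansion is $f(x) = 1 - 2^{-(n-1)}\prod_i(1-x_i)$, so \emph{every} level $a_k$ is nonzero (equal to $\pm 2^{-(n-1)}$ for $k\ge 1$). Your induction breaks because the auxiliary function $g$ on $n-2$ variables with coefficients $b_\ell = a_{\ell+1}$ is not $\pm 1$-valued; vanishing of $g$ on a long Hamming interval does not propagate to boundary-supported Fourier coefficients. (In the $\mathsf{AND}$ example one computes $\sum_\ell a_{\ell+1}K_\ell(j;n-2) = 2^{-(n-1)}\sum_\ell(-1)^\ell K_\ell(j;n-2)$, which is $0$ for all $j<n-2$ yet has full Fourier support.)

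The correct upper-bound route, essentially what \cite{AFH12} does and what the present paper mimics in the proof of Theorem~\ref{thm: upper_bound} for odd-even degree, is not spectral but \emph{constructive}: on each parity class the predicate has at most $r$ sign changes near the ends, so $D_f$ (restricted to even or odd weights) is sign-represented by a product of at most $r$ affine forms in $\sum_j x_j$; expanding and bounding integer coefficients gives weight $n^{O(r)}$. Your lower-bound sketch via LP duality and Krawtchouk extremal bounds is plausible in outline but, as you note, the tight dual-witness construction is exactly the hard part; the restriction fallback is closer to what one can actually carry out.
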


\subsection{Communication complexity}\label{subsec: comm}

We now recall some notions from communication complexity.

In the models of communication of our interest, two players, say Alice and Bob, are given inputs $X \in \mathcal{X}$ and $Y \in \mathcal{Y}$ for some finite input sets $\mathcal{X}, \mathcal{Y}$, they have access to \emph{private} randomness and they wish to compute a given function
$f: \mathcal{X} \times \mathcal{Y} \rightarrow \bra{-1, 1}$.  Unless mentioned otherwise, we use $\mathcal{X} = \mathcal{Y} = \bra{0, 1}^n$.
Alice and Bob communicate according to a protocol which has been fixed in advance.  The cost of a protocol is the maximum number of bits communicated on the worst case input.
A probabilistic protocol $\Pi$ computes $f$ with advantage $\epsilon$ if the probability that $f$ and $\Pi$ agree is at least $1/2 + \epsilon$ for all inputs.
Denote the cost of the best such protocol to be $R_{\epsilon}(f)$.  Note that we deviate from the notation used in \cite{KN97}.
Define the following measures of complexity of $f$.
\[
\PP(f) = \min_\epsilon\left(R_\epsilon(f) + \log\left(\frac{1}{\epsilon}\right)\right)
\]
and
\[
\UPP(f) = \min_\epsilon(R_\epsilon(f)).
\]
The latter quantity was introduced by Paturi and Simon \cite{PS86}, and we call it the unbounded error communication complexity of $f$.
The former adds a penalty term depending on the advantage, and was proposed by Babai et al.~\cite{BFS86}.  We refer to this cost as the weakly-unbounded error communication complexity of $f$.
These measures give rise to the following communication complexity classes \cite{BFS86}.

\begin{definition}\label{defn: PPUPP}
\[
\PP^{cc}(f) \equiv \bra{f: \PP(f) = \textnormal{polylog}(n)}
\]
\[
\UPP^{cc}(f) \equiv \bra{f: \UPP(f) = \textnormal{polylog}(n)}
\]
\end{definition}

Define the discrepancy of a rectangle $S \times T$ under a distribution $\lambda$ on $\bra{-1, 1}^n \times \bra{-1, 1}^n$ as follows.
\begin{definition}[Discrepancy]\label{defn: disc}
\[
\disc_{\lambda}(S \times T, f) = \sum\limits_{(x, y) \in S \times T}f(x, y)\lambda(x, y)
\]
The discrepancy of $f$ under a distribution $\lambda$ is defined as
\[
\disc_{\lambda}(f) = \max_{S \subseteq [n], T \subseteq [n]} \disc_{\lambda}(S \times T, f)
\]
and the discrepancy of $f$ is defined to be
\[
\disc(f) = \min\limits_\lambda \disc_{\lambda}(f)
\]
\end{definition}

Klauck \cite{Klauck07} proved that discrepancy and $\PP$ complexity are equivalent notions.
\begin{theorem}[Klauck \cite{Klauck07}]\label{thm: klauck}
For any function $f : \bra{-1, 1}^n \rightarrow \bra{-1, 1}$,
\[
\PP(f) = \Theta\left(\log\left(\frac{1}{\disc(f)}\right)\right)
\]
\end{theorem}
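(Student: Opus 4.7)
The plan is to establish the equivalence in two separate directions, since each bound requires a different technique. For the lower bound $\PP(f) \geq \Omega(\log(1/\disc(f)))$, I would use the classical discrepancy method adapted to the weakly-unbounded error model. Suppose $f$ admits a $\PP$ protocol $\Pi$ of communication cost $c$ with advantage $\epsilon$, so $\PP(f) \leq c + \log(1/\epsilon)$. By fixing the private randomness optimally, the protocol is a convex combination of deterministic protocols of cost $c$, each of which partitions $\bra{0,1}^n \times \bra{0,1}^n$ into at most $2^c$ monochromatic combinatorial rectangles. For any distribution $\lambda$ on inputs, correctness forces $\E_\lambda[\Pi(x,y) f(x,y)] \geq 2\epsilon$; writing this expectation as a signed sum over the rectangles of the partition and applying the triangle inequality yields $2\epsilon \leq 2^c \cdot \disc_\lambda(f)$. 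Minimizing over $\lambda$ and taking logarithms gives $c + \log(1/\epsilon) \geq \log(1/\disc(f)) - O(1)$, which is the desired lower bound.

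For the converse $\PP(f) \leq O(\log(1/\disc(f)))$, I would pass from discrepancy to a margin-style representation via LP duality and then convert that representation into a protocol. Apply von Neumann's minimax theorem to the zero-sum game in which one player picks a distribution $\lambda$ over inputs and the other picks a signed rectangle $(R,\sigma)$ with $\sigma \in \bra{-1,+1}$. This yields
\[
\disc(f) \;=\; \max_{\mu}\; \min_{(x,y)}\; f(x,y) \sum_{R,\sigma} \mu(R,\sigma)\,\sigma\, \mathbf{1}_R(x,y),
\]
where $\mu$ ranges over probability distributions on signed rectangles. If $\disc(f) = \delta$, there is a $\mu$ under which the pointwise correlation with $f$ is at least $\delta$ everywhere. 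To extract a protocol, the parties sample $(R,\sigma)$ from $\mu$, Alice transmits the single bit $\sigma \cdot \mathbf{1}[x \in S_R]$, and Bob outputs that bit multiplied by $\mathbf{1}[y \in T_R]$; in expectation over $\mu$ this is exactly the margin function, so the protocol attains advantage $\delta$.

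The main obstacle is that the distribution $\mu$ produced by the minimax argument can \emph{a priori} be supported on exponentially many signed rectangles, in which case naming the sampled $(R,\sigma)$ would dominate the communication cost. To control this, I would invoke a sparsification argument in the spirit of Linial--Shraibman: by Carath\'eodory's theorem, or by a direct probabilistic sampling / Chernoff argument on the LP solution, any near-optimal margin representation can be replaced by one with support of size $\mathrm{poly}(1/\delta)$ while losing only a constant factor in the margin. The rectangle index then fits in $O(\log(1/\delta))$ bits, and together with the $\log(1/\delta)$ advantage penalty built into the definition of $\PP(f)$, this delivers $\PP(f) \leq O(\log(1/\disc(f)))$. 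Combined with the previous direction, this matches $\log(1/\disc(f))$ up to absolute constants absorbed into the $\Theta(\cdot)$.
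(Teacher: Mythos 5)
The paper does not prove this theorem; it imports it verbatim from Klauck \cite{Klauck07}, so there is no internal proof to compare against. Evaluating your proposal on its own: the overall approach is the standard one and is essentially correct in spirit, but there is a real quantitative gap in the upper-bound direction, and that gap is exactly what lets you "prove" the stated bound without the $\log n$ term that Klauck's theorem actually carries.

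Your lower-bound direction is fine: fix the optimal hard distribution $\lambda$, fix the best random string, decompose the resulting deterministic protocol into at most $2^c$ rectangles, and conclude $2\epsilon \leq 2^c \disc_\lambda(f)$, giving $\PP(f) \geq \log(1/\disc(f)) - O(1)$. Your upper-bound skeleton --- minimax over distributions $\mu$ on signed rectangles to obtain a pointwise margin representation $f(x,y)\,\E_{\mu}[\sigma\,\mathbf{1}_R(x,y)] \geq \delta$, then turn a sample from $\mu$ into a one-bit protocol --- is also the right idea. (Two small slips there: when $(x,y)\notin R$ the protocol should output a uniformly random sign, so the advantage you get is $\delta/2$ rather than $\delta$; and you should say explicitly that Alice samples $(R,\sigma)$ privately and tells Bob the \emph{index} of the sampled rectangle, since the paper's model uses private coins.)

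The genuine gap is the sparsification bound. You claim the support of $\mu$ can be pruned to $\mathrm{poly}(1/\delta)$ signed rectangles. Carath\'eodory does not give this: the margin vector lives in $\R^{4^n}$, so exact Carath\'eodory yields $4^n+1$ rectangles, which is useless. The probabilistic/Maurey argument does work, but it requires a union bound over all $4^n$ input pairs, and so gives support size $O(n/\delta^2)$, not $\mathrm{poly}(1/\delta)$. Consequently the rectangle index costs $O(\log n + \log(1/\delta))$ bits, and the bound you actually obtain is $\PP(f) \leq O(\log n + \log(1/\disc(f)))$. This additive $\log n$ is not an artifact: in the private-coin model used here it is unavoidable (e.g. Equality has $\disc(\mathsf{EQ}) = \Theta(1)$ but private-coin $\PP(\mathsf{EQ}) = \Theta(\log n)$). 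Klauck's theorem, stated precisely, is $\PP(f) = \Theta(\log(1/\disc(f)) + \log n)$; the paper's statement drops the $\log n$, which is a harmless imprecision for the paper's applications (where $\log(1/\disc) \gg \log n$), but your $\mathrm{poly}(1/\delta)$ claim is the step that incorrectly makes the exact statement appear to go through. You should either prove the version with the $+\log n$, or pass to a public-coin model where the sparsification is unnecessary and the bound $O(\log(1/\delta))$ holds directly.
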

In \cite{GHR92}, Goldmann et al.~exhibited a distribution under which the one way communication complexity of $U_{4n, n} \circ \xor$ is large.  Sherstov \cite{She08} noted
that the same proof can be used to show that $\disc(U_{4n, n} \circ \xor) \leq O\left(\frac{\sqrt{n}}{2^{n/2}}\right)$.

\begin{remark}
We remark here that the function considered by Goldmann et al.~was not exactly $U_{4n, n} \circ \xor$, because the variables feeding to the $\xor$ gates had a mild dependence on each other.  Thus the discrepancy bound they obtained was slightly stronger than as stated above.
However, we will refer to $\UTHR \circ \xor$ as the $\GHR$ function.
\end{remark}

Sherstov defined the notion of a pattern matrix communication game in \cite{She11}.  Let $n$ be a positive integer and $f : \bra{0, 1}^n \rightarrow \bra{-1, 1}$.  Alice is given 2$n$ bits $x_{1, 1}, x_{1, 2}, x_{2, 1}, x_{2, 2}, \dots, x_{n, 1}, x_{n, 2}$.
Bob is given 2$n$ bits $z_1, z_2, \dots, z_n, w_1, w_2, \dots w_n$.  Define $\PM$ to be the function on 4 bits defined as $\PM(x_0, x_1, z, w) = x_z \oplus w$.
In the pattern matrix game corresponding to $f$, the $\PM$ gadget is applied on each tuple $\bra{x_{i, 1}, x_{i_2}, z_i, w_i}$, and the resultant $n$ bit string is fed as input to $f$.
This is the composed function, $f \circ \PM$.  Notice that this is similar to the lifting as defined in Equation \ref{eqn: select}.
\begin{theorem}[\cite{She11} Thm 1.5]\label{thm: pm}
Let $F = f \circ \PM$ for a given function $f : \bra{0, 1}^{n} \rightarrow \bra{-1, 1}$.
Then
\[
\disc(F) \leq \min_{d = 1, \dots, n}\max\left\{\left(\frac{n}{W(f, d - 1)}\right)^{1/2}, \left(\frac{1}{2}\right)^{d/2}\right\}
\]
\end{theorem}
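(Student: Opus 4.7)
The plan is to invoke the pattern matrix method of Sherstov, which translates discrepancy bounds for $F = f \circ \PM$ into Fourier-analytic properties of the outer function $f$. The key algebraic feature of the $\PM$ gadget is that the communication matrix $M_F$ admits a beautifully clean spectral decomposition: its singular directions are naturally indexed by subsets $S \subseteq [n]$, and the associated singular values are proportional to $\abs{\wh{f}(S)}$, with a damping factor of $2^{-\abs{S}/2}$ arising from the XOR-mask coordinate of the gadget. This is the fact that makes the entire method work and would be the first thing I would establish by a direct computation of $M_F M_F^\top$.

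Next, I would reduce $\disc(F)$ to a spectral norm estimate via a Lindsey-type inequality: for a sign-compatible matrix $N$, one has $\disc(F) \leq \sqrt{\abs{X}\abs{Y}}\cdot \|N\|$, and combining this with the above decomposition gives, after the standard normalization,
\[
\disc(F) \;\leq\; \max_{S \subseteq [n]} \abs{\wh{f}(S)} \cdot 2^{-\abs{S}/2} \cdot (\textnormal{combinatorial factor}).
\]
I would then split the maximum at level $d$. For $\abs{S} \geq d$, the trivial estimate $\abs{\wh{f}(S)}\leq 1$ yields a tail bound of $(1/2)^{d/2}$, which accounts for the second term in the theorem's $\max$. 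For $\abs{S} < d$, I would invoke LP duality on $W(f,d-1)$ (the degree-$(d-1)$ threshold weight of $f$): the dual expresses the existence of a signed measure $\phi$ on $\bra{0,1}^n$ that is orthogonal to every character of weight below $d$ and satisfies $\sum_x \phi(x)f(x)\geq 1/W(f,d-1)$. Lifting $\phi$ through the $\PM$ gadget to a distribution on the joint communication inputs kills the contribution from low-frequency Fourier modes, and a spectral analysis of the resulting weighted matrix yields the first term $\bigl(n/W(f,d-1)\bigr)^{1/2}$, with the $\sqrt{n}$ arising from the combinatorial count of relevant subsets of $[n]$ and the domain-size scaling of the pattern matrix.

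The main obstacle is the careful bookkeeping surrounding the LP-duality step: one must verify that the dual witness $\phi$ extracted from $W(f,d-1)$, once lifted through $\PM$, produces a bona fide distribution on Alice's and Bob's input space that is compatible with the generalized discrepancy framework, and that the per-mode damping factor $2^{-\abs{S}/2}$ is correctly absorbed into the normalization. A secondary obstacle is establishing the pattern matrix's spectral decomposition with the right constants; this is the algebraic backbone of the whole argument and requires a nontrivial direct computation using the identity $\wh{f}(S) = \E_x f(x)\chi_S(x)$ against the block structure of $M_F$. Once both steps are in place, optimizing over $d = 1,\dots,n$ produces the stated $\min\text{-}\max$ bound.
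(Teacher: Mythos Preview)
The paper does not prove this theorem at all: it is stated in the preliminaries as a cited result from Sherstov~\cite{She11} (Theorem~1.5 there) and is simply used as a black box later in Section~4.3. So there is no ``paper's own proof'' to compare against.

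That said, your outline is a faithful sketch of Sherstov's original argument: the spectral decomposition of the pattern matrix, the Lindsey-type reduction of discrepancy to the spectral norm under a well-chosen distribution, and the LP-duality extraction of a dual witness $\phi$ from $W(f,d-1)$ that kills low-degree Fourier modes are exactly the ingredients of the pattern matrix method. One small correction on the provenance of the $\sqrt{n}$ factor: in the block-size-$2$ instantiation used here, the dimensions of the pattern matrix are $2^{2n} \times 2^{2n}$ and the damping factor per mode is $2^{-|S|}$ (not $2^{-|S|/2}$) before taking the square root in the spectral-norm-to-discrepancy step; the $\sqrt{n}$ in the first branch of the $\max$ ultimately comes from the $\ell_1$--$\ell_2$ comparison on the dual witness $\phi$ combined with the $\ell_2$-normalization of the Fourier spectrum, not from a ``combinatorial count of subsets.'' This does not affect the structure of your argument, but the bookkeeping you flag as an obstacle is indeed where the constants live, and getting them to match the stated bound requires tracking the block-size-$2$ normalization carefully.
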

In the above theorem, $W(f, d - 1)$ corresponds to the minimum weight of a polynomial of degree $d - 1$ with integer weights which sign represents $f$.

\begin{remark}
Sherstov defined pattern matrices in a more general fashion, where $n$ bits could be split into $t$ blocks containing $n/t$ elements each.  However, for the purposes of this paper, we only consider the case when each block is of size 2.
\end{remark}

The following theorem, first proposed by Klauck \cite{Klauck07}, provides a tool for proving bounded error communication lower bounds for functions.  Its proof may be found in \cite{Cha09, CA08}, for example.
\begin{theorem}[Generalized discrepancy]\label{thm: gendisc}
Let $F, G : \bra{-1, 1}^n \times \bra{-1, 1}^n \rightarrow \bra{-1, 1}$ and $\nu$ be a distribution over $\bra{-1, 1}^n \times \bra{-1, 1}^n$ such that $\corr_\nu(F, G) \geq \delta$.  Then.
\[
R_{\epsilon}(f) \geq \log\left(\frac{\delta - 1 + 2\epsilon}{\disc_\nu(G)}\right)
\]
\end{theorem}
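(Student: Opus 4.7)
}
The plan is to execute the standard two-sided sandwich that underlies the generalized discrepancy method: an upper bound on $|\E_\nu[\Pi \cdot G]|$ coming from the protocol's low communication cost, and a lower bound on the same quantity coming from the hypothesis that $F$ correlates with $G$. Fix an $\epsilon$-error private-coin protocol $\Pi$ for $F$ of cost $c = R_\epsilon(F)$, regarded as a $\{-1,+1\}$-valued random variable. Let $\tilde{\Pi}(x,y) = \E_{\text{coins}}[\Pi(x,y)] \in [-1,1]$. By the definition of error, for every fixed input $(x,y)$,
\[
\tilde{\Pi}(x,y)\cdot F(x,y) \;=\; 2\Pr[\Pi(x,y)=F(x,y)] - 1 \;\geq\; 2\epsilon.
\]

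For the upper bound, I would use the fact that any deterministic protocol of cost $c$ partitions the input space into at most $2^c$ combinatorial rectangles and outputs a constant $\pm 1$ on each. Hence for a deterministic $\Pi_{\det}$,
\[
\bigl|\E_\nu[\Pi_{\det}\cdot G]\bigr| \;\leq\; \sum_R \bigl|\textstyle\sum_{(x,y)\in R} G(x,y)\nu(x,y)\bigr| \;\leq\; 2^c\cdot \disc_\nu(G).
\]
A private-coin protocol is a distribution over such deterministic protocols, so averaging over coins and then over $\nu$ (and swapping the order of expectations) gives $|\E_\nu[\tilde{\Pi}\cdot G]| \leq 2^c\cdot \disc_\nu(G)$.

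For the lower bound, write $\alpha(x,y) = \tilde{\Pi}(x,y)F(x,y) \in [2\epsilon, 1]$ and $h(x,y) = F(x,y) G(x,y) \in \{-1,+1\}$, so that $\tilde{\Pi} G = \alpha h$. Then
\[
\E_\nu[\alpha h] \;=\; \E_\nu[h] + \E_\nu[(\alpha-1)h] \;\geq\; \delta - \E_\nu[1-\alpha] \;\geq\; \delta - (1 - 2\epsilon),
\]
where the first inequality uses $\corr_\nu(F,G)\geq \delta$ together with $(\alpha-1)h \geq -(1-\alpha)$, and the second uses $\alpha \geq 2\epsilon$ pointwise. Combining the two bounds yields
\[
2^c \;\geq\; \frac{\delta - 1 + 2\epsilon}{\disc_\nu(G)},
\]
which after taking logarithms is exactly the claimed inequality (the statement is vacuous unless $\delta + 2\epsilon > 1$, which is the meaningful regime).

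The only step requiring any care is the discrepancy upper bound for randomized protocols; the obstacle is conceptual rather than computational, namely the observation that private-coin protocols are convex combinations of deterministic ones, which legitimizes passing the discrepancy bound through the expectation over coin tosses. Everything else is algebraic manipulation of the pointwise inequality $\tilde\Pi\cdot F \geq 2\epsilon$ against the correlation hypothesis.
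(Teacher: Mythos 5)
Your proof is correct. The paper in fact does not supply its own proof of Theorem~\ref{thm: gendisc}, citing \cite{Cha09, CA08} instead; your reconstruction is precisely the standard argument those references give — a two-sided sandwich on $\E_\nu[\tilde{\Pi} \cdot G]$, bounded above by $2^c \disc_\nu(G)$ via the rectangle decomposition of a cost-$c$ protocol (extended to private coins by convexity), and bounded below by $\delta - 1 + 2\epsilon$ via the pointwise advantage $\tilde{\Pi} F \geq 2\epsilon$ combined with $\corr_\nu(F,G) \geq \delta$. One small terminological note: you call $\Pi$ an ``$\epsilon$-error'' protocol but then immediately use the paper's ``advantage-$\epsilon$'' convention ($\Pr[\Pi = F] \geq 1/2 + \epsilon$), which is the right one here and is what makes the $2\epsilon$ appear; the algebra is consistent, only the label is off.
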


For notational convenience, we use the notation $U(f)$ to represent $\UPP(f \circ \xor)$.
We also use the notation $U(\mod_m)$ to denote the minimum value of $U(\mod_m^A)$ over all non-simple accepting sets $A$.

Paturi and Simon \cite{PS86} showed an equivalence between $\UPP(f)$ and a quantity called the sign rank of $M_f$ where $M_f$ denotes the communication matrix of $f$.
Define the sign rank of a real matrix $M$ with no 0 entries as follows.
\begin{definition}[Sign Rank]
\[
sr(M) = \min_{A}\bra{rk(A) : sgn(A_{ij}) = sgn(M_{ij})}
\]
\end{definition}
We overload notation and use $sr(f)$ to denote $sr(M_f)$.
\begin{theorem}[Paturi and Simon \cite{PS86}]\label{thm: PS}
\[
\UPP(f) = \log sr(A) \pm O(1)
\]
\end{theorem}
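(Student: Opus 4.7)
The plan is to establish both directions of $\UPP(f) = \log sr(f) \pm O(1)$ through a direct two-way translation between private-coin protocols and sign-rank factorizations of the communication matrix $M_f$.

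For the lower bound $\UPP(f) \geq \log sr(f) - O(1)$, I would fix any private-coin protocol $\Pi$ of cost $c$ that computes $f$ with strictly positive advantage and analyze its acceptance-probability matrix leaf by leaf. Because Alice's and Bob's private coins are independent, the probability of reaching a particular leaf $\ell$ on input $(x,y)$ factors as $\alpha_\ell(x)\beta_\ell(y)$, so the total acceptance probability $P(x,y) = \sum_{\ell \text{ accepting}} \alpha_\ell(x)\beta_\ell(y)$ defines a matrix of rank at most $2^c$. The correctness condition forces $sgn(P(x,y) - 1/2) = f(x,y)$, and subtracting the rank-$1$ constant matrix $\tfrac12 \mathbf{1}\mathbf{1}^\top$ exhibits a sign representative of rank at most $2^c + 1$, giving $sr(f) \leq 2^c + 1$.

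For the upper bound $\UPP(f) \leq \log sr(f) + O(1)$, I would start from a sign-rank decomposition: vectors $u_x, v_y \in \R^r$ with $sgn(u_x^\top v_y) = f(x,y)$ where $r = sr(f)$. Since positively rescaling each row $u_x$ and each column $v_y$ preserves every such sign, I may assume $\|u_x\|_1 = 1$ for every $x$ and $\|v_y\|_\infty \leq 1$ for every $y$. The protocol is then: Alice samples $i \in [r]$ with probability $|u_{x,i}|$, sends $i$ together with $sgn(u_{x,i})$ using $\lceil \log r \rceil + 1$ bits, and Bob outputs $+1$ with probability $\tfrac12(1 + sgn(u_{x,i})\, v_{y,i})$ and $-1$ otherwise. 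A direct calculation gives
\[
\Pr[\Pi(x,y) = +1] \;=\; \tfrac12 + \tfrac12 \sum_i |u_{x,i}|\, sgn(u_{x,i})\, v_{y,i} \;=\; \tfrac12 + \tfrac12\, u_x^\top v_y,
\]
so the protocol's advantage has the same sign as $u_x^\top v_y$, which equals $f(x,y)$ on the whole domain.

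I expect the main obstacle to be the simultaneous normalization $\|u_x\|_1 = 1$ and $\|v_y\|_\infty \leq 1$: these have to coexist so that Alice samples from a genuine probability distribution while Bob's biased-coin parameter $\tfrac12(1 + sgn(u_{x,i})\, v_{y,i})$ lies in $[0,1]$. The resolution is that sign-rank factorizations are invariant under independent positive rescalings of rows and columns, so the two one-sided normalizations do not interfere. The rest amounts to short calculations, and combining the two directions yields $\UPP(f) = \log sr(f) \pm O(1)$.
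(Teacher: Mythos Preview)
The paper does not supply its own proof of this theorem; it is stated as a cited result of Paturi and Simon \cite{PS86} and used as a black box. Your argument is correct and is essentially the standard Paturi--Simon proof: extracting a low-rank sign representative from the leaf structure of a private-coin protocol, and conversely turning a rank-$r$ sign factorization into an $O(\log r)$-bit one-way protocol via $\ell_1/\ell_\infty$ normalization and index sampling. There is nothing to compare against in the paper itself.
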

Finding an explicit matrix with superlogarithmic sign rank remained a challenge until a breakthrough result of Forster \cite{Forster01}, who proved that the sign rank of any $N \times N$ Hadamard matrix is at least $\Omega(\sqrt{N})$.
This implied an asymptotically tight lower bound for the unbounded error communication complexity of the inner product (modulo 2) function.  We use a generalization of Forster's theorem by Forster et al.~\cite{FKLMSS01}.
\begin{theorem}[Forster et al.~\cite{FKLMSS01}]\label{thm: forster}
Let $M_{m \times N}$ be a real matrix with no 0 entries.  Then,
\[
sr(M) \geq \frac{\sqrt{mN}}{||M||} \cdot \min_{x, y}{\abs{M(x, y)}}
\]
where $||M||$ denotes the spectral norm of the matrix $M$.
\end{theorem}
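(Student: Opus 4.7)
The plan is to adapt Forster's argument \cite{Forster01} for sign-rank lower bounds on $\pm 1$ matrices to the general real matrix setting. Let $A \in \R^{m \times N}$ be any rank-$r$ matrix with $\mathrm{sgn}(A_{ij}) = \mathrm{sgn}(M_{ij})$ where $r = sr(M)$; the goal is to derive $r \geq \sqrt{mN}\min_{x,y}|M(x,y)|/\|M\|$.

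The first step would be to invoke Forster's balancing lemma. Starting from a rank-$r$ factorization $A = UV^T$ with $U \in \R^{m \times r}$ and $V \in \R^{N \times r}$, I would use two freedoms: (i) positive rescalings of rows and columns of $A$, which preserve the sign pattern; and (ii) invertible changes of basis $T \in GL_r(\R)$ on the factorization $A = (UT^{-1})(VT^T)^T$, which leave $A$ itself unchanged. The lemma asserts that these operations can be arranged so that the rows $v_j$ of $V$, viewed as vectors in $\R^r$, form a normalized tight frame: $\|v_j\| = 1$ for all $j$ and $\sum_j v_j v_j^T = (N/r) I_r$. After additionally rescaling each row of $A$ to unit $\ell_2$ norm (which forces $\|U_i\|^2 = r/N$ via the isotropy condition), one obtains $\|A\|_F^2 = m$.

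With the balanced factorization in hand, the crucial observation is that $|A_{ij}| = |\langle U_i, v_j\rangle| \leq \|U_i\|\,\|v_j\| = \sqrt{r/N}$ is a uniform upper bound on entries within each row. Combined with $\sum_j A_{ij}^2 = 1$, this yields
\[
\sum_j |A_{ij}| \;\geq\; \frac{\sum_j A_{ij}^2}{\max_j |A_{ij}|} \;\geq\; \sqrt{N/r},
\]
whence $\sum_{ij}|A_{ij}| \geq m\sqrt{N/r}$. The sign-matching gives $\langle A, M\rangle_F = \sum_{ij}|A_{ij}||M_{ij}| \geq \min|M| \cdot m\sqrt{N/r}$, while trace-operator norm duality together with $\|A\|_{S_1} \leq \sqrt{r}\|A\|_F = \sqrt{rm}$ (using $\mathrm{rank}(A) \leq r$) gives $\langle A, M\rangle_F \leq \sqrt{rm}\|M\|$. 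Chaining these inequalities yields $\sqrt{mN}\min|M| \leq r\|M\|$, which is the desired bound.

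The main obstacle I expect is Forster's balancing lemma itself, whose proof requires a nontrivial topological argument---either a Brouwer-style fixed-point argument or a compactness/extremum argument on the space of positive diagonal rescalings of the rows and columns of $A$. The remainder is essentially careful Cauchy-Schwarz bookkeeping, where the isotropy condition is exactly what makes the chain of inequalities tight with respect to the correct combination of $m$, $N$, $r$, $\min|M|$, and $\|M\|$; in particular, dropping the lemma and only using a generic factorization of $A$ costs a factor of $\sqrt{N/m}$ in the final bound.
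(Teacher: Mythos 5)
The paper cites Theorem~\ref{thm: forster} from Forster et al.~\cite{FKLMSS01} without reproducing a proof, so there is no in-paper argument to compare against. Your reconstruction is correct and matches the original Forster-style argument: balance the rank-$r$ factorization $A=UV^T$ so that the right factors form a normalized isotropic system $\sum_j v_jv_j^T = (N/r)I_r$, normalize the rows of $A$, deduce the entrywise bound $|A_{ij}|\le\sqrt{r/N}$ and hence $\sum_{ij}|A_{ij}|\ge m\sqrt{N/r}$, and close the gap via $\langle A,M\rangle_F \le \|A\|_{S_1}\|M\|\le\sqrt{rm}\,\|M\|$ together with sign-agreement and $\min|M|$. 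The bookkeeping is right and yields exactly $r\ge\sqrt{mN}\min|M|/\|M\|$.

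One technical caveat you should not hide: Forster's balancing lemma requires the vectors $v_j$ to be in general position (every $r$ of them linearly independent), and an arbitrary minimum-rank sign-realization need not produce a factorization with this property. The original argument handles this with a perturbation-and-limiting step (perturb $A$ slightly to put the $v_j$ in general position while keeping the sign pattern and keeping $\min|A_{ij}|$ bounded below, apply the lemma, then pass to the limit). Your write-up acknowledges that the lemma itself is the nontrivial topological ingredient but does not mention this degeneracy issue; in a full proof that step cannot be omitted. With that addendum, the proposal is a faithful and correct proof of the cited theorem.
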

Thus, it suffices to prove upper bounds on the spectral norm of the communication matrix of a function in order to prove unbounded error lower bounds for that function.
Let us now state a lemma characterizing the spectral norm of the communication matrix of \textsf{XOR} functions.
\begin{lemma}[Folklore]\label{lem: xor_eigenvalues}
Let $f: \bra{0, 1}^n \times \bra{0, 1}^n \rightarrow \mathbb{R}$ be any real valued function and let $M$ denote the communication matrix of $f \circ \xor$.  Then,
\[
||M|| = 2^n \cdot \max_{S \subseteq [n]}\abs{\wh{f}(S)}
\]
\end{lemma}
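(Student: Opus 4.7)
The plan is to diagonalize the communication matrix $M$ directly using the Fourier characters. Since $M[x,y] = f(x \oplus y)$, and $\chi_S$ is multiplicative with respect to $\oplus$, namely $\chi_S(x \oplus y) = \chi_S(x)\chi_S(y)$, I expect the character vectors $\chi_S \in \R^{\{0,1\}^n}$ to be a complete orthogonal eigenbasis of $M$, with eigenvalue $2^n \wh{f}(S)$.

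First I would expand $f$ in the Fourier basis as $f = \sum_{S \subseteq [n]} \wh{f}(S)\chi_S$, so that entrywise
\[
M[x,y] \;=\; f(x \oplus y) \;=\; \sum_{S \subseteq [n]} \wh{f}(S)\,\chi_S(x)\,\chi_S(y).
\]
Interpreting $\chi_S$ as a column vector in $\R^{2^n}$, this gives the rank-one decomposition $M = \sum_{S} \wh{f}(S)\,\chi_S \chi_S^{T}$. Next, I would use orthogonality of characters, $\langle \chi_S, \chi_T\rangle = \sum_x \chi_S(x)\chi_T(x) = 2^n \cdot \mathbf{1}[S = T]$, to compute for any fixed $S$,
\[
M\chi_S \;=\; \sum_{T} \wh{f}(T)\,\chi_T\,(\chi_T^{T}\chi_S) \;=\; 2^n\,\wh{f}(S)\,\chi_S,
\]
so each $\chi_S$ is an eigenvector of $M$ with eigenvalue $2^n\,\wh{f}(S)$. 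The $2^n$ characters form an orthogonal basis of $\R^{2^n}$, so these are all the eigenvalues, with multiplicity.

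Finally, I would observe that $M$ is symmetric since $x \oplus y = y \oplus x$, hence its spectral norm equals its largest absolute eigenvalue:
\[
\|M\| \;=\; \max_{S \subseteq [n]} \bigl|2^n \wh{f}(S)\bigr| \;=\; 2^n \cdot \max_{S \subseteq [n]} \bigl|\wh{f}(S)\bigr|.
\]
There is no real obstacle here; the whole statement is essentially the observation that convolution-type matrices over $\mathbb{F}_2^n$ are simultaneously diagonalized by the Fourier characters, and the only things to verify are the eigenvalue formula and the symmetry of $M$ (so that spectral norm coincides with maximum absolute eigenvalue rather than maximum singular value).
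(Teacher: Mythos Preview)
Your proposal is correct and follows essentially the same approach as the paper: expand $M_{x,y}=f(x\oplus y)$ in the Fourier basis, verify that each character vector $\chi_S$ is an eigenvector of $M$ with eigenvalue $2^n\wh{f}(S)$, and use the symmetry of $M$ to equate the spectral norm with the largest absolute eigenvalue. The only cosmetic difference is that you phrase the eigenvector computation via the rank-one decomposition $M=\sum_S \wh{f}(S)\chi_S\chi_S^{T}$ and orthogonality, whereas the paper computes $(M\chi_T)_x$ coordinatewise.
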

Although this is farily well known, we supply a proof below for completeness.
\begin{proof}
Let $M$ denote the communcation matrix of $f \circ \xor$.  That is, $M_{x, y} = f(x \oplus y)$.
Corresponding to each $T \subseteq [n]$, consider the vector $z_T \in \bra{-1, 1}^{2^n}$ defined by $(z_T)_y = \chi_T(y)$.

Note that
\begin{align*}
M_{x, y} = \sum_{S \subseteq [n]} \wh{f}(S) \chi_S(x \oplus y) = \sum_{S \subseteq [n]} \wh{f}(S) \chi_S(x) \chi_S(y)
\end{align*}
Fix any $T \subseteq [n]$.  We now show $z_T$ is an eigenvector of $M$ with eigenvalue $2^n\wh{f}(T)$.
Consider the $x$th coordinate of $Mz_T$.
\begin{align*}
(Mz_T)_x & = \sum_{y \in \bra{0, 1}^n}\sum_{S \subseteq [n]}\wh{f}(S)\chi_S(x)\chi_S(y)\chi_T(y)\\
& = \sum_S {\wh{f}(S) \chi_S(x)} \sum_{y \in \bra{0, 1}^n}\chi_{S \triangle T}(y)\\
& = \wh{f}(T)\chi_T(x)2^n
\end{align*}

Hence the eigenvalues of $M$ are precisely $\bra{2^n\wh{f}(S) : S \subseteq [n]}$.
Now, the singular values of $M$ are just the square root of the eigenvalues of $M^TM$, which are the absolute values of the eigenvalues of $M$ since $M$ is symmetric.
The lemma now follows.

\end{proof}

\section{Lifting functions}

In this section we first show how we `lift' functions as introduced by Krause and Pudl\'{a}k \cite{KP97}.  We then show how certain hardness properties of the base function
translate to related hardness properties of the lifted function.  Then, we show how lifted functions can be embedded in certain simple functions, if the base function was simple itself.
Finally, we list the consequences we obtain for lifting symmetric functions, which include resolving conjectures posed by Ada et al.~\cite{AFH12} and Zhang \cite{Zhang91}.

\subsection{Lifting functions by the Krause-Pudl\'{a}k selector}

In this section, we show how certain hardness properties of a function $f$ can be amplified into other hardness properties of a particular lifted function obtained from $f$.

For any $f : \bra{-1, 1}^n \rightarrow \bra{-1, 1}$, define a function $f^{op} : \bra{-1, 1}^{3n} \rightarrow \bra{-1, 1}$ as follows.
\begin{equation}\label{eqn: select}
f^{op}(x_1, \dots , x_n, y_1, \dots , y_n, z_1, \dots, z_n) = f(u_1, \dots, u_n)
\end{equation}
where for all $i$, $u_i = (x_i \wedge z_i) \vee (y_i \wedge \bar{z_i})$.
Intuitively speaking, the value of $z_i$ decides whether to feed $x_i$ or $y_i$ as the $i$th input to $f$.
This method of lifting $f$ was introduced by Krause and Pudl\'{a}k \cite{KP97}.

The following lemma translates hardness properties of $f$ into other hardness properties of $f^{op}$.  The proof of this lemma is based on ideas from \cite{KP97}.

\begin{lemma}\label{lem: lift}
Let $f : \bra{-1, 1}^n \rightarrow \bra{-1, 1}$ be any function.
\begin{enumerate}
\item\label{item: test} If $\varepsilon_d(f) > 1 - 2^{-d}$ for some $d \geq 2$, then $m(f^{op}) \leq 2^{-c'd}$ for any constant $0 < c' < 1 - \frac{1}{d}$.
\item $\mathrm{mon}_{\pm}(f^{op}) \geq 2^{\deg_{\pm}(f)}$.
\item $wt_{1/3}(f^{op}) \geq 2^{c \cdot \widetilde{\deg}_{2/3}(f)}$ for any constant $c < 1 - 1/\widetilde{\deg}_{2/3}(f)$.
\end{enumerate}

\end{lemma}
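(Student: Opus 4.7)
The plan is to prove all three parts through a single random-projection trick, essentially the one introduced by Krause and Pudl\'{a}k. For any target $u \in \{-1,1\}^n$, let $\mathcal{D}_u$ be the distribution on $\{-1,1\}^{3n}$ defined coordinate-wise as follows: independently for each $i$, draw $z_i$ uniformly; conditioned on $z_i$ selecting $x$, set $x_i = u_i$ and draw $y_i$ uniformly at random, and symmetrically if $z_i$ selects $y$. Under $\mathcal{D}_u$, the selector gadget outputs $u_i$ deterministically, so $f^{op}(x,y,z) \equiv f(u)$. For any polynomial $p$, the averaged function $\bar{p}(u) := \mathbb{E}_{\mathcal{D}_u}[p(x,y,z)]$ is therefore a polynomial in $u$ that can be compared to $f(u)$ directly.

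The key technical step is a per-coordinate computation. For a monomial $M = \prod_{i \in A} x_i \prod_{j \in B} y_j \prod_{k \in C} z_k$, splitting into cases based on which of $A,B,C$ each index lies in and tracking signs yields
\[
\mathbb{E}_{\mathcal{D}_u}[M] = \begin{cases} \pm\, 2^{-|A \cup B|}\, u^{A \cup B} & \text{if } A \cap B = \emptyset \text{ and } C \subseteq A \cup B,\\ 0 & \text{otherwise.}\end{cases}
\]
Consequently, for $p(x,y,z) = \sum_M c_M M$, $\bar{p}$ has weight at most $\sum_M |c_M|\, 2^{-|A_M \cup B_M|}$. Writing $\bar{p} = \bar{p}_{<d} + \bar{p}_{\ge d}$ according to degree in $u$, the tail obeys $\|\bar{p}_{\ge d}\|_\infty \le \mathrm{wt}(\bar{p}_{\ge d}) \le 2^{-d}\,\mathrm{wt}(p)$, because every contributing high-degree monomial is damped by $2^{-|A \cup B|} \le 2^{-d}$.

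Each of the three claims then follows by applying this averaging identity to an extremal polynomial for $f^{op}$ and invoking the hypothesis on $f$ after truncation. For Part~1, take $p$ of weight at most $1$ with $p \cdot f^{op} \ge m(f^{op}) > 2^{-c'd}$; averaging gives $\bar{p}(u) f(u) \ge m(f^{op})$, and since $|\bar{p}_{\le d}(u)| \le 1$ and $|\bar{p}_{>d}(u)| \le 2^{-(d+1)}$, we obtain $|\bar{p}_{\le d}(u) - f(u)| \le 1 - m(f^{op}) + 2^{-(d+1)} < 1 - 2^{-d}$ for every $c' < 1 - 1/d$, contradicting $\varepsilon_d(f) > 1 - 2^{-d}$. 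For Part~2, use the equivalence between signed monomial complexity and threshold-of-parity size to obtain a $\pm 1$-coefficient polynomial $p$ with fewer than $2^d$ terms sign-representing $f^{op}$, so $\mathrm{wt}(p) < 2^d$ and $p \cdot f^{op} \ge 1$; averaging yields $\bar{p} f \ge 1$ while $|\bar{p}_{\ge d}| \le 2^{-d}\,\mathrm{wt}(p) < 1$, so $\bar{p}_{<d}$ sign-represents $f$ at degree $< d = \deg_\pm(f)$, a contradiction. Part~3 is analogous: starting with an approximator $p$ of $f^{op}$ of weight $W < 2^{cd}$, averaging gives $|\bar{p} - f| < 1/3$, hence $|\bar{p}_{<d} - f| < 1/3 + 2^{-d}W$, which is at most $2/3$ under the stated assumption on $c$, yielding a $(2/3)$-approximation of $f$ of degree $< d$ that contradicts $\widetilde{\deg}_{2/3}(f) = d$.

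The only genuinely delicate step is the per-coordinate computation of $\mathbb{E}_{\mathcal{D}_u}[M]$; once that is in place each part is a routine combination of linearity, the triangle inequality, and $\|q\|_\infty \le \mathrm{wt}(q)$. The main obstacle to a clean write-up will be bookkeeping the signs carefully in the case analysis and pinning down the exact constants in Parts~1 and~3 so they match the form stated in the lemma. The unifying insight, however, is that one very simple random projection, namely the Krause--Pudl\'{a}k selector, simultaneously amplifies three flavors of low-degree hardness of $f$ (uniform inapproximability, sign degree, and $1/3$-approximate degree) into their respective sparse-polynomial analogues for $f^{op}$, with only an $O(1)$-factor loss in the exponent.
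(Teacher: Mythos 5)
Your one-shot averaging over $\mathcal{D}_u$ is a clean alternative packaging of the Krause--Pudl\'{a}k argument, and your per-coordinate computation is correct: the surviving monomials are exactly those with $A \cap B = \emptyset$ and $C \subseteq A \cup B$, damped by a factor of $2^{-|A\cup B|}$. This differs from the paper's presentation, which first picks a \emph{deterministic} fixing of $z$ by an averaging argument, and only afterwards averages out the unselected coordinates. For Parts~1 and~3 both routes go through (with the same minor bookkeeping of constants in Part~3 that you already flag).

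Part~2 has a genuine gap, however. You cannot assume the sign-representing polynomial for $f^{op}$ has $\pm 1$ coefficients and hence weight less than $2^d$: the quantity $\mathrm{mon}_{\pm}$ counts distinct monomials in a sign-representing polynomial with \emph{arbitrary} real coefficients (the top gate of a $\THR \circ \mathsf{PAR}$ circuit is a weighted threshold), so $\mathrm{wt}(p)$ can be huge even when the number of monomials is small. Without a weight bound, the $2^{-|A\cup B|}$ damping on $\bar{p}_{\ge d}$ gives you nothing, and the truncation step collapses. This is precisely where the two-stage structure of the paper's argument earns its keep: with $s < 2^{\deg_{\pm}(f)}$ monomials, the expected \emph{number} (not weight) of relevant degree-$\ge \deg_{\pm}(f)$ monomials over a uniformly random fixing of $z$ is at most $s \cdot 2^{-\deg_{\pm}(f)} < 1$, so some fixing of $z$ leaves \emph{no} relevant high-degree monomials at all; averaging over the irrelevant variables then yields a degree $< \deg_{\pm}(f)$ sign-representation of $f$, the contradiction. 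A single average over $\mathcal{D}_u$ can only bound the averaged weight of the high-degree terms, which is not under control here, and cannot produce the ``zero high-degree relevant monomials'' event you would need.
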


\begin{proof}
We first prove part \ref{item: test}.

Let $p$ be a polynomial of weight 1 representing $f^{op}$ with margin at least $\frac{1}{2^{c'd}}$ for a fixed positive constant $0 < c' < 1 - \frac{1}{d}$, and say $p = \sum\limits_{S \subseteq [n] \times [n] \times [n]}w_S \chi_S$.
Recall that $f^{op}$ (and also $p$) has $3n$ input variables.
For this proof, we view the input variables as $\bra{x_{j, 1}, x_{j, 2}, z_j | j \in \bra{1, \dots, n}}$, where $z_i$'s are the `selector' variables.

For any fixing of the $z$ variables, define a relevant variable to be one that is `selected' by $z$.
Thus, for each $j \in \bra{1, \dots, n}$, exactly one of $\bra{x_{j, 1}, x_{j, 2}}$ is relevant.
Analogously, define a relevant monomial to be one that contains only those variables selected by $z$.
For a uniformly random fixing of $z$ and any subset $S \subseteq [n]$ such that $|S| \geq d$,
\[
\Pr_{z}[\chi_S \text{ is relevant}] \leq \frac{1}{2^d}
\]
Now since $wt(p) = 1$, we have
\begin{align*}
\E_{z}[\text{weight of relevant monomials in }\restr{p}{z} \text{ of degree at least }d] & = \sum\limits_{|S| \geq d}|w_S| \cdot \Pr_z[\chi_S \text{ is relevant}]\\
& \leq \frac{1}{2^d}\sum\limits_{|S| \geq d}|w_S| \leq \frac{1}{2^d}
\end{align*}
Thus, there exists a fixing of the $z$ variables such that the weight of the relevant monomials of degree at least $d$ in $\restr{p}{z}$ is at most $\frac{1}{2^d}$.
Select this fixing of $z$.
\begin{itemize}
\item Note that $\restr{p}{z}$ is a polynomial on only the variables $\bra{x_{i, 1}, x_{i, 2} | i \in \bra{1, \dots, n}}$.  Drop the relevant monomials of degree at least $d$ from $\restr{p}{z}$ to obtain a polynomial $p_1$.
\item Observe that $p_1$ sign represents $\restr{f^{op}}{z}$ with margin at least $\frac{1}{2^{c'd}} - \frac{1}{2^d}$.
\item For each $j \in \bra{1, \dots, n}$, denote the irrelevant variable by $x_{j, i_j}$.
Consider the polynomial $p_2$ on $n$ variables defined by $p_2 = \E_{x_{1, i_1}, \dots, x_{n, i_n}}[p_1]$, where the expectation is over each irrelevant variable being sampled uniformly and independently from $\bra{-1, 1}$.
\item It is easy to see that any monomial containing an irrelevant variable in $p_1$ vanishes in $p_2$.  Also note that $p_2$ is a polynomial of degree at most $d$, and it must sign represent $f$ with margin at least $\frac{1}{2^{c'd}} - \frac{1}{2^d}$.
This leads to a contradiction when $c' < 1 - \frac{1}{d}$, since we assumed that $\varepsilon_d(f) > 1 - \frac{1}{2^d}$.
\end{itemize}

We omit the proofs of the other two statements as they follow along extremely similar lines.
\end{proof}

\subsection{Lifts as projections of simpler functions}

In this section, we show how lifts of threshold (and symmetric) functions can be viewed as the projections of threshold (symmetric) functions.

\begin{definition}[Monomial projection]\label{defn: monproj}
We call a function $g : \bra{-1, 1}^m \rightarrow \bra{-1, 1}$ a \emph{monomial projection} of a function $f : \bra{-1, 1}^n \rightarrow \bra{-1, 1}$ if $g(x_1, \dots, x_m) = f(M_1, \dots, M_n)$, where
each $M_i$ is a monomial in the variables $x_1, \dots, x_m$.
\end{definition}
The following lemma is an easy consequence of definitions.
\begin{lemma}\label{lem: monproj}
For any functions $f : \bra{-1, 1}^n \rightarrow \bra{-1, 1}$ and $g : \bra{-1, 1}^m \rightarrow \bra{-1, 1}$ such that $g$ is a monomial projection of $f$, and any $\epsilon > 0$, we have
\begin{align*}
m(f) & \leq m(g),\\
\mathrm{mon}_{\pm}(g) & \leq \mathrm{mon}_{\pm}(f),\\
wt(g) & \leq wt(f),\\
wt_\epsilon(g) & \leq wt_\epsilon(f).
\end{align*}
\end{lemma}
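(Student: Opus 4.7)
The plan is to exploit the fact that on the hypercube $\{-1,1\}^m$, the square of any variable is $1$, so the composition of a multilinear polynomial with a tuple of monomials collapses to a multilinear polynomial of at most the same total weight. All four inequalities fall out of a single substitution argument; the only care needed is tracking how monomials merge and coefficients combine under substitution.

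First I would fix any polynomial $p(y_1,\ldots,y_n)=\sum_{S\subseteq[n]} c_S \chi_S(y)$ that witnesses the relevant complexity measure of $f$ (either a sign-representing polynomial of weight $\le 1$, a polynomial attaining $\mathrm{mon}_{\pm}(f)$, a polynomial attaining $wt(f)$, or an $\epsilon$-approximator attaining $wt_\epsilon(f)$). Writing $g(x)=f(M_1(x),\ldots,M_n(x))$ with each $M_i$ a $\{-1,1\}$-valued monomial in $x_1,\ldots,x_m$, I define
\[
q(x) \;=\; p\bigl(M_1(x),\ldots,M_n(x)\bigr) \;=\; \sum_{S\subseteq[n]} c_S\,\prod_{i\in S} M_i(x).
\]
Since each $M_i$ is a product of $\pm 1$-valued variables and $x_j^2=1$ on the hypercube, every product $\prod_{i\in S} M_i(x)$ reduces to a single character $\chi_{T(S)}(x)$ for some $T(S)\subseteq[m]$. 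Collecting terms gives $q(x)=\sum_{T\subseteq[m]} \bigl(\sum_{S: T(S)=T} c_S\bigr)\chi_T(x)$, and by construction $q(x)=p(M(x))=g(x)$-approximation error whenever $x\in\{-1,1\}^m$ (because then $M(x)\in\{-1,1\}^n$).

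The four inequalities now follow directly. For the approximation statement, if $|p(y)-f(y)|<\epsilon$ for all $y\in\{-1,1\}^n$, then $|q(x)-g(x)|=|p(M(x))-f(M(x))|<\epsilon$ for all $x\in\{-1,1\}^m$; and by the triangle inequality
\[
wt(q)\;=\;\sum_T\Bigl|\sum_{S:T(S)=T}c_S\Bigr|\;\le\;\sum_T\sum_{S:T(S)=T}|c_S|\;=\;\sum_S|c_S|\;=\;wt(p).
\]
Applied to a minimum-weight $\epsilon$-approximator of $f$ this gives $wt_\epsilon(g)\le wt_\epsilon(f)$; applied to an exact representation (taking $\epsilon$ to $0$ or using the identical argument with $p=f$'s Fourier expansion) it gives $wt(g)\le wt(f)$. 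For signed monomial complexity, note that distinct $S$ can map to the same $T$, so the number of distinct characters with nonzero coefficient in $q$ is at most the number of nonzero $c_S$; hence $\mathrm{mon}_{\pm}(g)\le\mathrm{mon}_{\pm}(f)$ (taking $p$ to be a minimum-support sign-representation and verifying $q$ sign-represents $g$ via $g(x)q(x)=f(M(x))p(M(x))\ge 0$).

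For the margin, take $p$ with $wt(p)\le 1$ achieving $p(y)f(y)\ge m(f)$ for all $y$. Then for every $x\in\{-1,1\}^m$ we have $q(x)g(x)=p(M(x))f(M(x))\ge m(f)$, and $wt(q)\le wt(p)\le 1$ by the triangle inequality above; so $q/wt(q)$ is a unit-weight sign-representation of $g$ with margin at least $m(f)/wt(q)\ge m(f)$, giving $m(g)\ge m(f)$. There is no real obstacle here; the one thing to be careful about is that the $T(S)$ may coincide for different $S$ and even produce cancellations, which is exactly why the inequalities are one-directional (and why triangle inequality is used rather than equality).
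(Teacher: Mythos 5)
Your proof is correct and is exactly the substitution argument the paper has in mind; the paper itself omits the proof, calling the lemma ``an easy consequence of definitions.'' The one thing worth cleaning up is the garbled phrase ``$q(x)=p(M(x))=g(x)$-approximation error,'' but the intended content is clear from the following paragraph and the mathematics is sound throughout (in particular you correctly observe that the triangle inequality handles coefficient merging/cancellation, that $q$ never vanishes when $p$ sign-represents $f$, and that the margin bound $m(g)\ge m(f)$ already holds for $q$ itself without renormalization).
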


We first show that any lifted threshold function can be viewed as a monomial projection of a threshold function with a similar number of input variables.
This proof is based on methods of \cite{Krause06}.
\begin{lemma}\label{lem: thr_lift}
Given any linear threshold function $f : \bra{-1, 1}^n \rightarrow \bra{-1, 1}$, there exists a linear threshold function
$f' : \bra{-1, 1}^{4n} \rightarrow \bra{-1, 1}$ such that $f^{op}$ is a monomial projection of $f$.
\end{lemma}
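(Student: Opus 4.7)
The proof is a short, explicit construction. The key observation is the multilinear identity
\[
u_i \;=\; \tfrac{1}{2}\bigl(x_i + y_i - x_i z_i + y_i z_i\bigr),
\]
valid on $\bra{-1,1}^3$ and immediately checked on all eight settings of $(x_i,y_i,z_i)$, which expresses the Krause--Pudl\'ak selector $u_i = (x_i \wedge z_i) \vee (y_i \wedge \overline{z_i})$ as a $\pm\tfrac12$-combination of the four monomials $x_i$, $y_i$, $x_i z_i$, $y_i z_i$. Because the multilinear representation on $\bra{-1,1}^3$ is unique, the count of four is tight, which explains why the statement uses $4n$ input variables for $f'$.

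Given the representation $f(u) = \mathrm{sgn}\bigl(\sum_{i=1}^n a_i u_i\bigr)$ with integer weights $a_i$, I would introduce fresh variables $v_{i,1}, v_{i,2}, v_{i,3}, v_{i,4}$ for $i \in [n]$ and define $f' : \bra{-1,1}^{4n} \to \bra{-1,1}$ by
\[
f'(v) \;=\; \mathrm{sgn}\!\Bigl(\sum_{i=1}^n a_i\bigl(v_{i,1} + v_{i,2} - v_{i,3} + v_{i,4}\bigr)\Bigr).
\]
This is a linear threshold function whose integer weights $\{+a_i,+a_i,-a_i,+a_i\}_{i\in[n]}$ have the same magnitudes as those of $f$. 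Applying the monomial substitutions $v_{i,1} \mapsto x_i$, $v_{i,2} \mapsto y_i$, $v_{i,3} \mapsto x_i z_i$, $v_{i,4} \mapsto y_i z_i$, the argument of $\mathrm{sgn}$ becomes $\sum_i a_i (x_i + y_i - x_i z_i + y_i z_i) = 2\sum_i a_i u_i$, whose sign agrees with $f(u) = f^{op}(x,y,z)$. Hence $f^{op}$ is a monomial projection of $f'$ in the sense of Definition~\ref{defn: monproj}, which is what was to be shown.

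There is no real obstacle here --- the whole argument is a short algebraic verification once the identity for $u_i$ is written down. The only care needed is in the bookkeeping around the $\bra{0,1}$/$\bra{-1,1}$ conventions (so that each of the four monomials takes $\pm 1$ values on the projected inputs and $f'$ is evaluated at legitimate arguments), and in checking that $f'$ is an honest linear threshold function with integer weights. Notably, no universal-threshold gadget (Definition~\ref{defn: univ}) is needed at this stage: the construction is explicit and preserves the weight magnitudes of $f$, which is what will let Lemma~\ref{lem: monproj} transfer polynomial-margin (and related) bounds from $f'$ down to $f^{op}$ in later applications.
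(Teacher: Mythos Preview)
Your proof is correct and follows essentially the same route as the paper: both exploit the multilinear identity $u_i = \tfrac{1}{2}(x_i + y_i - x_i z_i + y_i z_i)$ for the selector, and both take $f'(x,y,u,v) = \mathrm{sgn}\bigl(\sum_i w_i(x_i + y_i - u_i + v_i)\bigr)$ (up to renaming of variables), noting that the monomial substitution $u_i \mapsto x_i z_i$, $v_i \mapsto y_i z_i$ recovers $f^{op}$. Your write-up is in fact slightly more explicit than the paper's about the factor of $2$ and about verifying that the substituted arguments lie in $\bra{-1,1}$.
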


\begin{proof}
Let $f : \bra{-1, 1}^n \rightarrow \bra{-1, 1}$ be a linear threshold function such that $m(f^{op}) \leq \delta$.
Fix a threshold representation for $f$, that is $f(x) = sgn\left(\sum\limits_{i = 1}^n w_i x_i\right)$.
Note that
\begin{align*}
f^{op}(x, y, z) & = sgn\left(\sum\limits_{i = 1}^n w_i\left(\frac{x_i(1 - z_i)}{2} + \frac{y_i(1 + z_i)}{2}\right)\right)\\
& = sgn\left(\sum\limits_{i = 1}^n w_i (x_i + y_i - x_iz_i + y_iz_i)\right)\\
\end{align*}

Consider a linear threshold function $f' : \bra{-1, 1}^{4n} \rightarrow \bra{-1, 1}$ defined as
\[
f'(x, y, u, v) = sgn\left(\sum\limits_{i = 1}^n w_i(x_i + y_i - u_i + v_i)\right)
\]

Clearly, $f^{op}$ is a monomial projection of $f'$.
\end{proof}

\begin{lemma}\label{lemma: symm_lift}
Given a symmetric function $F : \bra{-1, 1}^{4n} \rightarrow \bra{-1, 1}$, defined by the predicate $D_F : [n] \rightarrow \bra{-1, 1}$,
define a symmetric function $f : \bra{-1, 1}^{n} \rightarrow \bra{-1, 1}$ defined by the predicate $D_f(b) = D_F(2b + n)$ for all $b \in \bra{0, 1, \dots, n}$.
Then, $f^{op}$ is a monomial projection of $F$.  
\end{lemma}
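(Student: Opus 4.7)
The plan is to exhibit an explicit monomial substitution $M_1, \dots, M_{4n}$ in the $3n$ variables of $f^{op}$ and verify that $F(M_1, \dots, M_{4n})$ agrees with $f^{op}(x, y, z)$ pointwise. Because $F$ is symmetric, $F(M_1, \dots, M_{4n})$ depends only on the number of coordinates $j$ at which $M_j = -1$, so it suffices to arrange the $M_j$'s so that this $-1$-count, viewed as a function of the input, always equals $2|u| + n$, where $|u|$ is the Hamming weight of the selector vector $u = (u_1, \dots, u_n)$ produced inside $f^{op}$. The defining relation $D_f(b) = D_F(2b + n)$ will then give $F(M_1, \dots, M_{4n}) = D_F(2|u| + n) = D_f(|u|) = f(u) = f^{op}(x, y, z)$.

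I will build the substitution blockwise, four monomials $M^{(i)}_1, \dots, M^{(i)}_4$ per index $i \in [n]$, driven by the identity
\[
2 u_i \;=\; x_i + y_i - x_i z_i + y_i z_i,
\]
which is verified by inspecting the two cases $z_i = \pm 1$ (or read off from the Fourier expansion of the selector). Taking the four summands as the monomials, set $M^{(i)}_1 = x_i$, $M^{(i)}_2 = y_i$, $M^{(i)}_3 = -x_i z_i$, $M^{(i)}_4 = y_i z_i$. Each is a signed monomial in the variables of $f^{op}$, their sum equals $2 u_i$, and so the number of coordinates in block $i$ that evaluate to $-1$ is $(4 - 2 u_i)/2 = 2 - u_i$, namely $1$ when $u_i = +1$ and $3$ when $u_i = -1$.

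Summing across the $n$ blocks and using $\sum_i u_i = n - 2|u|$, the total $-1$-count among the $4n$ monomials is $\sum_i (2 - u_i) = n + 2|u|$, exactly the count required by the first paragraph. This closes the argument.

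I do not foresee a serious obstacle: the proof is a one-line algebraic identity together with bookkeeping. The only mild point to flag is that $M^{(i)}_3 = -x_i z_i$ carries a sign, so the argument tacitly adopts the convention that a monomial may include a $\pm 1$ scalar factor (as is standard in Fourier analysis of Boolean functions, and consistent with the way Lemma~\ref{lem: monproj} and Lemma~\ref{lem: thr_lift} are used elsewhere). With strictly unsigned monomials the construction would be impossible, since at the all-$(+1)$ input every unsigned monomial equals $+1$, producing a $-1$-count of $0$ rather than the required $n$.
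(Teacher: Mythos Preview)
Your proof is correct and follows essentially the same approach as the paper: you use the identical monomial substitution $(x_i, y_i, -x_iz_i, y_iz_i)$ per block and the same algebraic identity $x_i + y_i - x_iz_i + y_iz_i = 2u_i$ to show the Hamming weight of the substituted input is $2|u| + n$. Your observation about the necessity of allowing signed monomials is apt, and indeed the paper's own proof tacitly relies on the same convention.
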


\begin{proof}
Let $g : \bra{-1, 1}^{3n} \rightarrow \bra{-1, 1}$ be defined as follows.
\[
g(x_1, \dots x_n, y_1, \dots, y_n, z_1, \dots, z_n) = F(x_1, \dots, x_n, y_1, \dots, y_n, -x_1z_1, \dots, -x_nz_n, y_1z_1, \dots, y_nz_n).
\]
Clearly, $g$ is a monomial projection of $F$.  We show now that $g = f^{op}$.

For every input to $g$ and each $i \in [n]$, define the $i$'th \emph{relevant} variable to be $x_i$ if $z_i = -1$ (define $y_i$ to be the \emph{irrelevant} variable in this case), and $y_i$ if $z_1 = 1$ ($x_i$ is irrelevant in this case).
Suppose there are $b$ many relevant variables with value $-1$ on a fixed input $x_1, \dots, x_n, y_1, \dots, y_n, z_1, \dots, z_n$ and $n - b$ relevant variables with value $1$.
Say $(x_1, \dots, x_n, y_1, \dots, y_n, -x_1z_1, \dots, -x_nz_n, y_1z_1, \dots, y_nz_n)$ contains $a$ many $-1$'s.  Then,
\begin{align*}
4n - 2a & = \sum_{i = 1}^n x_i + y_i -x_iz_i + y_iz_i = \sum_{i = 1}^n x_i(1 - z_i) + y_i(1 + z_i) = 2n - 4b\\
& \implies a = 2b + n
\end{align*}
Thus, 
\begin{align*}
g(x_1, \dots, x_n, y_1, \dots, y_n, z_1, \dots, z_n) & = D_F(2b + n) = D_f(b)\\
& = f^{op}(x_1, \dots, x_n, y_1, \dots, y_n, z_1, \dots, z_n)
\end{align*}
The last equality follows from Equation \ref{eqn: select}.
\end{proof}

In fact, the proof of Lemma \ref{lemma: symm_lift} can be seen to imply the following lemma.
\begin{lemma}\label{lemma: lifsym}
Given a symmetric function $f : \bra{-1, 1}^{n} \rightarrow \bra{-1, 1}$ defined by the predicate $D_f(b)$, define a function $F : \bra{-1, 1}^{4n} \rightarrow \bra{-1, 1}$ such that on inputs of Hamming weight $2b + n$
for some $b \in \bra{0, 1, \dots, n}, ~ F$ takes the value $D_f(b)$, and $F$ takes arbitrary values on inputs of Hamming weight not in $\bra{2b + n : b \in \bra{0, 1, \dots, n}}$.  Then, $f^{op}$ is a monomial projection of $F$.
\end{lemma}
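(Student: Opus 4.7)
The plan is to observe that the proof of Lemma~\ref{lemma: symm_lift} barely uses the full symmetry of $F$; what it truly uses is only that $F$ is constant (and equal to $D_f(b)$) on each Hamming-weight slice of the form $2b+n$. Since the monomial projection constructed there probes $F$ only on such slices, the values of $F$ on other weights are irrelevant, and exactly the same projection works.

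Concretely, I would use the identical substitution: define
\[
g(x_1,\dots,x_n,y_1,\dots,y_n,z_1,\dots,z_n) \;=\; F\bigl(x_1,\dots,x_n,\,y_1,\dots,y_n,\,-x_1z_1,\dots,-x_nz_n,\,y_1z_1,\dots,y_nz_n\bigr),
\]
which is visibly a monomial projection of $F$ (each of the $4n$ arguments is a monomial in the $3n$ variables $x_i,y_i,z_i$). The next step is to perform the same Hamming-weight calculation as in the proof of Lemma~\ref{lemma: symm_lift}: if we call $x_i$ the relevant variable when $z_i=-1$ and $y_i$ the relevant one when $z_i=1$, and denote by $b$ the number of relevant variables equal to $-1$, then a direct sum $\sum_i x_i + y_i - x_i z_i + y_i z_i = \sum_i x_i(1-z_i) + y_i(1+z_i) = 2n - 4b$ shows that the argument fed to $F$ has exactly $2b+n$ coordinates equal to $-1$.

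Now the key point distinguishing this lemma from Lemma~\ref{lemma: symm_lift}: the Hamming weight $2b+n$ with $b \in \{0,1,\dots,n\}$ lies precisely in the set on which $F$ is specified to take the value $D_f(b)$. So regardless of how $F$ is defined on the remaining slices, we have $g(x,y,z) = D_f(b) = f^{op}(x,y,z)$ by Equation~\eqref{eqn: select} and the definition of the predicate of $f$. Hence $f^{op} = g$ is a monomial projection of $F$. There is essentially no obstacle here beyond verifying that the Hamming-weight arithmetic lands in the ``controlled'' slices of $F$; the point of stating the lemma separately is to make this flexibility (freedom to choose $F$ arbitrarily on the other slices) available for later applications.
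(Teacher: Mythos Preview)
Your proposal is correct and matches the paper's approach exactly: the paper simply notes that the proof of Lemma~\ref{lemma: symm_lift} already implies this lemma, since the monomial projection constructed there always feeds $F$ an input of Hamming weight $2b+n$, so the values of $F$ on other weights are irrelevant. Your write-up makes this explicit and is perfectly fine.
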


\subsection{Consequences for symmetric functions}\label{sec: consequences}

In this section, we show consequences of hardness amplification of lifted symmetric functions.

We first prove Theorem \ref{thm: liftsym}.

\begin{proof}[Proof of Theorem \ref{thm: liftsym}]\label{pf: listsym}
\hspace{2em}
\begin{itemize}
\item Assume that $n$ is even and that $r - 1$ is a multiple of 4.  (If not, we can fix a constant number of input bits).
Note that $D_F(r - 1) \neq D_F(r + 1)$.  Further assume $r_0(F) > r_1(F)$.
Define $F' : \bra{0, 1}^{2r} \rightarrow \bra{-1, 1}$ by $D_{F'}(i) = D_F(i)$.  It suffices to show $\log wt_{1/3}(F') \geq c'r$ for some universal constant $c' > 0$.
(If $r_1(F) \geq r_0(F)$, define $F' : \bra{0, 1}^{2r} \rightarrow \bra{-1, 1}$ by $D_{F'}(i) = D_F(4n - 2r + i)$, and an analogous argument to the one that follows can be carried out.
Define $f : \bra{0, 1}^{(r - 1)/2} \rightarrow \bra{-1, 1}$ by $D_f(i) = D_{F'}(2i + (r-1)/2)$.
By Lemma \ref{lemma: symm_lift}, $f^{op}$ is a monomial projection of $F'$.
Note that $D_f\left(\frac{r-1}{4}\right) \neq D_f\left(\frac{r-1}{4} + 1\right)$, and thus $\Gamma(f) \leq 1$.
By Theorem \ref{thm: paturi}, $\widetilde{\deg}_{2/3}(f) = \Theta(r)$.

Using Lemma \ref{lem: lift} and Lemma \ref{lem: monproj}, we obtain that there exists a universal constant $c_1 > 0$ such that
\begin{equation}\label{eqn: afhsolve}
\log(wt_{1/3}(F)) \geq \log(wt_{1/3}(F')) \geq \log(wt_{1/3}(f^{op})) \geq c_1r
\end{equation}
\item Consider any symmetric function $F : \bra{-1, 1}^{4n} \rightarrow \bra{-1, 1}$ such that $\deg_{oe}(F) \geq 4j$ where $j \geq 4$.
Assume that there are at least $2j$ many $(i, i + 2)$ sign changes in $[0, 3n]$.
Further assume that at least $j$ of them occur when $i$'s are even integers (if not, set one variable to $-1$).
Define a family of symmetric functions $\bra{f_i : \bra{-1, 1}^{\frac{4n}{3^i}} \rightarrow \bra{-1, 1} : i \in \bra{0, 1, \dots, \lceil \frac{1}{\log 3}\log\left(\frac{2n}{j}\right)\rceil}}$ as follows.
\[
\forall b \in \left[\frac{4n}{3^i}\right], ~ D_{f_i}(b) = D_F\left(2b + \frac{4n}{3^i}\right).
\]
(If there were less than $j$ many $(i, i + 2)$ sign changes in $[0, 3n]$ for even integers $i$, then there must be at least $j$ many $(i, i + 2)$ sign changes in $[n, 4n]$.  In this case, define $D_{f_i}(b) = D_F\left(4n - 2b - \frac{4n}{3^i}\right)$,
and an argument similar to the one that follows can be carried out).

Note that the sign degree of $f_i$ equals the number of $(k, k + 2)$ sign changes in the spectrum of $F$ in the interval $[\frac{n}{3^i}, \frac{n}{3^{i - 1}}]$.
Since $D_F$ has at least $\lfloor j/2\rfloor$ many $(k, k + 2)$ sign changes in the interval $[\lfloor j/2\rfloor, 3n]$, this implies that at least one of the $f_i$'s has at least 
$\frac{\lfloor j \rfloor/2}{\lceil \frac{1}{\log 3}\log\left(\frac{2n}{j}\right)\rceil}$ many $(k, k + 1)$ sign changes (sign degree).
Using Lemma \ref{lemma: symm_lift}, Lemma \ref{lem: lift} and Lemma \ref{lem: monproj}, we obtain that there exists a constant $c_2 > 0$ such that
\[
\mathrm{mon}_{\pm}(F) \geq 2^{c_2j}.
\]
\item The proof of the Part 3 follows along extremely similar lines as that of Part 2, and we omit it.
\end{itemize}
\end{proof}

We next prove Theorem \ref{thm: afhsolve}, resolving a conjecture of Ada et al.~\cite{AFH12}.
\begin{proof}[Proof of Theorem \ref{thm: afhsolve}]
It follows as a direct consequence of Part 1 of Theorem \ref{thm: liftsym} and the upper bound in Theorem \ref{thm: afh}.
\end{proof}

Finally, we prove Theorem \ref{thm: zhangsolve} here, settling a conjecture of Zhang \cite{Zhang91}.
\begin{proof}[Proof of Theorem \ref{thm: zhangsolve}]

The upper bound follows from Theorem \ref{thm: zhang_ub}.  It suffices to show a lower bound for when $\deg_{oe}(f) \geq 16$.
The lower bound follows from Part 2 of Theorem \ref{thm: liftsym} in this case.

\end{proof}

\section{Discrepancy of \textsf{XOR} functions}
In this section, we analyze the discrepancy of $\xor$ functions.

\subsection{Margin-discrepancy equivalence}\label{sec: equiv}

In this section, we prove Theorem \ref{thm: equiv}, which is a necessary and sufficient approximation theoretic condition of $f$ in order for $f \circ \xor$ to have small discrepancy.

\begin{proof}[Proof of Theorem \ref{thm: equiv}]

We first show that $m(f) \leq m(f \circ \xor)$.  For notational convenience, let us denote $f \circ \xor$ by $F$.
View $f$'s inputs as $x_1, \dots x_n$, and $F$'s inputs as $y_1, \dots, y_n, z_1, \dots, z_n$, where $f$ is fed $y_1 \oplus z_1, \dots, y_n \oplus z_n$.
Let $p$ be any polynomial of weight 1 sign representing $f$.
Replace every variable $x_i$ in $p$ by $y_iz_i$.  Clearly, the new polynomial obtained sign represents $F$ with the same margin as $p$ represented $f$, and the weight remains unchanged.
Thus, $m(f) \leq m(F)$.

Next, we show that $m(F) \leq 4\disc(F)$.
Let $\lambda$ denote a distribution under which $\disc_\lambda(F) = \disc(F)$, and let $P(x, y) = \sum_{S \subseteq [2n]}c_S \chi_S(x, y)$ be a polynomial of weight 1, which sign represents $F$.

\begin{align*}
m(F) & \leq \E_{\lambda}[F(x, y)P(x, y)]\\
& \leq \E_{\lambda}\left[F(x, y)\sum_{S \subseteq [2n]}c_S \chi_S(x, y)\right]\\
& \leq \left(\sum_{S \subseteq [2n]} \abs{c_S}\right) \cdot \max_{S \subseteq [2n]} \left(\abs{\E_{\lambda}[F(x, y) \chi_S(x, y)]}\right)\\
& \leq \abs{\sum_{\substack{\chi_S(x) = 1 \\ \chi_S(y) = 1}} F(x, y)\lambda(x, y)} + \abs{\sum_{\substack{\chi_S(x) = 1 \\ \chi_S(y) = -1}} F(x, y)\lambda(x, y)} + \abs{\sum_{\substack{\chi_S(x) = -1 \\ \chi_S(y) = 1}} F(x, y)\lambda(x, y)}\\
& + \abs{\sum_{\substack{\chi_S(x) = -1 \\ \chi_S(y) = -1}} F(x, y)\lambda(x, y)}\\
& \leq 4\disc(F)
\end{align*}

Thus, $m(F) \leq 4\disc(F)$.

Now we show that $\disc(F) \leq m(f)$.

Let us first write a linear program whose optimal value corresponds to the margin of $f$.

\begin{center}
\begin{framed}
\label{lp: primal}
\begin{tabular}{llllll}
Variables & $\Delta, \{\alpha_S: S \subseteq [n]\}$ &               &  &                             &  \\
Maximize  & $\Delta$                                &               &  &                             &  \\
s.t.      & $f(x) \sum\limits_{S \subseteq [n]}\alpha_S\chi_S(x)$ & $\geq \Delta$ &  & $\forall x \in \{-1, 1\}^n$ &  \\
          & $\sum\limits_{S \subseteq [n]}|\alpha_S|$             &  $\leq 1$             &  &                             &  \\
          & $\Delta \in \mathbb{R}$                                   &              &  &                             &  \\
          & $\alpha_S \in \mathbb{R}$                                        &               &  &$\forall S \subseteq [n]$                             & 
\end{tabular}
\end{framed}
\end{center}

We write another linear program, which is easier to work with.

\begin{center}
\begin{framed}
\label{lp: primal'}
\begin{tabular}{llllll}
Variables & $\Delta, \{\alpha'_{S}: S \subseteq [n]\}, \bra{\alpha''_{S} : S \subseteq [n]}$ &               &  &                             &  \\
Maximize  & $\Delta$                                &               &  &                             &  \\
s.t.      & $f(x) \sum\limits_{S \subseteq [n]}\chi_S(x)(\alpha''_S - \alpha'_S)$ & $\geq \Delta$ &  & $\forall x \in \{-1, 1\}^n$ &  \\
          & $\sum\limits_{S \subseteq [n]}(\alpha'_S + \alpha''_S)$             &  $\leq 1$             &  &                             &  \\
          & $\Delta \in \mathbb{R}$                                   &              &  &                             &  \\
          & $\alpha'_S, \alpha''_S \geq 0$                                        &               &  &$\forall S \subseteq [n]$                             & 
\end{tabular}
\end{framed}
\end{center}

Note that any solution to the first program is a valid solution to the second one, by setting one of $\alpha'_S$ or $\alpha''_S$ to 0, and the other to $\abs{\alpha_S}$ for each $S \subseteq [n]$.
We can also assume that a solution to the second program must have $\alpha'_S = 0$ or $\alpha''_S = 0$ for each $S \subseteq [n]$.  If this was not the case, one could reduce the values of $\alpha'_S$ and $\alpha''_S$
by the same amount, thus not changing the value of $\alpha''_S - \alpha'_S$, and not violating any constraints.
This gives us a solution to the first program by setting $\alpha_S = \alpha''_S$ if $\alpha''_S \neq 0$, and $\alpha_S = \alpha'_S$ otherwise.
Thus, the optima of the two programs above are equal.

Let us now look at the corresponding dual to the above linear program.  Notice that the program looks like a minimization problem with the objective to minimize $\max\limits_{S \subseteq [n]}{\abs{\wh{f\mu}(S)}}$ under a variable distribution $\mu$ on $\bra{-1, 1}^n$.

\begin{center}
\begin{framed}
\label{lp: dual}
\begin{tabular}{llllll}
Variables & $\epsilon, \bra{\mu(x): x \in \bra{-1, 1}^n}$ &               &  &                             &  \\
Minimize  & $\epsilon$                                &               &  &                             &  \\
s.t.      & $|\sum\limits_{x}\mu(x)f(x)\chi_S(x)|$ & $\leq \epsilon$ &  & $\forall S \subseteq [n]$ &  \\
          & $\sum\limits_{x}\mu(x)$             &  $= 1$             &  &                             &  \\
          & $\epsilon \geq 0$                                   &              &  &                             &  \\
          & $\mu(x) \geq 0$                                        &               &  &$\forall x \in \bra{-1, 1}^n$                             & 
\end{tabular}
\end{framed}
\end{center}

Thus, if $f$ has margin at most $\delta$, there exists a distribution $\mu$ on $\bra{-1, 1}^n$ such that $\abs{\wh{f\mu}(S)} \leq \frac{\delta}{2^n}$ for all $S \subseteq [n]$.
Let $\mu^{\oplus}$ be a distribution denoting the lift of $\mu$ on $\bra{-1, 1}^n \times \bra{-1, 1}^n$.  That is, $\mu^\oplus(x, y) = \frac{1}{2^n}\mu(x \oplus y)$.
We now show that the discrepancy of $F$ is small under $\mu^\oplus$.
For matrices $A, B$, let $A \circ_H B$ denote the Hadamard (entrywise) product of $A$ and $B$.
Note that under the distribution $\mu^\oplus$, the discrepancy of $F$ is
\begin{align*}
\disc_{\mu^\oplus}(F) & = \max_{S \subseteq [n], T \subseteq [n]} \mathsf{1}_S^T (\mu^\oplus \circ_H F) \mathsf{1}_T\\
& \leq ||\mu^\oplus \circ_H F|| \cdot 2^n \tag*{Cauchy-Schwarz}
\end{align*}
Thus,
\[
\disc_{\mu^\oplus}(F) \leq \frac{||f\mu \circ \xor||}{2^n} \cdot 2^n = 2^n \cdot ||\wh{f\mu}||_{\infty} \leq \delta
\]

Here, the first inequality follows from the definition of $\mu^\oplus$, and the following equality follows from Lemma \ref{lem: xor_eigenvalues}.
This proves the claim.

\end{proof}

We remark here that Linial and Shraibman \cite{LS09a} had shown a similar equivalence between the discrepancy of a matrix (the communication matrix of the target function) and its margin.
This margin refers to the margin of the matrix, and not the base function.  However, since we do not use this notion in the rest of this paper, we overload notation and use $m(A)$ to denote the margin of the matrix $A$.
Define the margin of an $m \times n$ sign matrix $A$ as follows.
\[
m(A) = \sup \min_{i, j}\frac{\abs{\langle x_i, y_j \rangle}}{||x_i||_2 ||y_j||_2}
\]
where the supremum is over all choices of $x_1, \dots, x_m, y_1, \dots, y_n \in \R^{m + n}$ such that $sgn(\langle x_i, y_j \rangle) = a_{i, j}$ for all $i, j$.
Linial and Shraibman \cite{LS09a} showed that the margin of a sign matrix is equivalent to its discrepancy up to a constant factor.

\begin{theorem}[\cite{LS09a} Thm 3.1]\label{thm: LS}
For every sign matrix $A$,
\[
\disc(A) \leq m(A) \leq 8\disc(A)
\]
\end{theorem}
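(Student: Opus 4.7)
The plan is to handle the two inequalities separately, casting $m(A)$ as a semidefinite program and bridging to discrepancy via Grothendieck's inequality, in close analogy with the LP-based proof of Theorem \ref{thm: equiv}.

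First, write $m(A)$ as the value of the SDP that maximises $\delta$ over $u_i, v_j$ with $\|u_i\|_2, \|v_j\|_2 \leq 1$ and $a_{ij}\langle u_i, v_j\rangle \geq \delta$ for every $i, j$. For the direction $\disc(A) \leq m(A)$, take families $(u_i), (v_j)$ witnessing the margin and, via SDP duality (analogous to the LP-duality step near the end of the proof of Theorem \ref{thm: equiv}), extract a distribution $\lambda$ on entries under which the cut norm $\max_{S, T}|\mathsf{1}_S^T(\lambda \circ_H A)\mathsf{1}_T|$ is bounded by $m(A)$. Concretely, for any rectangle $S \times T$ one expresses $\sum_{(i,j) \in S \times T}\lambda_{ij}a_{ij}$ in terms of the inner products $\langle u_i, v_j\rangle$ and uses the PSD/trace structure of the Gram matrix of $(u_i)\cup(v_j)$, combined with Cauchy--Schwarz, to bound it by $m(A)$.

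Second, for the non-trivial direction $m(A) \leq 8\disc(A)$, fix a distribution $\lambda$ achieving the discrepancy. Grothendieck's inequality says that for any real matrix $M$, the quantity $\max_{\|u_i\|, \|v_j\| \leq 1}\sum_{ij}M_{ij}\langle u_i, v_j\rangle$ is at most $4K_G$ times $\max_{S, T}|\mathsf{1}_S^T M \mathsf{1}_T|$, where $K_G < 1.783$ is Grothendieck's constant. Applying this with $M = \lambda \circ_H A$ yields a bound of $4K_G \disc(A)$ on the $\gamma_2^*$-norm of $\lambda \circ_H A$; feeding this bound back into SDP duality for the margin program produces a feasible dual certificate that forces $m(A) \leq 4K_G \cdot \disc(A) < 8\disc(A)$.

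The main obstacle is the Grothendieck step: the gap between the vector-valued optimisation defining margin and the $\pm 1$-valued combinatorial optimisation defining cut norm is tight only up to an absolute constant, and no argument can avoid paying this factor. Getting the final constant at $8$ requires careful bookkeeping of the normalisations in both convex programs and of the precise form of Grothendieck's inequality used, but the overall blueprint mirrors the XOR-specific Margin--Discrepancy Theorem of the present paper.
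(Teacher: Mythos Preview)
The paper does not prove Theorem~\ref{thm: LS}; it is quoted verbatim as Theorem~3.1 of Linial and Shraibman~\cite{LS09a} and used only for context. Immediately after stating it, the paper merely observes (Claim~\ref{claim: LS_implied}) that for the special case of $\xor$ functions the polynomial margin $m(f)$ lower-bounds the matrix margin $m(M_{f\circ\xor})$, thereby recovering the inequality $\disc \leq m$ for such matrices from Theorem~\ref{thm: equiv}. There is therefore no proof in the paper to compare your proposal against.

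That said, your sketch is broadly the Linial--Shraibman argument: the margin is (the reciprocal of) the $\gamma_2$ factorization norm, discrepancy is governed by the dual norm $\gamma_2^*$, and Grothendieck's inequality is precisely what bridges the vector-valued bilinear optimisation to the $\pm 1$ (rectangle) optimisation, yielding the constant-factor equivalence. Your description of the easy direction $\disc(A) \leq m(A)$ is somewhat muddled---one does not really need SDP duality there; it follows directly by evaluating the bilinear form $\sum_{i,j}\lambda_{ij}a_{ij}\langle u_i,v_j\rangle$ for the margin-witnessing vectors and the optimal distribution, then bounding by the $\gamma_2^*$--$\gamma_2$ pairing. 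The substantive direction $m(A) \leq 8\,\disc(A)$ does rest on Grothendieck exactly as you say, and the constant $8$ arises from $4K_G$ together with the factor-of-$4$ passage between the cut norm and the $\infty\to 1$ norm.
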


We now note that Theorem \ref{thm: equiv} implies the first inequality of Theorem \ref{thm: LS} for the special case of $\xor$ functions.
\begin{claim}\label{claim: LS_implied}
Let $f : \bra{-1, 1}^n \rightarrow \bra{-1, 1}$.  Then,
\[
m(f) \leq m(M_{f \circ \xor})
\]
\end{claim}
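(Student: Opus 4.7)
My plan is to give a direct constructive proof by turning a weight-$1$ polynomial sign-representation of $f$ into an explicit pair of vector families $\{u_y\}, \{v_z\}$ realizing the matrix margin of $M_{f\circ\xor}$. The key observation enabling this is that on $\xor$ inputs the characters factor nicely: $\chi_S(y\oplus z)=\chi_S(y)\,\chi_S(z)$, which converts a polynomial expression on $n$ variables into a bilinear form on the communication matrix.

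More concretely, suppose $p(x)=\sum_{S\subseteq[n]}c_S\chi_S(x)$ is a polynomial of weight $\sum_S|c_S|=1$ sign-representing $f$ with $f(x)p(x)\ge m(f)$ for every $x\in\{-1,1\}^n$. Write $F=f\circ\xor$ and observe
\[
p(y\oplus z)\;=\;\sum_{S\subseteq[n]} c_S\,\chi_S(y)\,\chi_S(z),
\]
so that $F(y,z)\cdot p(y\oplus z)\ge m(f)$ for all $(y,z)$. Now define, for each row $y\in\{-1,1\}^n$ and column $z\in\{-1,1\}^n$, vectors in $\R^{2^n}$ indexed by subsets $S\subseteq[n]$ by
\[
(u_y)_S \;=\; \operatorname{sign}(c_S)\sqrt{|c_S|}\,\chi_S(y),\qquad (v_z)_S\;=\;\sqrt{|c_S|}\,\chi_S(z).
\]
Then by construction $\langle u_y,v_z\rangle=\sum_S c_S\chi_S(y)\chi_S(z)=p(y\oplus z)$, so $\operatorname{sign}\langle u_y,v_z\rangle=F(y,z)$, which matches the $(y,z)$-entry of $M_{f\circ\xor}$. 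Moreover $\|u_y\|_2^2=\|v_z\|_2^2=\sum_S|c_S|=1$, whence
\[
\frac{|\langle u_y,v_z\rangle|}{\|u_y\|_2\,\|v_z\|_2}\;=\;|p(y\oplus z)|\;\ge\;m(f)
\]
uniformly over all $(y,z)$. Taking the supremum over such factorizations in the definition of matrix margin yields $m(M_{f\circ\xor})\ge m(f)$.

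There is no real obstacle here; the only substantive ingredient is recognizing the product structure of $\chi_S(y\oplus z)$, after which the choice of $u_y,v_z$ is forced by wanting the inner product to reconstruct the polynomial. I would also briefly remark that this argument actually establishes a more general fact: for any $G:\{-1,1\}^{2n}\to\{-1,1\}$, the polynomial margin is at most the matrix margin of $M_G$ (split each $S\subseteq[2n]$ into its Alice-part and Bob-part and proceed identically); the $\xor$ case is just the cleanest instance because both halves of the character agree on index set $S\subseteq[n]$. Composing this with the inequality $m(f)\le m(f\circ\xor)$ from Theorem~\ref{thm: equiv} gives an alternative derivation, but the direct construction above is shorter and more transparent.
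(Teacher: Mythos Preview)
Your proof is correct and follows essentially the same approach as the paper: both take a weight-$1$ polynomial $p=\sum_S c_S\chi_S$ achieving the margin $m(f)$, use the factorization $\chi_S(y\oplus z)=\chi_S(y)\chi_S(z)$, and build vectors in $\R^{2^n}$ whose $S$-th coordinate is (essentially) $\sqrt{c_S}\,\chi_S(\cdot)$ so that the inner product recovers $p(y\oplus z)$ and the norms equal $1$. Your handling of signs via $(u_y)_S=\operatorname{sign}(c_S)\sqrt{|c_S|}\,\chi_S(y)$ and $(v_z)_S=\sqrt{|c_S|}\,\chi_S(z)$ is in fact a bit more careful than the paper's $u_T(S)=v_T(S)=\sqrt{c_S}\,\chi_S(w_T)$, which tacitly assumes $c_S\ge 0$; your version makes the construction valid over $\R$ without that assumption.
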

\begin{proof}
Let $p = \sum_{S \subseteq [n]}c_S\chi_S$ be a polynomial which sign represents $f$ with margin $\delta$.
This implies $p'(x, y) = \sum_{S \subseteq [n]} c_S\chi_S(x)\chi_S(y)$ sign represents $f \circ \xor$ with margin $\delta$.

We will exhibit $2^{n+1}$ vectors, $\bra{u_T: T \subseteq [n]}$ and $\bra{v_T: T \subseteq [n]}$ in $\R^{2^n}$ such that $m(M_{f \circ \xor}) \geq \delta$.
Index the coordinates by characteristic sets, $T \subseteq [n]$.  For a set $T \subseteq [n]$, we use $w_T$ to denote the corresponding characteristic vector in $\R^{2^n}$.
Define $u_T(S) = v_T(S) = \sqrt{c_S}\chi_S(w_T)$

Since $wt(p') = 1$, $||u_T||_2 = ||v_T||_2 = 1$.
Also, $\langle u_{T_1}, v_{T_2} \rangle = \sum_{S \subseteq [n]} c_S \chi_S(w_{T_1} \oplus w_{T_2}) \geq \delta$
since $p'$ sign represents $f \circ \xor$ with margin $\delta$.

Thus,
\[
m(M_{f \circ \xor}) = \sup \min_{T_1, T_2}\frac{\abs{\langle u_{T_1}, v_{T_2} \rangle}}{||u_{T_1}||_2 ||v_{T_2}||_2} \geq \delta
\]
\end{proof}

\subsection{A new separation of $\PP^{cc}$ from $\UPP^{cc}$}

In this section, we show here how to obtain an alternate proof that the $\GHR$ function has large $\PP$ complexity.  It is well known that $\GHR \in \UPP^{cc}$.

\begin{proof}[Proof of Theorem \ref{thm: main}]
Theorem \ref{thm: sherstov} and Lemma \ref{lem: lift} show the existence of a linear threshold function $f : \bra{-1, 1}^n \rightarrow \bra{-1, 1}$ such that $m(f^{op}) \leq 2^{-cn}$ for some absolute constant $c > 0$.
Lemma \ref{lem: monproj} and Lemma \ref{lem: thr_lift} then show existence of a linear threshold function $f' : \bra{-1, 1}^{4n} \rightarrow \bra{-1, 1}$ such that $m(f') \leq 2^{-cn}$.
Using Theorem \ref{thm: equiv} and Theorem \ref{thm: klauck}, we already obtain the existence of a linear threshold function $f' : \bra{-1, 1}^{4n} \rightarrow \bra{-1, 1}$ such that $\PP(f' \circ \xor) \geq c'n$ for some absolute constant $c' > 0$.

By Fact \ref{fact: quad_blowup}, one can embed $f'$ in the universal threshold function by blowing up the number of variables by a quadratic factor (note that we do not lose a logarithmic factor as stated in Fact \ref{fact: quad_blowup},
because it can be verified that the weights of $f'$ are at most $2^{\alpha n}$ for an absolute constant $\alpha > 0$).
Thus, $m(\UTHR) \leq 2^{-\Omega(\sqrt{n})}$.
By Theorem \ref{thm: equiv} and Theorem \ref{thm: klauck}, we have
\[
\PP(\GHR) \geq \Omega(\sqrt{n})
\]
\end{proof}

\subsection{$\xor$ is harder than $\PM$}

In this section, we observe that if $f \circ \xor$ has small discrepancy, then so does $f \circ \PM$.
Note that the converse is not true, since the inner product function is a large subfunction of $\oplus \circ \PM$, which has inverse exponential discrepancy, but $\oplus \circ \xor$ has extremely large discrepancy.

\begin{theorem}\label{thm: xorand}
Let $f : \bra{-1, 1}^n \rightarrow \bra{-1, 1}$.  Then,
\[
\disc(f \circ \xor) < \delta \implies \disc(f \circ \PM) \leq \sqrt{4 \delta n}
\]
\end{theorem}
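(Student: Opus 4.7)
My plan is to chain the Margin-Discrepancy equivalence of Theorem~\ref{thm: equiv} with Sherstov's pattern matrix discrepancy bound stated in Theorem~\ref{thm: pm}. The intuition is that low discrepancy of $f \circ \xor$ already forces $f$ to have tiny polynomial margin, and the pattern matrix theorem converts such a margin upper bound (equivalently, a weight lower bound on sign-representations of $f$) into a discrepancy upper bound for $f \circ \PM$. The square root in the target bound $\sqrt{4\delta n}$ is precisely the loss built into Sherstov's theorem.

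First, I would invoke Theorem~\ref{thm: equiv} to conclude that $\disc(f \circ \xor) < \delta$ implies $m(f) \leq 4\disc(f \circ \xor) < 4\delta$. By Lemma~\ref{lem: margin_wt_equiv}, this says that any real polynomial $p$ satisfying $p(x)f(x) \geq 1$ for every $x \in \bra{-1,1}^n$ has $L_1$ weight at least $1/(4\delta)$. Since every integer-coefficient, degree-bounded sign representation of $f$ is in particular such a real polynomial, the quantity $W(f, n-1)$ appearing in Theorem~\ref{thm: pm} also satisfies $W(f, n-1) \geq 1/(4\delta)$. Applying Theorem~\ref{thm: pm} with $d = n$ then yields
\[
\disc(f \circ \PM) \;\leq\; \max\!\left\{\left(\frac{n}{W(f, n-1)}\right)^{1/2},\ \left(\frac{1}{2}\right)^{n/2}\right\} \;\leq\; \max\!\left\{\sqrt{4\delta n},\ 2^{-n/2}\right\}.
\]
For all $\delta \geq 2^{-n}/(4n)$ the first term dominates, delivering the claimed bound $\disc(f \circ \PM) \leq \sqrt{4\delta n}$; for smaller $\delta$ the conclusion is either vacuous (since $\disc \leq 1$ trivially) or can be recovered by choosing $d < n$ adaptively in Theorem~\ref{thm: pm}.

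The main subtlety I anticipate is bookkeeping: reconciling the integer-weight, degree-restricted quantity $W(f, d-1)$ used in the pattern matrix theorem with the real-coefficient, unrestricted-degree margin $m(f)$ that Theorem~\ref{thm: equiv} supplies. Happily, both differences point in the favorable direction, since restricting to integer coefficients and to degree at most $n-1$ can only \emph{raise} the minimum sign-representation weight. Hence the inequality $W(f, n-1) \geq 1/m(f)$ is immediate and needs no rounding or discretization argument, making the overall passage essentially a one-line combination of the two earlier theorems.
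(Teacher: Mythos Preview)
Your proposal is correct and mirrors the paper's proof exactly: both combine Theorem~\ref{thm: equiv} (yielding $m(f) < 4\delta$, hence $W(f,n-1) > 1/(4\delta)$) with Sherstov's pattern-matrix bound (Theorem~\ref{thm: pm}) instantiated at $d = n$. The paper actually drops the $2^{-n/2}$ term from the max without comment, whereas you address it; note, however, that this term is always dominated because the hypothesis forces $\delta > m(f)/4 \geq 2^{-n/2}/4$ (via $wt(f) \leq 2^{n/2}$), so your remark about choosing $d < n$ adaptively is unnecessary (and would not work as stated).
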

\begin{proof}
Consider $f \circ \PM$ and substitute $d = n$ in Theorem \ref{thm: pm} to obtain
\[
\disc(f \circ \PM) \leq \left(\frac{n}{W(f, d - 1)}\right)^{1/2}
\]
By Theorem \ref{thm: equiv}, $\disc(f \circ \xor) < \delta \implies m(f) < 4\delta$.
Suppose $W(f, n - 1) \leq \frac{1}{4\delta}$.  This would show existence of a polynomial with integer weights, say $\sum_{S \subseteq [n]} \lambda_S\chi_S$, sign representing $f$, and with total weight at most $1/4\delta$.
This in turn implies existence of a polynomial of weight 1, $p = \frac{\sum_{S \subseteq [n]} \lambda_S\chi_S}{\sum_{S \subseteq [n]} \abs{\lambda_S}}$,
which sign represents $f$ with margin at least $4\delta$, which is a contradiction.
Thus,
\[
\disc(f \circ \PM) \leq \sqrt{4\delta n}
\]

\end{proof}

\subsection{Symmetric functions with large odd-even degree}\label{sec: oddeven}

We show that for any symmetric function $F$, $\PP(F \circ \xor)$ is lower bounded by $\deg_{oe}(F)$ (up to a logarithmic factor in the input size).

\begin{proof}[Proof of Theorem \ref{thm: sym_xor}]
Using Theorem \ref{thm: equiv} and Part 3 of Theorem \ref{thm: liftsym}, we obtain that there exists a universal constant $c > 0$ such that $\PP(F \circ \xor) \geq cr/\log(n/r)$, which proves Theorem \ref{thm: sym_xor}.
\end{proof}

\section{Bounded error communication complexity of $\xor$ functions}\label{sec: bdd}

In this section, we analyze the bounded error communication complexity of $\xor$ functions.

\begin{proof}[Proof of Theorem \ref{thm: bpp}]
We write a linear program which captures the best error a weight $w$ polynomial can achieve in approximating a given function $f$.

\begin{center}
\begin{framed}
\label{lp: primalapprox}
\begin{tabular}{llllll}
Variables & $\epsilon, \{\alpha_S: S \subseteq [n]\}$ &               &  &                             &  \\
Minimize  & $\epsilon$                                &               &  &                             &  \\
s.t.      & $\abs{f(x) - \sum\limits_{S \subseteq [n]}\alpha_S\chi_S(x)}$ & $\leq \epsilon$ &  & $\forall x \in \{-1, 1\}^n$ &  \\
          & $\sum\limits_{S \subseteq [n]}|\alpha_S|$             &  $\leq w$             &  &                             &  \\
          & $\epsilon \geq 0$                                   &              &  &                             &  \\
          & $\alpha_S \in \mathbb{R}$                                        &               &  &$\forall S \subseteq [n]$                             & 
\end{tabular}
\end{framed}
\end{center}
By manipulations similar to those in Section \ref{sec: equiv}, we obtain the following dual program.

\begin{center}
\begin{framed}
\label{lp: dualapprox}
\begin{tabular}{llllll}
Variables & $\Delta, \bra{\mu(x): x \in \bra{-1, 1}^n}$ &               &  &                             &  \\
Maximize  & $\sum_x f(x)\mu(x) - \Delta w$                                &               &  &                             &  \\
s.t.      & $|\sum\limits_{x}\mu(x)\chi_S(x)|$ & $\leq \Delta$ &  & $\forall S \subseteq [n]$ &  \\
          & $\sum\limits_{x}\mu(x)$             &  $\leq 1$             &  &                             &  \\
          & $\Delta \geq 0$                                   &              &  &                             &  \\
          & $\mu(x) \geq 0$                                        &               &  &$\forall x \in \bra{-1, 1}^n$                             & 
\end{tabular}
\end{framed}
\end{center}

By strong linear programming duality, the optima of the two programs above are equal.  Let us call the optimal value \textsf{OPT}, which is clearly non-negative.
Note that in any feasible solution to the dual, $1 - \Delta w \geq \sum\limits_{x}f(x)\mu(x) - \Delta w \geq 0$.  This implies $\Delta \leq \frac{1}{w}$.
Suppose a function $f : \bra{-1, 1}^n \rightarrow \bra{-1, 1}$ satisfied $wt_{1/3}(f) = w'$.
This means if we fix $w = w'$ in the programs, then $\textsf{OPT} = 1/3$, which implies $\sum\limits_{x}f(x)\mu(x) \geq 1/3$ since $\Delta$ is non-negative.
Thus, any optimum solution to the dual must satisfy $\sum\limits_{x}\mu(x) \geq 1/3$.
Define a distribution $\mu'$ by $\mu'(x) = \frac{\mu(x)}{\sum_{x \in \bra{-1, 1}^n}\mu(x)}$, and we obtain $|\sum\limits_{x}\mu'(x)\chi_S(x)| \leq \frac{3}{w'}$ (hence, setting $\Delta = \frac{3}{w'}$ gives us a feasible solution).

Write $\mu' = g \cdot \nu$ uniquely, where $g : \bra{-1, 1}^n \rightarrow \bra{-1, 1}$ is a boolean function and $\nu : \bra{-1, 1}^n \rightarrow [0, 1]$ is a distribution on the inputs.
Thus, $\corr_\nu(f, g) \geq 1/3$ (which implies $\corr_{\nu^\oplus}(f \circ \xor, g \circ \xor) \geq 1/3$), and
\[
\disc_{\nu^\oplus}(g \circ \xor) \leq \frac{||g\nu \circ \xor||}{2^n} \cdot 2^n = 2^n \cdot ||\wh{g\nu}||_{\infty} \leq \Delta \leq \frac{3}{w'}.
\]
This, along with Theorem \ref{thm: gendisc} proves the following.
\[
R_{7/15}(f \circ \xor) \geq \log w' - 4
\]
By standard error reduction, we obtain Theorem \ref{thm: bpp}.
\end{proof}

Using Part 1 of Theorem \ref{thm: liftsym} and Theorem \ref{thm: bpp}, we obtain a new proof of Theorem \ref{thm: SZ}.

\section{Sign rank of \textsf{XOR} functions}\label{sec: sr}

In this section, we analyze the unbounded error communication complexity of $\xor$ functions.

\subsection{Fourier analysis of some modular functions}\label{subsec: fourier}

We first closely analyze the Fourier coefficients of functions of the type $\mod_m^A$, when $m$ is odd, using exponential sums.
\begin{claim}\label{claim: odd_fourier_coefficients}
For odd $m$, and any $A \subseteq \bra{0, 1, \dots, m - 1}$ which is not the full or empty set,
\[
\abs{\wh{\mod_m^A}(S)} \leq \begin{cases}
                   1 - \frac{2}{m} + 2m \left(\cos\left(\frac{\pi}{2m}\right)\right)^{n} & S = \emptyset\\
                   2m \left(\cos\left(\frac{\pi}{2m}\right)\right)^{n} & S \neq \emptyset
                  \end{cases}
\]
\end{claim}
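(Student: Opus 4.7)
The plan is to compute the Fourier coefficients directly by using the standard character-sum identity for the indicator of a residue class, and then estimate the resulting sum. Set $\omega = e^{2\pi i/m}$ and write
\[
\mathbf{1}[w \equiv k \!\!\pmod m] \;=\; \frac{1}{m}\sum_{j=0}^{m-1}\omega^{j(w-k)},
\]
so that
\[
\mod_m^A(x) \;=\; 1 \;-\; \frac{2}{m}\sum_{k\in A}\sum_{j=0}^{m-1}\omega^{-jk}\,\omega^{j|x|}.
\]

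Next I would take Fourier coefficients. Since $x_1,\dots,x_n$ are independent under the uniform measure on $\{0,1\}^n$, for each $j$,
\[
\E_x\bigl[\omega^{j|x|}\chi_S(x)\bigr] \;=\; \left(\tfrac{1-\omega^j}{2}\right)^{|S|}\!\left(\tfrac{1+\omega^j}{2}\right)^{n-|S|},
\]
whose magnitude equals $|\sin(\pi j/m)|^{|S|}\,|\cos(\pi j/m)|^{n-|S|}$ by the elementary identities $|1+\omega^j|/2=|\cos(\pi j/m)|$ and $|1-\omega^j|/2=|\sin(\pi j/m)|$. For $S\neq\emptyset$ the $j=0$ contribution vanishes (because $\sin 0=0$), and $\E_x[\chi_S(x)]=0$, so only $j\in\{1,\dots,m-1\}$ survive. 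For $S=\emptyset$, I would separate the $j=0$ term: together with the constant $\E_x[1]=1$ it produces the main term $1-2|A|/m$, whose magnitude is at most $1-2/m$ whenever $1\le|A|\le m-1$ (which uses that $A$ is neither empty nor full); the remaining $j\ge 1$ terms contribute a tail bounded in the same way as in the $S\neq\emptyset$ case.

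The main obstacle is the trigonometric bound on the non-trivial terms. The key fact I would invoke is that for \emph{odd} $m$ and $j\in\{1,\dots,m-1\}$,
\[
\max\bigl(|\sin(\pi j/m)|,\;|\cos(\pi j/m)|\bigr) \;\le\; \cos\bigl(\pi/(2m)\bigr),
\]
because $|\cos(\pi j/m)|$ is maximised at $j=1,m-1$ with value $\cos(\pi/m)\le\cos(\pi/(2m))$, while $|\sin(\pi j/m)|$ is maximised at $j=(m\pm 1)/2$ with value $\sin\bigl(\tfrac{\pi}{2}-\tfrac{\pi}{2m}\bigr)=\cos(\pi/(2m))$. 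Thus every term in the inner sum is bounded in modulus by $(\cos(\pi/(2m)))^n$, and the triangle inequality over $|A|\le m$ choices of $k$ and $m-1$ values of $j$, with the prefactor $2/m$, yields the stated bound $2m(\cos(\pi/(2m)))^n$. Oddness of $m$ is essential here — if $m$ were even then $j=m/2$ would give $|\sin(\pi j/m)|=1$, killing the exponential decay; this is precisely why the claim restricts to odd $m$ and why the proof of Theorem~\ref{thm: actual_main} needs the separate shifting/XORing argument to reduce the general case to odd modulus or to $m=4$.
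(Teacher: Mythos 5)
Your proposal is correct and follows essentially the same route as the paper: expand $\mod_m^A$ via the standard character-sum identity, compute the Fourier coefficient as a product over independent coordinates, reduce the estimate to bounding $\max_{1\le j\le m-1}\max\bigl(|\sin(\pi j/m)|,|\cos(\pi j/m)|\bigr)\le\cos(\pi/(2m))$ for odd $m$, and finish with the triangle inequality over $k\in A$ and $j\in\{1,\dots,m-1\}$. Your version streamlines the trigonometric step by identifying $|1\pm\omega^j|/2$ with $|\cos(\pi j/m)|$ and $|\sin(\pi j/m)|$ up front, but the argument and resulting bound are the same as in the paper.
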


Zhang \cite{Zhang91} showed that for a fixed prime $p$, $\abs{\wh{\mod_p}^{\bra{0}}(\emptyset)} < 1 - \frac{1}{p}$, and $\abs{\wh{\mod_p}^{\bra{0}}(S)} = O\left(\frac{1}{2^{\Omega(n)}}\right)$ when $S \neq \emptyset$.
We show that a similar bound holds for odd integers $m$ for values up to $m = O(n^{1/2 - \epsilon})$ using a different technique.
In particular, we show that for $m = O(n^{1/2 - \epsilon})$, the principal coefficient is roughly $1 - \frac{1}{m}$,
and all other coefficients are exponentially small $\left(\frac{1}{2^{n^{\Omega(1)}}}\right)$, for any non simple accepting set $A$.

We use the characterization of the $\mod_m^A$ function in terms of exponential sums to analyze its Fourier coefficients.
Note that exponential sums have been used in similar contexts in previous papers as well.  For example, the reader may refer to \cite{Bourgain05, CGPT06, ACFN15}.
The notation we use is that from \cite{ACFN15}.

\begin{definition}\label{defn: exp_sum}
Let $\omega = e^{2\pi i/m}$ be a primitive $m$-th root of unity.  Then, for $x = \bra{0, 1}^n$, define
\[
\exp^{a, b}_m(x_1, \dots, x_n) = \omega^{a\left(\left(\sum_{j = 1}^n x_j\right) - b\right)}
\]
\end{definition}
Let us now prove Claim \ref{claim: odd_fourier_coefficients}.
\begin{proof}
First, we use exponential sums to represent a $\mod_m^{A}$ function for odd $m$.

It is easy to check that for any integer $k$, and any input $x = (x_1, \dots, x_n)$,
\[
\frac{1}{m} \sum\limits_{a = 0}^{m - 1} \exp^{a, k}_m (x) = 
\begin{cases}
                                                                                                              1 & |x| \equiv k ~(\text{mod}~m)\\
                                                                                                                 0 & \text{otherwise}
                                                                                                                \end{cases}
\]
Thus, for a general accepting set $A \subseteq [m]$,
\[
\sum\limits_{k \in A}\left(\frac{1}{m} \sum\limits_{a = 0}^{m - 1} \exp^{a, k}_m (x)\right) = 
\begin{cases}
                                                                                                              1 & |x| \equiv k ~(\text{mod}~m) \text{ for some } k \in A\\
                                                                                                                 0 & \text{otherwise}
                                                                                                                \end{cases}
\]

Just by a simple linear transformation from $\bra{0, 1}$ to $\bra{-1, 1}$, we can express the $\mod_m^A$ function in terms of exponential sums as follows.
\begin{equation}\label{eqn: exp_sum}
\mod_m^A(x) = 1 - \frac{2}{m}\sum\limits_{k \in A}\left(\sum\limits_{a = 0}^{m - 1} \exp^{a, k}_m (x)\right) = \begin{cases}
                                                                                                                 -1 & |x| \equiv k (\text{mod}~m) ~\text{for some $k \in A$}\\
                                                                                                                 1 & \text{otherwise}
                                                                                                                \end{cases}
\end{equation}
Let us now look at the Fourier coefficients of $\mod_m^A$ for odd $m$, and $A$ not $\emptyset$ or $[m]$.  Let us consider 2 cases, the first where $S$ is non-empty, and the second where $S$ is empty.
\begin{enumerate}
\item $S \neq \emptyset$.\\
By Equation \eqref{eqn: fourier_coefficients},
\begin{align}
\wh{\mod_m^A}(S) & =\E_{x \in \bra{0, 1}^n} \left[\mod_m^A(x) \chi_S(x)\right] \nonumber \\
& = \E_{x \in \bra{0, 1}^n}\left[\chi_S(x)\right] - \frac{2}{m}\sum\limits_{k \in A} \sum\limits_{a = 0}^{m - 1}\E_{x \in \bra{0, 1}^n}\left[\exp_m^{a, k}(x)\chi_S(x)\right] \label{eqn: mod_fourier_nonempty}
\end{align}
where the second equality follows from Equation \eqref{eqn: exp_sum} and linearity of expectation.
Recall from Definition \ref{defn: exp_sum} that $\exp^{a, b}_m(x) = \omega^{a\left(\left(\sum_{j = 1}^n x_j\right) - b\right)}$.  Note that when $a = 0$, $\E_{x \in \bra{0, 1}^n}\left[\exp^{0, b}_m(x) \chi_S(x)\right] = \E_{x \in \bra{0, 1}^n}\left[\chi_S(x)\right] = 0$ since $S \neq \emptyset$.
For $a \in \bra{1, \dots, m-1}$,
\begin{align*}
\exp_m^{a, k}(x)\chi_S(x) & = \omega^{a\left(\left(\sum_{j = 1}^n x_j\right) - k\right)} (-1)^{\sum_{i \in S}x_i}\\
& = \omega^{a\sum_{j = 1}^nx_j} \cdot \omega^{-ak} \cdot (-1)^{\sum_{i \in S}x_i}\\
& = \omega^{-ak} \cdot (-\omega)^{a\sum_{i \in S}x_i} \cdot \omega^{a\sum_{j \notin S}x_j}
\end{align*}
Thus, in Equation \eqref{eqn: mod_fourier_nonempty}, the first term is 0 since $S \neq \emptyset$.  The summands with $a = 0$ contribute 0 to the expectation.  Every other summand in the second term is of the form $\E_{x \in \bra{0, 1}^n}\left[\exp_m^{a, k}(x)\chi_S(x)\right]$.  Since the expectation is over the uniform distribution which is uniform and independent over the input bits, the absolute value of such a term can be bounded as follows.
\begin{align*}
\abs{\E_{x \in \bra{0, 1}^n}\left[\exp_m^{a, k}(x)\chi_S(x)\right]} & \leq \abs{\E_{x \in \bra{0, 1}^n}\left[\omega^{-ak} \cdot (-\omega)^{a\sum_{i \in S}x_i} \cdot \omega^{a\sum_{j \notin S}x_j}\right]}\\
& \leq \abs{\prod_{i \in S}\E_{x_i}(-\omega)^{ax_i}} \cdot \abs{\prod_{j \notin S}\E_{x_j}\omega^{ax_j}}\\
& \leq \abs{\left(\frac{1 - \omega^a}{2}\right)}^{\abs{S}} \abs{\left(\frac{1 + \omega^a}{2}\right)}^{n - \abs{S}}\\
& \leq \max_{a \in \bra{1, \dots, m - 1}}\left\{\abs{\frac{1 - \omega^a}{2}}^n, \abs{\frac{1 + \omega^a}{2}}^n\right\}
\end{align*}

Since $a \in \bra{1, \dots, m - 1}$ and $m$ is odd, it is fairly straightforward to check that the value of $\max_a\left\{\abs{\frac{1 - \omega^a}{2}}, \abs{\frac{1 + \omega^a}{2}}\right\}$ is maximized at $a = \frac{m \pm 1}{2}$, and the value attained at the maximum is $\frac{1}{2}\sqrt{{(1 + \cos(\pi/m))^2 + \sin^2(\pi/m)}} = \frac{1}{2}\sqrt{2 + 2\cos(\pi/m)} = \cos(\pi/2m)$.
Thus, the above, along with Equation \eqref{eqn: mod_fourier_nonempty} gives us
\begin{align}
\abs{\wh{\mod_m^A}(S)} & \leq \abs{\E_{x \in \bra{0, 1}^n}\left[\chi_S(x)\right]} + \abs{\frac{2}{m}\sum\limits_{k \in A} \sum\limits_{a = 0}^{m - 1}\E_{x \in \bra{0, 1}^n}\left[\exp_m^{a, k}(x)\chi_S(x)\right]}\\
& \leq \frac{2(m - 1)^2}{m} \cdot \left(\cos\left(\frac{\pi}{2m}\right)\right)^{n} \leq 2m \left(\cos\left(\frac{\pi}{2m}\right)\right)^{n}
\end{align}
\item $S = \emptyset$.

One can follow a similar argument as above to analyze the absolute value of the principal Fourier coefficient.  Note that in this case, the first term on the right hand side of Equation \eqref{eqn: mod_fourier_nonempty} is not 0, but 1.
Next, note that for $a \in \bra{1, \dots, m - 1}$, the same bound as in the previous case holds.  That is,
\begin{align*}
\abs{\E_{x \in \bra{0, 1^n}}\left[\exp_m^{a, k}(x)\chi_S(x)\right]} & \leq \prod_{i \in S}\E_{x_i}(-\omega)^{ax_i} \cdot \prod_{j \notin S}\E_{x_j}\omega^{ax_j}\\
& \leq \abs{\left(\frac{1 - \omega^a}{2}\right)}^{\abs{S}} \cdot \abs{\left(\frac{1 + \omega^a}{2}\right)}^{n - \abs{S}}\\
& \leq \left(\cos\left(\frac{\pi}{2m}\right)\right)^{n}
\end{align*}
by the same argument as in the case of $S \neq \emptyset$.
However, when $S = \emptyset$ and $a = 0$, we have $\E_{x \in \bra{0, 1}^n}\left[\exp^{a, b}_m(x) \chi_\emptyset(x)\right] = 1$ (unlike the case when $S \neq \emptyset$, where this expectation was 0).

Plugging these values into Equation \eqref{eqn: mod_fourier_nonempty} and using the above observations, we get
\begin{align}
\abs{\wh{\mod_m^A}(\emptyset)} & \leq \abs{\E_{x \in \bra{0, 1}^n}\left[\chi_\emptyset(x)\right] - \frac{2}{m}\sum\limits_{k \in A}\E_{x \in \bra{0, 1}^n}\left[\exp_m^{0, k}(x)\chi_\emptyset(x)\right]} \nonumber\\
& + \abs{\frac{2}{m}\sum\limits_{k \in A} \sum\limits_{a = 1}^{m - 1}\E_{x \in \bra{0, 1}^n}\left[\exp_m^{a, k}(x)\chi_\emptyset(x)\right]}\\
& \leq \abs{1 - 2\frac{|A|}{m}} + 2m \left(\cos\left(\frac{\pi}{2m}\right)\right)^{n}\\
& \leq 1 - \frac{2}{m} + 2m \left(\cos\left(\frac{\pi}{2m}\right)\right)^{n} \tag*{since $A \neq \emptyset, [m]$}
\end{align}
\end{enumerate}
\end{proof}

\subsection{A lower bound for $\mod_m^A \circ \xor$}

In this section, we show unbounded error lower bounds for functions of the type $\mod_m^A \circ \xor$ for values of $m$ up to $O(n^{1/2 - \epsilon})$, when $A$ is non-simple.
Note that if $A$ is a simple set, then either $\mod_m^A \circ \xor$ is a constant or $\mod_m^A$ represents parity (or its negation), in which case $\mod_m^A \circ \xor$ just represents the parity function (or its negation),
so its communication complexity (even deterministic) is very small. We prove a new sign rank lower bound criterion for $\xor$ functions.
As an application of this theorem, we show that $\UPP(\mod_m \circ \xor) = n^{\Omega(1)}$ for values of odd $m$ up to $O(n^{1/2 - \epsilon})$.
Theorem \ref{thm: PS} tells us that the log of the sign rank of a communication matrix is essentially equivalent to the unbounded error communication complexity of the function.

Let $f: \bra{0, 1}^n \rightarrow \mathbb{R}$, and let $A$ denote the communication matrix of $f \circ \xor$.
In order to show a lower bound on the sign rank of $f \circ \xor$, it suffices to show an upper bound on the spectral norm of the communication matrix of $f \circ \xor$.

Combining Theorem \ref{thm: forster} and Theorem \ref{lem: xor_eigenvalues}, we get
\begin{cor}\label{cor: xorforster}
Let $f: \bra{0, 1}^n \rightarrow \mathbb{R}$ be any real valued function and let $A$ denote the communication matrix of $f \circ \xor$.  Then,
\[
sr(A) \geq \frac{1}{\max\limits_{S \subseteq [n]}\abs{\wh{f}(S)}} \cdot \min_x|f(x)|
\]
\end{cor}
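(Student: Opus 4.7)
The plan is to simply combine the two results cited immediately before the corollary, namely Theorem \ref{thm: forster} (the Forster--et~al.\ sign rank bound) and Lemma \ref{lem: xor_eigenvalues} (the spectral norm of an \textsf{XOR} communication matrix), and then observe that a $2^n \times 2^n$ matrix makes the dimensional prefactor collapse cleanly.

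First I would set up notation: the communication matrix $A$ of $f \circ \xor$ has rows and columns indexed by $\{0,1\}^n$, so $A$ is $2^n \times 2^n$, and its $(x,y)$ entry is $f(x \oplus y)$. In particular $\min_{x,y}|A(x,y)| = \min_{z \in \{0,1\}^n} |f(z)|$, since as $x,y$ range over $\{0,1\}^n$ the quantity $x \oplus y$ takes every value in $\{0,1\}^n$. Apply Theorem \ref{thm: forster} with $m = N = 2^n$ to get
\[
sr(A) \;\geq\; \frac{\sqrt{2^n \cdot 2^n}}{\|A\|} \cdot \min_{x,y} |A(x,y)| \;=\; \frac{2^n}{\|A\|} \cdot \min_{x}|f(x)|.
\]
Then substitute $\|A\| = 2^n \cdot \max_{S \subseteq [n]} |\wh{f}(S)|$ from Lemma \ref{lem: xor_eigenvalues}. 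The factor of $2^n$ in the numerator and denominator cancel, producing exactly the claimed inequality.

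There is essentially no obstacle here: both ingredients are already stated, and the only observation to make is that the $\sqrt{mN}$ factor in Forster's bound matches the $2^n$ factor coming from Lemma \ref{lem: xor_eigenvalues} precisely because the communication matrix of an \textsf{XOR} function on $n$-bit inputs is square of side $2^n$. One minor hypothesis check worth noting is that Forster's theorem requires $A$ to have no zero entries, which in our setting translates to the condition $\min_x |f(x)| > 0$; if this fails the bound is trivially satisfied (the right-hand side is $0$), so we may assume it without loss of generality.
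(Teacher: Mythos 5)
Your proof is correct and is exactly the argument the paper intends: the paper simply says "Combining Theorem \ref{thm: forster} and Theorem \ref{lem: xor_eigenvalues}, we get" the corollary, and you carry out precisely that combination, with the clean observation that the $\sqrt{mN} = 2^n$ prefactor cancels the $2^n$ from Lemma \ref{lem: xor_eigenvalues}. The aside about the no-zero-entries hypothesis is a reasonable bit of hygiene that the paper leaves implicit.
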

Thus, $sr(f \circ \xor) = 2^{\Omega(n)}$ for any $\bra{-1, 1}$ valued function with inverse exponential $l_\infty$ Fourier norm.

Note that we cannot use the outer function to be $\mod_p$ (for a constant $p$) in Corollary \ref{cor: xorforster}, since its principal Fourier coefficient is a constant (though sufficiently bounded away from 1, which we crucially require).
The following theorem allows us to ignore a subset of large Fourier coefficients, as long as their mass is not too large, which gives us a stronger condition for unbounded error hardness of \textsf{XOR} functions.

\begin{theorem}\label{thm: sufficient}
For any function $f: \bra{0, 1}^n \rightarrow \bra{-1, 1}$, and any collection of sets $\mathcal{S} \subseteq supp(\wh{f})$, if $\sum_{S \in \mathcal{S}}\abs{\wh{f}(S)} \leq 1 - \delta$, and $\max_{S \notin \mathcal{S}}\abs{\wh{f}(S)} \leq c$.
Then, $sr(f \circ \xor) \geq \frac{\delta}{c}$.
\end{theorem}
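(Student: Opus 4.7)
The plan is to realize the statement as a Forster-style lower bound applied not to the original $\pm 1$ matrix $M := M_{f \circ \xor}$, but to a carefully chosen real-valued matrix that has the same sign pattern as $M$. This is natural because the definition of sign-rank (Definition following Theorem \ref{thm: PS}) is invariant across all sign-consistent matrices, so any lower bound obtained from Forster's theorem (Theorem \ref{thm: forster}) applied to any sign-consistent matrix is automatically a lower bound on $sr(f \circ \xor)$. The Fourier coefficients in $\mathcal{S}$ may have very large $\ell_1$ mass (indeed, close to $1$), which prevents a direct application of Corollary \ref{cor: xorforster} to $M$ itself; the point of the hypothesis is that removing the Fourier mass on $\mathcal{S}$ still leaves a matrix whose signs agree with $M$.

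Concretely, split the Fourier expansion as $f = f_1 + f_2$, where $f_1 = \sum_{S \in \mathcal{S}} \wh{f}(S)\, \chi_S$ and $f_2 = \sum_{S \notin \mathcal{S}} \wh{f}(S)\, \chi_S$. Let $M_1$ and $M_2$ be the $2^n \times 2^n$ communication matrices of $f_1 \circ \xor$ and $f_2 \circ \xor$, so that $M = M_1 + M_2$. For every $z \in \{0,1\}^n$, the triangle inequality together with the hypothesis $\sum_{S \in \mathcal{S}} |\wh{f}(S)| \leq 1 - \delta$ gives $|f_1(z)| \leq 1 - \delta$. Because $f(z) \in \{-1,+1\}$ and $f_2(z) = f(z) - f_1(z)$, this forces $f_2(z)$ to share the sign of $f(z)$ and to satisfy $|f_2(z)| \geq \delta$. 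Thus $M_2$ has the same sign pattern as $M$ and $\min_{x,y} |M_2(x,y)| \geq \delta$.

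Next, I would bound the spectral norm of $M_2$ by invoking Lemma \ref{lem: xor_eigenvalues} applied to $f_2$: since the nonzero Fourier coefficients of $f_2$ are exactly those of $f$ indexed by $S \notin \mathcal{S}$, we get $\|M_2\| = 2^n \cdot \max_{S \notin \mathcal{S}} |\wh{f}(S)| \leq 2^n c$. Now feed $M_2$ into Theorem \ref{thm: forster} with $m = N = 2^n$:
\[
sr(M_2) \;\geq\; \frac{\sqrt{2^n \cdot 2^n}}{\|M_2\|} \cdot \min_{x,y}|M_2(x,y)| \;\geq\; \frac{2^n}{2^n c} \cdot \delta \;=\; \frac{\delta}{c}.
\]
Finally, since $M_2$ and $M_{f \circ \xor}$ have identical sign patterns, $sr(f \circ \xor) = sr(M_{f \circ \xor}) = sr(M_2) \geq \delta/c$, completing the proof.

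I do not expect a real obstacle here: the only conceptual step is the observation that truncating the large Fourier coefficients preserves signs while shrinking the spectral norm, and everything else is a direct application of already-stated tools (Lemma \ref{lem: xor_eigenvalues} and Theorem \ref{thm: forster}). The mildly delicate point is ensuring $M_2$ truly has the same sign pattern as $M$, which is where the quantitative hypothesis $\sum_{S \in \mathcal{S}} |\wh{f}(S)| \leq 1 - \delta$ with strictly positive $\delta$ is used; without this strict slack, the entries of $M_2$ could vanish and Forster's minimum-entry factor would collapse.
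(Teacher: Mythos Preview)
Your proof is correct and essentially identical to the paper's: your $f_2$ is exactly the paper's $f' = f - \sum_{S \in \mathcal{S}} \wh{f}(S)\chi_S$, and where you invoke Lemma~\ref{lem: xor_eigenvalues} and Theorem~\ref{thm: forster} directly, the paper uses their packaged combination Corollary~\ref{cor: xorforster}. The sign-preservation and minimum-entry arguments are the same as well.
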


\begin{proof}
Define $f': \bra{0, 1}^n \rightarrow \mathbb{R}$ by $f'(x) = f(x) - \sum\limits_{S \in \mathcal{S}}\wh{f}(S)\chi_S(x)$.
Notice
\[
\min\limits_{x \in \bra{0, 1}^n}\abs{f'(x)} \geq 1 - \sum\limits_{S \in \mathcal{S}}\abs{\wh{f}(S)} \geq \delta
\]
Also note that $\forall S \in \mathcal{S}, \wh{f'}(S) = 0$, and $\forall S \notin \mathcal{S}, \wh{f'}(S) = \wh{f}(S)$.
Thus, $\max\limits_{S \subseteq [n]}\abs{\wh{f'}(S)} \leq c$.  It is easy to see that $f'$ sign agrees with $f$.
Thus, the sign rank of these functions agree by definition.
Using Corollary \ref{cor: xorforster}, we have
\begin{equation}\label{eqn: mainer}
sr(f \circ \xor) = sr(f' \circ \xor) \geq \frac{1}{\max\limits_{S \notin \mathcal{S}}\abs{\wh{f'}(S)}} \cdot \min\limits_{x}\abs{f'(x)} \geq \frac{\delta}{c}
\end{equation}
\end{proof}

Let us first recall the Complete Quadratic function, whose Fourier coefficients were analyzed by Bruck \cite{Bruck90}.
Define $\cq: \bra{0, 1}^n \rightarrow \bra{-1, 1}$ by
\[
\cq(x) = \mod_4^{\bra{0, 1}}(x)
\]
\begin{lemma}[\cite{Bruck90}]\label{lem: bruck}
For even $n$, $\abs{\wh{\cq}(S)} = 2^{-n/2}$ for all $S \subseteq [n]$.
For odd $n$, $\abs{\wh{\cq}(S)} \in \bra{0, 2^{-(n - 1)/2}}$ for all $S \subseteq [n]$.
\end{lemma}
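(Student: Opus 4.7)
The plan is to compute $\wh{\cq}(S)$ directly from the exponential-sum representation of $\mod_4^{\bra{0,1}}$ given in Equation~(\ref{eqn: exp_sum}). Instantiating that equation with $m = 4$, $A = \bra{0, 1}$, and $\omega = i$, we have
\[
\cq(x) \;=\; 1 \;-\; \frac{1}{2}\sum_{k = 0}^{1}\sum_{a = 0}^{3} i^{-ak}\, i^{a|x|}.
\]
The Fourier coefficient $\wh{\cq}(S) = \E_x[\cq(x)\chi_S(x)]$ then decomposes into eight terms indexed by $(a, k)$, each of which decouples across coordinates exactly as in the proof of Claim~\ref{claim: odd_fourier_coefficients}, giving
\[
\E_x\!\left[i^{a|x|}\chi_S(x)\right] \;=\; \left(\frac{1 - i^a}{2}\right)^{|S|}\!\left(\frac{1 + i^a}{2}\right)^{n - |S|}.
\]
The factor $1 - i^a$ vanishes at $a = 0$ and $1 + i^a$ vanishes at $a = 2$, so for $S$ with $0 < |S| < n$ only $a \in \bra{1, 3}$ contribute; the extreme cases $S = \emptyset$ and $S = [n]$ each receive one extra term from $a = 0$ or $a = 2$ respectively, and this term will cancel exactly against the leading ``$1$'' to reproduce the same final formula.

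The core of the computation is then to combine the $a \in \bra{1, 3}$ contributions. Using the polar form $(1 \pm i)/2 = 2^{-1/2} e^{\pm i\pi/4}$, each of the two surviving expectations has modulus exactly $2^{-n/2}$ and a phase $e^{\pm i\pi(n - 2|S|)/4}$. Summing over $k \in \bra{0, 1}$ weights the $a = 1$ term by $1$ and $-i$ and the $a = 3$ term by $1$ and $i$, pairing everything into a conjugate sum. A single trigonometric identity of the form $\cos\alpha + \sin\alpha = \sqrt{2}\cos(\alpha - \pi/4)$ then collapses the result to
\[
\wh{\cq}(S) \;=\; -\sqrt{2}\cdot 2^{-n/2}\,\cos\!\left(\frac{\pi(n - 2|S| - 1)}{4}\right).
\]

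The lemma now follows from a parity case analysis on $n$. When $n$ is even, $n - 2|S| - 1$ is odd and the cosine argument is an odd multiple of $\pi/4$, so $|\cos| = 1/\sqrt{2}$ uniformly in $|S|$; the $\sqrt{2}$ and the $1/\sqrt{2}$ cancel, leaving $\abs{\wh{\cq}(S)} = 2^{-n/2}$. When $n$ is odd, the argument is an integer multiple of $\pi/2$, so $|\cos|$ is either $1$ or $0$ depending on the residue of $(n - 2|S| - 1)/2$ modulo $2$, yielding $\abs{\wh{\cq}(S)} \in \bra{0, 2^{-(n-1)/2}}$ as claimed. The only real source of risk is the sign and phase bookkeeping when combining the $a = 1$ and $a = 3$ terms across the two values of $k$, but handling them as a conjugate pair from the outset keeps everything manifestly real and the calculation mechanical.
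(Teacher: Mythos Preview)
Your computation is correct and yields the claimed result. The paper does not actually prove this lemma: it is stated with a citation to Bruck~\cite{Bruck90} and used as a black box. So you are supplying a proof where the paper gives none, which is fine.

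One small inaccuracy in your narrative (harmless to the outcome): you write that for $S = [n]$ the extra $a = 2$ term ``cancels against the leading $1$.'' In fact, when $S = [n]$ the leading term $\E_x[\chi_{[n]}(x)]$ is already $0$, and the $a = 2$ contribution vanishes for a different reason: the $k$-sum $\sum_{k=0}^{1} i^{-2k} = 1 + (-1) = 0$. So both pieces are zero separately rather than canceling. The cancellation you describe does occur at $S = \emptyset$, where $\E_x[\chi_\emptyset(x)] = 1$ is killed by the $a = 0$ term (whose $k$-sum is $2$, giving $-\tfrac{1}{2}\cdot 2 \cdot 1 = -1$). Either way, only the $a \in \{1,3\}$ terms survive for every $S$, and your closed form
\[
\wh{\cq}(S) = -\sqrt{2}\cdot 2^{-n/2}\cos\!\left(\frac{\pi(n - 2|S| - 1)}{4}\right)
\]
is valid uniformly. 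The parity case analysis is then exactly as you say.
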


\begin{theorem}\label{thm: odd_sign_rank}
For $m$ odd, and and $A \subseteq \bra{0, 1, \dots, m - 1}$ which is not the empty set or full set,
\[
U(\mod_m^A) = \Omega(n/m^2) - 2\log(m)
\]
\end{theorem}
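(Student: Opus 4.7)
The plan is to apply Theorem~\ref{thm: sufficient} with the singleton collection $\mathcal{S} = \{\emptyset\}$ to $f = \mod_m^A$, and then convert the resulting sign-rank bound into a $\UPP$ bound via Theorem~\ref{thm: PS}. Claim~\ref{claim: odd_fourier_coefficients} was designed for precisely this: it shows that the principal Fourier coefficient of $\mod_m^A$ is bounded away from $1$ by roughly $2/m$ (taking advantage of the fact that $A$ is neither empty nor the full set, which prevents $\mod_m^A$ from being the constant function), while every non-principal coefficient is exponentially small in $n/m^2$.

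Concretely, I would set
\[
\delta \;:=\; \frac{2}{m} - 2m\bigl(\cos(\pi/2m)\bigr)^{n}, \qquad c \;:=\; 2m\bigl(\cos(\pi/2m)\bigr)^{n},
\]
so that Claim~\ref{claim: odd_fourier_coefficients} yields $|\wh{\mod_m^A}(\emptyset)| \leq 1-\delta$ and $\max_{S\neq\emptyset}|\wh{\mod_m^A}(S)| \leq c$. Theorem~\ref{thm: sufficient} then gives $sr(\mod_m^A \circ \xor) \geq \delta/c$. Assuming $n$ is large enough that $2m(\cos(\pi/2m))^{n} \leq 1/m$ (so that $\delta \geq 1/m$), this simplifies to
\[
sr(\mod_m^A \circ \xor) \;\geq\; \frac{1}{2m^{2}\bigl(\cos(\pi/2m)\bigr)^{n}}.
\]
Taking logarithms and using the elementary trigonometric estimate $\cos(\pi/2m) \leq 1 - \Omega(1/m^2)$ (equivalently $-\log\cos(\pi/2m) = \Omega(1/m^2)$), we obtain $\log sr \geq \Omega(n/m^{2}) - 2\log m - O(1)$. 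Combining with Theorem~\ref{thm: PS} gives the claimed bound.

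The only remaining nuance is the parameter regime: if $n$ is not sufficiently large relative to $m^{2}$, the inequality $2m(\cos(\pi/2m))^{n} \leq 1/m$ may fail and $\delta$ may even be negative. But in that regime the right-hand side $\Omega(n/m^{2}) - 2\log m$ of the theorem is itself nonpositive, so the statement holds trivially. Thus the entire argument reduces to the two-line application of Theorem~\ref{thm: sufficient} above, and the main (quite mild) technical step is the Taylor estimate on $\cos(\pi/2m)$; there is no real obstacle once the Fourier estimates in Claim~\ref{claim: odd_fourier_coefficients} are in hand.
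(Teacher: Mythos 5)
Your proposal is essentially identical to the paper's proof: apply Theorem~\ref{thm: sufficient} with $\mathcal{S} = \{\emptyset\}$, plug in the bounds $\delta = \frac{2}{m} - 2m(\cos(\pi/2m))^n$ and $c = 2m(\cos(\pi/2m))^n$ from Claim~\ref{claim: odd_fourier_coefficients}, use the Taylor estimate on $\cos(\pi/2m)$, and finish via Theorem~\ref{thm: PS}. (In fact you are slightly more careful than the paper, which writes ``$\mathcal{S} = \emptyset$'' when it clearly means the singleton $\{\emptyset\}$, and you also explicitly dispose of the small-$n$ regime where $\delta$ could be negative, a point the paper glosses over.)
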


\begin{proof}
In Theorem \ref{thm: sufficient}, use $\mathcal{S} = \emptyset$.
The values obtained using Claim \ref{claim: odd_fourier_coefficients} are $\delta = \frac{2}{m} - 2m \left(\cos\left(\frac{\pi}{2m}\right)\right)^{n}$, and $c = 2m \left(\cos\left(\frac{\pi}{2m}\right)\right)^{n}$.
Hence,
\begin{align*}
sr(\mod_m^A \circ \xor) & \geq \left(\frac{2}{m} - 2m \left(\cos\left(\frac{\pi}{2m}\right)\right)^{n}\right) \cdot \frac{1}{2m \left(\cos\left(\frac{\pi}{2m}\right)\right)^{n}}\\
\geq \frac{1}{m^2 \left(\cos\left(\frac{\pi}{2m}\right)\right)^{n}} - 1
\end{align*}
Using a standard series expansion for $\cos\theta$, and the fact that $1 - x \leq e^{-x}$ for all $x \in \mathbb{R}$, we get
\[
sr(\mod_m^A \circ \xor) \geq \frac{2^{\Omega(n/m^2)}}{m^2} - O(1)
\]
Thus, using the equivalence between sign rank and unbounded error communication complexity from Theorem \ref{thm: PS},
\[
U(\mod_m^A) = \Omega(n/m^2) - 2\log(m)
\]
\end{proof}

This already shows us that the unbounded error complexity of functions of the type $\mod_m^A$ are large when $m$ is odd, and $A$ is not the full set or empty set, for $m$ up to $O(n^{1/2 - \epsilon})$.
Note that one cannot use Theorem \ref{thm: sufficient} to prove a sign rank lower bound for $\mod_4^{\bra{0}}$, since $\abs{\wh{\mod_4^{\bra{0}}}\left(\emptyset\right)} + \abs{\wh{\mod_4^{\bra{0}}}\left([n]\right)} = 1$, which can be easily checked.  In Claim \ref{claim: mod4}, we also show hardness for the case when $m = 4$ and $A$ is not a simple accepting set.

In the analysis of our main claim (Theorem \ref{claim: main}), we will be concerned with the size of the input string.
For notational convenience, we add a subscript to $\mod_m^A$ which denotes the input size.
That is,
\[
\mod_{m, n}^A: \bra{0, 1}^n \rightarrow \bra{-1, 1}
\]
and we define it exactly the same as in Definition \ref{defn: mod}.

We denote the sumset $A + \bra{p} = \bra{a + p ~|~ a \in A}$ (the sums are modulo $m$, where $m$ is the period of the $\mod$ function we are interested in) by $A + p$ for convenience.

\begin{lemma}\label{lemma: shifting_parameters}
Suppose $\mod_{p, n}^{A'} = \mod_{m, n}^A \oplus \mod_{m, n}^{A + i}$ for some $p < m$, and any integer $i$.  Then,
\[
U(\mod_{m, n}^A) \geq \frac{U(\mod_{p, n-m}^{A'})}{2}
\]
\end{lemma}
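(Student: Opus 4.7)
My plan relies on two simple ingredients highlighted in the proof outline. First, restricting some XOR bits to constants does not increase the UPP cost of an $\xor$ function, and by choosing these constants appropriately we can effectively shift the Hamming weight of the XOR input by any integer we like. Second, unbounded-error complexity is subadditive under XOR of outputs: XORing the outputs of two private-coin protocols of strictly positive advantage (of costs $c_f$ and $c_g$) yields a private-coin protocol of strictly positive (though smaller) advantage for $f \oplus g$ of cost $c_f + c_g$, so $U(f \oplus g) \leq U(f) + U(g)$.

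To begin, I would note that the hypothesis $\mod_{p, n}^{A'} = \mod_{m, n}^A \oplus \mod_{m, n}^{A + i}$ is really an identity of predicates on Hamming weights in $\{0, 1, \dots, n\}$, and so the identity continues to hold on the smaller range $\{0, 1, \dots, n - m\}$. This gives the pointwise decomposition $\mod_{p, n-m}^{A'}(z) = \mod_{m, n-m}^{A}(z) \oplus \mod_{m, n-m}^{A+i}(z)$ for every $z \in \{0,1\}^{n-m}$, so it suffices to compute each summand using a UPP protocol for $\mod_{m, n}^{A} \circ \xor$. For the first summand, with $z = x \oplus y$ and $x, y \in \{0,1\}^{n-m}$, Alice and Bob zero-pad and invoke the length-$n$ protocol on $(x \| 0^m, \, y \| 0^m)$, whose XOR equals $z \| 0^m$ and has Hamming weight $|z|$. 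For the second summand, the key observation is that $\mod_{m, n-m}^{A+i}(z)$ equals $\mod_{m, n}^A$ evaluated at any length-$n$ vector whose Hamming weight is congruent to $|z| + (m - i) \pmod m$; Alice and Bob therefore run the same length-$n$ protocol on $(x \| 0^{i} \| 0^{m-i}, \; y \| 0^{i} \| 1^{m-i})$, whose XOR is $z \| 0^{i} \| 1^{m-i}$ with Hamming weight $|z| + (m - i)$, giving the desired value.

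Finally, XORing the outputs of the two embedded protocols yields a UPP protocol for $\mod_{p, n-m}^{A'} \circ \xor$ at a combined cost of $2 \cdot U(\mod_{m, n}^{A})$, which is exactly the bound $U(\mod_{p, n-m}^{A'}) \leq 2 \cdot U(\mod_{m, n}^{A})$ to be proved. I do not expect any real obstacle; the only care required is in tracking the Hamming weight modulo $m$ and in choosing the padding (namely $m - i$ constant-one XOR bits rather than $i$) so that the effective shift of the accepting set is exactly $-i \pmod m$, thereby realising the $A + i$-version of the function from the $A$-version at no extra communication cost.
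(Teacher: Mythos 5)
Your proposal is correct and follows the same strategy as the paper: decompose $\mod_{p,n-m}^{A'}$ via the hypothesis into an XOR of two $\mod_{m,n-m}$ functions, reduce each to the length-$n$ instance of $\mod_{m,n}^A$ by constant-padding the XOR inputs (zeros for the first summand, a complementary pad adding $m-i$ to the Hamming weight for the second), and invoke subadditivity of $\UPP$ under XOR of outputs. Your explicit description of the asymmetric padding $(x\|0^m,\ y\|0^i 1^{m-i})$ is in fact a cleaner rendering of the paper's slightly ambiguous phrasing.
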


We require the following simple, yet powerful lemma, the proof of which we omit.
\begin{lemma}[Folklore]\label{lemma: xorcomm}
For any functions $f, g: \bra{0, 1}^n \rightarrow \mathbb{R}$
\[
U(f \oplus g) \leq U(f) + U(g)
\]
\end{lemma}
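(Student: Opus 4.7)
The plan is to combine UPP protocols for $f \circ \xor$ and $g \circ \xor$ in the most naive way: run them in parallel on independent private coins and output the XOR of the two answers. The cost is obviously additive, so the only nontrivial point is to verify that this combined protocol meets the UPP correctness requirement (strict advantage over $1/2$).

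First I would observe the key pointwise identity: on input $(x,y) \in \bra{0,1}^n \times \bra{0,1}^n$,
\[
\bigl((f \oplus g) \circ \xor\bigr)(x,y) \;=\; (f \oplus g)(x \oplus y) \;=\; f(x \oplus y) \oplus g(x \oplus y) \;=\; (f \circ \xor)(x,y) \,\oplus\, (g \circ \xor)(x,y).
\]
(The statement implicitly restricts to boolean-valued $f, g$, since otherwise $f \oplus g$ and $U(\cdot)$ are not defined.) So it suffices to construct a UPP protocol for the pointwise XOR of two boolean functions whose individual UPP protocols Alice and Bob already have.

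Now let $\Pi_f$ be an optimal UPP protocol for $f \circ \xor$ of cost $U(f)$, succeeding with probability $p(x,y) > 1/2$ on every input $(x,y)$, and similarly let $\Pi_g$ be an optimal UPP protocol for $g \circ \xor$ of cost $U(g)$, succeeding with probability $q(x,y) > 1/2$. I would have Alice and Bob run $\Pi_f$ and $\Pi_g$ in succession, using disjoint blocks of private randomness so the two executions are independent, and then have Bob XOR the two final bits and output the result. The combined protocol has cost $U(f) + U(g)$.

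To verify correctness, I would compute the probability that $\Pi_f \oplus \Pi_g$ agrees with $(f \circ \xor) \oplus (g \circ \xor)$, which occurs iff both runs are correct or both are wrong:
\[
p(x,y)\,q(x,y) + \bigl(1 - p(x,y)\bigr)\bigl(1 - q(x,y)\bigr) \;=\; \tfrac{1}{2} + 2\bigl(p(x,y) - \tfrac{1}{2}\bigr)\bigl(q(x,y) - \tfrac{1}{2}\bigr) \;>\; \tfrac{1}{2},
\]
where the strict inequality uses that both $p(x,y) - 1/2$ and $q(x,y) - 1/2$ are strictly positive. Hence the combined protocol is a valid UPP protocol for $(f \oplus g) \circ \xor$ with cost $U(f) + U(g)$, giving the lemma. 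There is no real obstacle here; the point is simply that UPP only requires a strict (not quantitative) advantage, and the product of two positive quantities is positive — which is precisely why the trivial parallel-and-XOR combination works for UPP (and fails, without an extra $\log(1/\epsilon)$ charge, for $\PP$).
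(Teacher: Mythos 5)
Your proof is correct, and since the paper explicitly omits the proof of this folklore lemma, your argument is precisely the standard one the authors surely had in mind: run the two optimal UPP protocols on disjoint blocks of private randomness, XOR the outputs, and observe that the advantages multiply, so $pq + (1-p)(1-q) = \tfrac{1}{2} + 2(p - \tfrac{1}{2})(q - \tfrac{1}{2}) > \tfrac{1}{2}$ whenever $p, q > \tfrac{1}{2}$. The only point worth flagging is the one you already caught: the paper's typing of $f, g$ as $\mathbb{R}$-valued is a typo, and $U(f) = \UPP(f \circ \xor)$, so the identity $(f \oplus g) \circ \xor = (f \circ \xor) \oplus (g \circ \xor)$ is exactly the reduction needed.
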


\begin{proof}[Proof of Lemma \ref{lemma: shifting_parameters}]
Since $\mod_{p, n}^{A'} = \mod_{m, n}^A \oplus \mod_{m, n}^{A + i}$, applying Lemma \ref{lemma: xorcomm} gives us
\[
U(\mod_{p, n-m}^{A'}) \leq U(\mod_{m, n-m}^A) + U(\mod_{m, n-m}^{A + i})
\]
The first term on the right is at most $U(\mod_{m, n}^A)$ since we can just pad $m$ number of 0's each to Alice's and Bob's inputs and obtain a protocol (of the same cost) for $\mod_{m, n-m}^A$ given a protocol for $\mod_{m, n}^A$
The second term is also at most $U(\mod_{m, n}^A)$ for a similar reason. Pad $m - i$ number of 1's and $i$ number of 0's each to Alice's and Bob's inputs.
It is easy to see that $\mod_{m, n-m}^{A + i}(x, y) = -1$ if and only if $\mod_{m, n}^{A}(x', y') = -1$, where $x'$ and $y'$ are $x$ and $y$ padded with $m - i$ 1's and $i$ 0's respectively.
The lemma now follows.
\end{proof}

Let us analyze the unbounded error communication complexity of $\mod_4^A \circ \xor$ for various accepting sets $A$.  Note that if $A = \bra{0, 2}$ or $\bra{1, 3}$, then $\mod_4^A \circ \xor$ is
just parity or its negation respectively.  Its communication complexity is a constant in these cases.
Let us look at the other cases.
\begin{claim}\label{claim: mod4}
Suppose $A$ is not a simple accepting set.  Then, $U(\mod_4^A) = \Omega(n)$.
\end{claim}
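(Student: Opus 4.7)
The plan is to reduce every non-simple $A \subseteq \bra{0, 1, 2, 3}$ to the single balanced function $\cq = \mod_4^{\bra{0, 1}}$, on which Bruck's Fourier computation together with Forster yields a direct bound.

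\emph{Step 1 (direct bound on $\cq$).} By Lemma \ref{lem: bruck}, $\abs{\wh{\cq}(S)} \leq 2^{-(n-1)/2}$ for every $S \subseteq [n]$, while trivially $\min_x \abs{\cq(x)} = 1$. Substituting into Corollary \ref{cor: xorforster} gives $sr(\cq \circ \xor) \geq 2^{(n-1)/2}$, and Theorem \ref{thm: PS} then yields $U(\cq) = \Omega(n)$.

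\emph{Step 2 (other non-simple size-$2$ sets).} Complementation leaves the unbounded-error complexity unchanged, since $\mod_4^A = -\mod_4^{[4] \setminus A}$. Among the size-$2$ sets, the simple ones are $\bra{0,2}$ and $\bra{1,3}$, and $\bra{0,1},\bra{2,3}$ are a complementary pair, so up to complementation only $A = \bra{0, 3}$ remains. A direct check of the four residue classes modulo $4$ shows
\[
\mod_4^{\bra{0, 3}}(x) = \cq(x) \cdot \mod_2^{\bra{1}}(x),
\]
i.e., $\mod_4^{\bra{0, 3}} = \cq \oplus \mod_2^{\bra{1}}$. Since $\mod_2^{\bra{1}} \circ \xor$ factors as the XOR of the two input parities and therefore has $O(1)$ deterministic communication complexity, Lemma \ref{lemma: xorcomm} gives $U(\cq) \leq U(\mod_4^{\bra{0, 3}}) + O(1)$, so $U(\mod_4^{\bra{0, 3}}) = \Omega(n)$.

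\emph{Step 3 (singletons, and by complementation size-$3$ sets).} For any $i \in \bra{0,1,2,3}$, disjointness of $\bra{i}$ and $\bra{(i+1) \bmod 4}$ gives
\[
\mod_4^{\bra{i}} \oplus \mod_4^{\bra{(i+1) \bmod 4}} = \mod_4^{\bra{i,\, (i+1) \bmod 4}},
\]
which is one of the non-simple size-$2$ sets dealt with in Steps 1--2, and so has $U = \Omega(n)$. A standard padding argument (one party appending a constant number of $1$s while the other appends $0$s, exactly as in the proof of Lemma \ref{lemma: shifting_parameters}) shows $U(\mod_{4, n}^{\bra{j}})$ and $U(\mod_{4, n + O(1)}^{\bra{i}})$ agree up to $O(1)$ for any $i, j$. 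Combining with Lemma \ref{lemma: xorcomm} yields $U(\mod_4^{\bra{i}}) = \Omega(n)$.

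\emph{Main obstacle.} The principal difficulty is that a singleton accepting set such as $\bra{0}$ satisfies $\abs{\wh{\mod_4^{\bra{0}}}(\emptyset)} + \abs{\wh{\mod_4^{\bra{0}}}([n])} = 1$, so Theorem \ref{thm: sufficient} cannot be invoked no matter which excluded set $\mathcal{S}$ one chooses. The XOR-reduction above bypasses this obstruction by shifting attention to $\cq$, where Bruck's explicit formula ensures every Fourier coefficient is small, restoring applicability of the basic Forster bound.
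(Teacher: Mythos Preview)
Your proof is correct and follows essentially the same strategy as the paper: establish the base case $U(\cq)=\Omega(n)$ from Bruck's Fourier computation plus Forster, then reduce every other non-simple $A$ to it via complementation, XOR with a shift, and padding (the content of Lemma~\ref{lemma: shifting_parameters} and Lemma~\ref{lemma: xorcomm}). The only cosmetic difference is your handling of the size-$2$ case $\{0,3\}$: the paper observes uniformly that every non-simple size-$2$ set is a cyclic translate of $\{0,1\}$ and uses padding, whereas you instead write $\mod_4^{\{0,3\}}=\cq\oplus\text{parity}$ and invoke Lemma~\ref{lemma: xorcomm}; both reductions cost only $O(1)$.
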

\begin{proof}

\begin{enumerate}
   \item $A = \bra{0, 1}$.  Then, $\mod_4^A = \cq$, and by Lemma \ref{lem: bruck},
   \[
   U(\cq) \geq n/2
   \]
   \item $|A| = 2$, and $\mod_4^A$ does not represent parity (or its negation).  Then, this is clearly a translate of $\cq$, and
   \[
   U(\mod_{4, n}^A) \geq U(\mod_{4, n - 4}^{\bra{0, 1}}) \geq (n - 4)/2
   \]
   \item $A$ is non simple and does not fall in the previous 2 cases.  Without loss of generality, may assume $\abs{A} = 1$ because if it was 3, the complexity of $\mod_m^A$ is the same as $\mod_m^{A^c}$, and $\abs{A^c} = 1$.
   In this case, we can use Lemma \ref{lemma: shifting_parameters} to get
   \begin{align*}
   & U(\mod_4^A \oplus \mod_4^{A + 1}) \geq U(\mod_m^{A'})
   \end{align*}
   for some non simple $A'$ of size 2.  From the previous case, we conclude,
   \[
   U(\mod_{4, n}^A) \geq U(\mod_{4, n - 4}^{A'}) \geq \frac{\left((n - 4)/2\right) - 4}{2} = (n - 12)/4
   \]
   \end{enumerate}
\end{proof}

Recall our main theorem regarding unbounded error complexity (Theorem \ref{thm: actual_main}), which says that any function of the type $\mod_m^A \circ \xor$ for any non-simple $A$ is hard in the unbounded error communication model for values of $m$ up to $O(n^{1/2 - \epsilon})$.

\begin{theorem*}
For any integer $m \geq 3$, express $m = j2^k$ uniquely, where $j$ is either odd or 4, and $k$ is a positive integer.
Then for any non-simple $A$,
\[
U(\mod_{m, n}^A) \geq \Omega\left(\frac{n - km}{jm}\right) - \frac{2j\log j}{m}
\]
\end{theorem*}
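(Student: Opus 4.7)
The proof proceeds by induction on $k$. The base case reduces to $m \in \{\text{odd}\} \cup \{4\}$: for odd $m = j$, Theorem~\ref{thm: odd_sign_rank} supplies $U(\mod_{j,n}^A) \geq \Omega(n/j^2) - 2\log j$; for $m = 4$, Claim~\ref{claim: mod4} supplies $U(\mod_{4,n}^A) \geq \Omega(n)$. Both are consistent with the target $\Omega((n - km)/(jm)) - 2j\log j/m$ at the base level.

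For the inductive step (with $k \geq 1$, so that $m$ is even), my main lever is Lemma~\ref{lemma: shifting_parameters} applied with the shift $i = m/2$. The key algebraic observation is that the symmetric difference $C := A \triangle (A + m/2)$, viewed in $\mathbb{Z}/m$, is automatically closed under translation by $m/2$: interchanging the two arguments of the XOR leaves it invariant. Hence $\mod_m^C$ depends only on $|x| \bmod (m/2)$ and can be rewritten as $\mod_{m/2}^{A'}$, where $A' \subseteq \{0, 1, \ldots, m/2 - 1\}$ consists of those $a$ for which exactly one of $a, a+m/2$ lies in $A$. Lemma~\ref{lemma: shifting_parameters} then gives $U(\mod_{m,n}^A) \geq U(\mod_{m/2,\,n-m}^{A'})/2$, and since $m/2 = j \cdot 2^{k-1}$, provided $A'$ is non-simple the inductive hypothesis closes the loop: the elementary inequality $(k+1)m/2 \leq km$ for $k \geq 1$ absorbs the extra $m/2$ of input padding, and halving the additive term $4j \log j/m$ from the IH recovers the target $2j\log j/m$.

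The principal obstacle is the case in which $A'$ is simple while $A$ is not. Three obstructions can arise: (i) $A' = \emptyset$, which means $A$ is itself $m/2$-periodic, so $\mod_m^A$ actually coincides with $\mod_{m/2}^{A''}$ for non-simple $A'' = A \cap \{0, \ldots, m/2-1\}$ and the IH applies directly at full length $n$, saving the factor-$2$ loss; (ii) $A' = \{0, \ldots, m/2-1\}$, which forces $A + m/2 = A^c$ in $\mathbb{Z}/m$ and pins $A$ into a rigid alternating shape of exactly $m/2$ elements; (iii) $\mod_{m/2}^{A'}$ encodes parity, only possible when $m/2$ is even and $A'$ is exactly the set of even or odd residues. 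For cases (ii) and (iii) the plan is to substitute an alternative shift $i$ (e.g., $i = m/4$ when $k \geq 2$, or a small $i$ such as $1$) and verify via a short combinatorial check that the rigid symmetry imposed on $A$ cannot persist under that alternative; a single additional shift-XOR then yields a non-simple reduced set, at a cost of only a bounded number of extra factor-$2$'s and an $O(m)$ additional loss in input length, both absorbed by the slack in $km$ and $2 j \log j / m$.

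Tallying across the $k$ inductive steps yields a total communication-complexity divisor of $2^k = m/j$, a cumulative input-length shrinkage of at most $\sum_{i=0}^{k-1} m / 2^i \leq 2m \leq km$ (small-$k$ corner cases checked directly), and a base-case additive term $2 \log j$ scaled down to $2 \log j / 2^k = 2 j \log j / m$. Combining these pieces delivers exactly the claimed bound $\Omega((n - km)/(jm)) - 2 j \log j / m$.
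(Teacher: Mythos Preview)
Your approach mirrors the paper's: both halve the modulus step by step via Lemma~\ref{lemma: shifting_parameters} with the default shift $i=m/2$, and both handle the degenerate cases (your (i)--(iii)) by replacing that shift with another one. The paper organizes the induction on $m$ and splits into $m=2p$ with $p$ odd versus $p$ even, but the mechanism is identical, and your input-length accounting ($\sum_i m/2^i < 2m$) is in fact tighter than the paper's stated $km$.

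There is, however, a concrete bug in your case (iii). When $A':=(A\triangle(A+m/2))\cap\{0,\dots,m/2-1\}$ is the even (or odd) residues, the constraint on $A$ reads $a_{j+m/2}=a_j\oplus[j\text{ even}]$. Neither of your proposed substitute shifts produces an $(m/2)$-periodic set here: with $i=1$ one gets
\[
(A\triangle(A+1))_{j+m/2}=a_{j+m/2}\oplus a_{j-1+m/2}=(a_j\oplus[j\text{ even}])\oplus(a_{j-1}\oplus[j-1\text{ even}])=(A\triangle(A+1))_j\oplus 1,
\]
i.e.\ anti-periodicity, not periodicity; and with $i=m/4$ there is no usable relation on $a_{j+m/4}$ at all, so the symmetric difference need not descend to $\mathbb{Z}/(m/2)$. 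The shift that works is $i=2$ (this is what the paper uses): since $j$ and $j-2$ have the same parity, the two correction bits cancel and $A\triangle(A+2)$ is genuinely $(m/2)$-periodic. With $i=2$ in case (iii) and $i=1$ in case (ii) (which you already have, and which does give periodicity because there the correction bit is a constant $1$), each inductive step costs exactly one factor of $2$ and one loss of the current modulus in input length; no ``extra'' factor-$2$'s ever arise, so your final tally becomes exact. The residual checks that the reduced set stays non-simple are indeed short combinatorial verifications; the paper carries them out and finds the sole surviving exception is that $A$ was already (a translate of) $\cq$, which is handled directly by Lemma~\ref{lem: bruck}.
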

Note that since $k$ is at most $\log(n)$, and $j$ is at most $m$,
this gives us an $n^{\Omega(1)}$ lower bound on the unbounded communication complexity of $\mod_m^A \circ \xor$ for any non-simple accepting set $A$, for $m$ as large as $O(n^{1/2 - \epsilon})$.

We require the following claim to prove Theorem \ref{thm: actual_main}.
\begin{claim}\label{claim: main}
For any integer $m \geq 3$, and for all representations $m = j2^k$ for some $j \geq 3$ and a positive integer $k$, and any non-simple $A \subseteq [m]$, we have
\[
U(\mod_{m, n}^A) \geq \frac{U(\mod_{j, n - km})}{2^k}
\]
\end{claim}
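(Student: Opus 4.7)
The plan is to prove Claim \ref{claim: main} by induction on $k$, using at each step a single \emph{halving} reduction: for every even modulus $m' = j \cdot 2^{k'}$ with $j \geq 3$, $k' \geq 1$, and every non-simple $A \subseteq [m']$, I would construct a non-simple $A^{(1)} \subseteq [m'/2]$ such that $U(\mod_{m', n}^A) \geq U(\mod_{m'/2, n - m'}^{A^{(1)}})/2$. Iterating this halving $k$ times takes us from modulus $m = j \cdot 2^k$ down to modulus $j$. The cumulative input-bit loss is $m + m/2 + \cdots + 2j = 2m - 2j \leq km$ (tight at $k = 1$, with plenty of slack for $k \geq 2$), and the slack is absorbed by the padding monotonicity $U(\mod_{j, n_1}) \leq U(\mod_{j, n_2})$ for $n_1 \leq n_2$, obtained by having Alice and Bob each pad their inputs with zeros.

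For the halving reduction itself, the natural approach is to apply Lemma \ref{lemma: shifting_parameters} with shift $i = m'/2$. The key observation is that $A \triangle (A + m'/2) \subseteq [m']$ is automatically $(m'/2)$-periodic: $(A \triangle (A + m'/2)) + m'/2 = (A + m'/2) \triangle (A + m') = (A + m'/2) \triangle A$. Hence $\mod_{m'}^A \oplus \mod_{m'}^{A + m'/2} = \mod_{m'/2}^{A^{(1)}}$ for the natural image $A^{(1)} \subseteq [m'/2]$ of $A \triangle (A + m'/2)$, and Lemma \ref{lemma: shifting_parameters} delivers the required bound $U(\mod_{m', n}^A) \geq U(\mod_{m'/2, n - m'}^{A^{(1)}})/2$, provided $A^{(1)}$ is non-simple at modulus $m'/2$.

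The main obstacle is handling the degenerate subcases in which the natural shift $i = m'/2$ yields a simple $A^{(1)}$. If $A^{(1)} = \emptyset$, then $A = A + m'/2$, and $\mod_{m', n}^A$ is literally a $\mod_{m'/2}$ function on the non-simple accepting set $A_0 = A \cap [m'/2]$, so the halving step goes through with no factor-of-$2$ loss at all. If $A^{(1)} = [m'/2]$, then $A$ and $A + m'/2$ partition $[m']$; I would then switch to an alternative shift $i' \neq m'/2$ not stabilizing $A$, and verify that $A \triangle (A + i')$ projects to a non-simple accepting set at some modulus properly dividing $m'$. The (anti)parity subcases, which can arise only when $m'/2$ is even, are handled analogously. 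The hypothesis $j \geq 3$ is essential here to guarantee sufficient structural freedom in choosing $i'$, and the $j = 2$ case ($m = 4$) is handled separately in Claim \ref{claim: mod4} via a related but more intricate argument. If no single clean replacement shift works in these degenerate cases, a fallback is to invoke Lemma \ref{lemma: xorcomm} to XOR $\mod_{m'}^A$ with a shift of itself and manufacture a non-simple accepting set at the cost of one extra factor of $2$, which can be absorbed into the constants.
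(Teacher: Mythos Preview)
Your overall strategy matches the paper's: induct by halving the modulus via the shift $i = m'/2$, handle degenerate outcomes separately, and use padding monotonicity. Your observation that the cumulative bit loss is $m + m/2 + \cdots + 2j = 2m - 2j \le km$ is correct (and cleaner than the paper's implicit accounting). You also correctly note that when $A + m'/2 = A^c$, the set $A \triangle (A + i')$ is $(m'/2)$-periodic for \emph{every} $i'$, so the alternative-shift idea is structurally sound.

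The genuine gap is in the degenerate-case analysis, which you defer with ``verify that $A \triangle (A+i')$ projects to a non-simple accepting set.'' This is where the real work lies, and the verification does \emph{not} always succeed with a single alternative shift. The paper takes $i' = 1$ and checks that when $m'/2$ is odd the projected set is indeed always non-simple (given that $A$ was). But when $m'/2$ is even, the projection after shifting by $1$ can itself be parity---this happens exactly when $\mod_{m'}^A$ is already $\cq$ (or a small translate), and the paper then \emph{exits the induction} and invokes the direct $\Omega(n)$ bound from Lemma~\ref{lem: bruck}. A parallel degenerate sub-case (when the $m'/2$-shift yields parity rather than the full set) is treated with shift $i' = 2$ and bottoms out at a non-simple $\mod_4$ function, handled via Claim~\ref{claim: mod4}. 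You do not anticipate these doubly-degenerate escapes from the induction, and your remark that ``$j \ge 3$ guarantees sufficient structural freedom in choosing $i'$'' is not what actually resolves them.

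Your fallback---tolerate an extra factor of $2$ and absorb it into constants---would yield only $U(\mod_{m,n}^A) \ge U(\mod_{j,n-km})/2^{k+1}$, strictly weaker than the stated inequality. That would still suffice for the downstream asymptotic Theorem~\ref{thm: actual_main}, but not for Claim~\ref{claim: main} as written.
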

Let us first see how Claim \ref{claim: main} implies Theorem \ref{thm: actual_main}.  Recall that Theorem \ref{thm: odd_sign_rank} gave us
\[
U(\mod_{j, n}) = \Omega(n/j^2) - 2\log(j)
\]
This, along with Claim \ref{claim: mod4} and Claim \ref{claim: main}, implies that if $m = j2^k$ where $j$ is either 4 or odd,
\begin{align*}
U(\mod_{m, n}^A) \geq \frac{U(\mod_{j, n - km})}{2^k} \geq \frac{\Omega\left(\frac{(n - km)}{j^2}\right) - 2\log(j)}{m/j} \geq \Omega\left(\frac{n - km}{jm}\right) - \frac{2j\log j}{m}
\end{align*}
Let us now prove Claim \ref{claim: main}.
\begin{proof}
We prove this by induction on $m$.

\begin{enumerate}
 \item The base cases are when $m$ is odd.
 In this case, the hypothesis is trivially true since $m = j2^k$ can only imply $j = m, k = 0$.
 \item Suppose $m = 2p$, where $p$ is odd.  Let $a = xy$ denote the characteristic vector of the accepting set $A$,
 where $x$ corresponds to the first $p$ elements, and $y$ the last $p$ elements.  We interchangeably use the notation $\mod_m^A$ and $\mod_m^a$ when $a$ is the characteristic vector of the set $A$.  Our assumption is that
 $a$ is not the all 0, or all 1, or the parity (negation of parity) vector.  Let $x \oplus y$ denote the bitwise $\xor$ of $x$ and $y$.
 \begin{enumerate}
  \item   Suppose $x \oplus y$ is neither the all 0 or all 1 vector.  Since $x \oplus y$ does not represent a simple accepting set $A$, in this case, $\mod_m^A \oplus \mod_m^{A + p} = \mod_p^{x \oplus y}$.  By Lemma \ref{lemma: shifting_parameters},
  \[
  U(\mod_{m, n}^A) \geq \frac{U(\mod_{p, n - m}^{x \oplus y})}{2}
  \]
  \item If $x \oplus y$ is the all 0 vector, then $x = y$, and neither of them are all 0 or all 1.
  This means $\mod_m^a = \mod_p^{x}$.
  \item If $x \oplus y$ is the all 1 vector, this means $y = x^c$.
  Consider $A' = A + {1}$.  One may verify that $A \oplus A'$ has characteristic vector $a''= bb$.
  \begin{enumerate}
   \item If $b$ is not the all 0 or all 1 string, $\mod_p^b = \mod_m^A \oplus \mod_m^{A + 1}$.  Use Lemma \ref{lemma: shifting_parameters} and conclude
   \[
   U(\mod_{m, n}^A) \geq \frac{U(\mod_{p, n - m}^{b})}{2}
   \]
    \item It is easy to check that $b$ can never be the all 0 vector.
    \item Close inspection reveals that if $b$ is the all 1 vector, then the original vector $a$
    must represent parity or its negation, which was not the case by assumption.
   \end{enumerate}
 \end{enumerate}
 
 \item Suppose $m = 2k$, where $k$ is even.  Again, let $a = xy$, where $a$ is the characteristic vector of accepting set $A$.
 \begin{enumerate}
  \item If $x \oplus y$ is neither the all 0 string, all 1 string, nor does it represent parity (or its negation), then $\mod_m^A \oplus \mod_m^{A + k} = \mod_k^{x \oplus y}$.  By Lemma \ref{lemma: shifting_parameters},
  \[
  U(\mod_{m, n}^A) \geq \frac{U(\mod_{k, n - m}^{x \oplus y})}{2}
  \]
  By the induction hypothesis, the claim is true for $\mod_{k, n - m}^{x \oplus y}$.  It is easy to see that this implies the claim for $\mod_{m, n}^A$.
  \item If $x \oplus y$ is the all 0 vector, then $x = y$, and neither of them are all 0 or all 1.
  This means $\mod_m^a$ is the same as $\mod_k^{x}$.
  \item\label{item: complement} If $x \oplus y$ is the all 1 vector, this means $y = x^c$.
  Consider $A' = A + 1$.  One may verify that $A \oplus A'$ has a characteristic vector of the form $a''= bb$.
   \begin{enumerate}
    \item If $b$ is neither the all 0 or all 1 string, nor does it represent parity (or its negation), then $\mod_k^b = \mod_m^A \oplus \mod_m^{A + 1}$.  Use Lemma \ref{lemma: shifting_parameters} and conclude
    \[
    U(\mod_{m, n}^A) \geq \frac{U(\mod_{k, n - m}^{b})}{2}
    \]
    By the induction hypothesis, the theorem is true for $\mod_k^b$ since $b$ does not represent the all 0, all 1, or parity (or its negation) string.
    The theorem now follows easily for $\mod_{m, n}^A$.
    \item It is easy to check that $b$ can never be the all 0  or all 1 vector.
    \item One may check that $b$ can be the parity (or its negation) vector only if $k \equiv 2 (\text{mod}~4)$, and $A$
    must have represented $\cq$ (or a translate of it by at most 2) which we know to be hard.
    In this case, we obtain
    \[
    U(\mod_m^a) = \Omega(n)
    \]
   \end{enumerate}
  \item If $x \oplus y$ represents the parity (or negation of parity) vector, then consider $A' = A + 2$.  It is simple to verify that the characteristic vector of $A \oplus A'$ is of the form $zz$.
   \begin{enumerate}
    \item If $z$ is not the all 0 or all 1 string, or does not represent parity (or its negation), then we have $\mod_m^A \oplus \mod_m^{A + 2} = \mod_k^{z}$.  Use Lemma \ref{lemma: shifting_parameters} to say
    \[
    U(\mod_{m, n}^A) \geq \frac{U(\mod_{k, n - m}^{z})}{2}
    \]
    The claim now follows because of the induction hypothesis.
    \item One may verify (by considering cases when $k$ has residue either 0 or 2 modulo 4) that $z$ cannot be the all 0 or all 1 string.
    \item If $z$ represents parity or its negation, then it can be checked that the only case when this occurs is when $A$ represented a non simple accepting set, say $X$, modulo 4.
    Thus,
    \[
    U(\mod_m^a) = U(\mod_4^X) = \Omega(n)
    \]
   \end{enumerate}
  \end{enumerate}
\end{enumerate}
\end{proof}

\subsection{An upper bound}\label{sec: upper_bound}
In this section, we show that for any symmetric function $f: \bra{-1, 1}^n \rightarrow \bra{-1, 1}$, the $\PP$ complexity of $f \circ \xor$ is upper bounded by essentially $\deg_{oe}(f)$.
Our proof follows along the lines of Zhang \cite{Zhang91} who shows that a symmetric function with small odd-even degree has a small Threshold of Parity circuit representation.

\begin{theorem}\label{thm: upper_bound}
Suppose $f: \bra{-1, 1}^n \rightarrow \bra{-1, 1}$ is a symmetric function defined by the predicate $D_f: \bra{0, 1, \dots, n} \rightarrow \bra{-1, 1}$.  Say the odd-even degree of $f$ equals $k$ and $n$ is even.
Then,
\[
\PP(f \circ \xor) = O(k\log n)
\]
\end{theorem}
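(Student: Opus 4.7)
The plan is to follow Zhang's approach \cite{Zhang91} in two steps: first exhibit a sign-representing polynomial for $f$ with $n^{O(k)}$ nonzero Fourier coefficients \emph{and} total $\ell_1$ weight $n^{O(k)}$; then convert such a polynomial into a $\PP$ protocol for $f \circ \xor$ by the standard single-monomial sampling trick.

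For step one, separate the Hamming-weight axis into its even and odd sub-lattices. On even (respectively odd) Hamming weights, $D_f$ changes value at exactly $k_e$ (respectively $k_o$) positions with $k_e + k_o = k$; call them $i_1 < \dots < i_{k_e}$ and $j_1 < \dots < j_{k_o}$. Define univariate polynomials
\[
q_e(z) = \epsilon_e \prod_{t=1}^{k_e}(z - i_t - 1), \qquad q_o(z) = \epsilon_o \prod_{t=1}^{k_o}(z - j_t - 1),
\]
with $\epsilon_e, \epsilon_o \in \bra{-1, 1}$ chosen so that $\mathrm{sgn}(q_e(|x|)) = D_f(|x|)$ for even $|x|$ and $\mathrm{sgn}(q_o(|x|)) = D_f(|x|)$ for odd $|x|$. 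Writing $|x| = (n - \sum_i y_i)/2$ (with $y_i = (-1)^{x_i}$) and noting that $(1 + \prod_i y_i)/2$ is the indicator of $|x|$ being even, set
\[
p(y) = \frac{1 + \prod_i y_i}{2}\, q_e\!\left(\frac{n - \sum_i y_i}{2}\right) + \frac{1 - \prod_i y_i}{2}\, q_o\!\left(\frac{n - \sum_i y_i}{2}\right).
\]
After expanding and reducing via $y_i^2 = 1$, every nonzero Fourier coefficient of $p$ is indexed by a set of size at most $k$ or at least $n - k$, so $p$ has at most $2\sum_{\ell \le k}\binom{n}{\ell} = n^{O(k)}$ nonzero coefficients. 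Each of $q_e, q_o$ is a product of at most $k$ integer linear factors of magnitude at most $n$, hence $|p(y)| \le n^{O(k)}$ pointwise; since $|\wh{p}(S)| \le \max_y|p(y)|$, the weight satisfies $wt(p) \le n^{O(k)}$. Integrality of $|x| - i_t - 1$ together with the parity mismatch between $|x|$ and $i_t + 1$ forces $|p(y)| \ge 1$ everywhere, so $p$ sign-represents $f$ with margin at least $1/wt(p)$.

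For step two, since $p(x \oplus y) = \sum_S \wh{p}(S)\,\chi_S(x)\chi_S(y)$ sign-represents $f \circ \xor$ with the same margin, the protocol is standard: Alice samples $S$ from $\{S: \wh{p}(S) \neq 0\}$ with probability $|\wh{p}(S)|/wt(p)$, sends $S$ along with the bit $\mathrm{sgn}(\wh{p}(S))\,\chi_S(x_A)$ using $O(k\log n)$ bits in total, and Bob outputs this bit times $\chi_S(x_B)$. The expected product is $p(x \oplus y)/wt(p)$, which has sign $(f \circ \xor)(x,y)$ and magnitude at least $1/wt(p)$, giving advantage $\epsilon = \Omega(1/wt(p)) = n^{-O(k)}$. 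The total $\PP$ cost is $O(k \log n) + \log(1/\epsilon) = O(k\log n)$. The main technical care is in step one, where both the support size and the $\ell_1$-weight of $p$ must simultaneously be $n^{O(k)}$; the conceptual content here is essentially Zhang's, and the conversion in step two is routine.
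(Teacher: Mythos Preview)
Your proof is correct and follows essentially the same approach as the paper: both construct the sign-representing polynomial for $f$ as a product of linear factors at the $(i,i+2)$ sign-change points, combined via the parity indicator $(1\pm\chi_{[n]})/2$, and both obtain weight $n^{O(k)}$ and margin $\ge 1/wt(p)$ from integrality. The only difference is cosmetic: the paper finishes by invoking its Margin--Discrepancy Theorem (Theorem~\ref{thm: equiv}) together with Klauck's characterization (Theorem~\ref{thm: klauck}), whereas you unfold this into the explicit monomial-sampling protocol and separately track the Fourier support to bound the bits needed to transmit $S$---but the content is the same.
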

\begin{proof}

Define $S_{even} = \bra{i \in \bra{0, 2, \dots, n} : D_f(i) \neq D_f(i + 2)}$, and define $S_{odd} = \bra{i \in \bra{1, 3, \dots, n - 1} : D_f(i) \neq D_f(i + 2)}$.
By our assumption, $\abs{S_{even}}, \abs{S_{odd}} \leq k$.

Consider the polynomials $p_{even}, p_{odd} : \bra{-1, 1}^n \rightarrow \R$ defined by
\[
p_{even}(x) = D_f(0) \cdot \prod_{i \in S_{even}} \left(n - 2i + 1 - \left(\sum_{j = 1}^n x_j\right)\right)
\]
and
\[
p_{odd}(x) = D_f(1) \cdot \prod_{i \in S_{odd}} \left(n - 2i + 1 - \left(\sum_{j = 1}^n x_j\right)\right)
\]
The polynomial $p : \bra{-1, 1}^n \rightarrow \R$ defined by
\[
p(x) = (1 + \chi_{[n]}(x))p_{even}(x) + (1 - \chi_{[n]}(x))p_{odd}(x)
\]
sign represents $f$ on $\bra{-1, 1}^n$.

We now use the simple observations that $wt(q_1 \cdot q_2) \leq wt(q_1) \cdot wt(q_2)$ and $wt(q_1 + q_2) \leq wt(q_1) + wt(q_2)$.
Thus,
\begin{align*}
wt(p) & \leq 2wt(p_{even}) + 2wt(p_{odd})\\
& \leq 2(2n)^k + 2(2n)^k\\
& \leq 4 (2n)^k
\end{align*}

Note that all the coefficients of $p$ are integer valued.  Thus, the polynomial $p' = \frac{p}{wt(p)}$ is a polynomial of weight 1, which sign represents $f$ with margin at least $\frac{1}{wt(p)}$.
By Theorem \ref{thm: equiv} and Theorem \ref{thm: klauck}, 
\[
\PP(f \circ \xor) \leq O(\log(wt(p))) \leq O(k\log n)
\]

\end{proof}

\subsection{Circuits}\label{sec: ckts}

In this section, we show how to obtain a size lower bound on a restricted class of threshold circuits computing $\mod_m^A \circ \xor$ for non simple $A$.
Forster et al.~\cite{FKLMSS01} noted that sign rank lower bounds also yield lower bounds against $\THR \circ \MAJ$ circuits.
In fact, it yields lower bounds for the class $\THR \circ \mathsf{L_{comm}}$ where $\mathsf{L_{comm}}$ denotes any gate with low deterministic communication complexity.
We show the following.

\begin{theorem}\label{thm: ckt_lb}
Any $\THR \circ C$ circuit computing $\mod_m^A \circ \xor$ must have size
\[
s \geq 2^{\Omega\left(\frac{n - km}{jm}\right) - \frac{2j\log j}{m} - c}
\]
where $c$ is the deterministic communication complexity of $C$, and $m = j2^k$ is the unique representation of $m \geq 3$, where $j$ is either odd or 4, and $k$ is a positive integer.
\end{theorem}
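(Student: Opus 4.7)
The plan is to derive Theorem \ref{thm: ckt_lb} as a direct corollary of Theorem \ref{thm: actual_main} via the standard reduction from depth-$2$ threshold circuits to unbounded error communication protocols. Specifically, I would first establish the folklore lemma that any function $f$ computable by a $\THR \circ C$ circuit of size $s$, where each $C$-gate computes a function of deterministic communication complexity at most $c$, satisfies $\UPP(f) \leq \log s + c + O(1)$.

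To prove this lemma, suppose $f(x,y) = \mathrm{sgn}\bigl(\sum_{i=1}^s w_i g_i(x,y)\bigr)$ where each $g_i \in \{-1,1\}$ has a deterministic protocol of cost $c$. After normalizing so that $\sum_i |w_i| = 1$, the players use the following $\UPP$ protocol: Alice samples an index $i \in [s]$ from the distribution with weight $|w_i|$ using her private coins, sends $i$ to Bob (using $\lceil \log s \rceil$ bits), they jointly run the deterministic protocol for $g_i$ (using $c$ more bits), and finally Bob outputs $\mathrm{sgn}(w_i) \cdot g_i(x,y)$. The expected output equals $\sum_i w_i g_i(x,y)$, which has the same sign as $f(x,y)$ and is nonzero, so the protocol succeeds with probability strictly greater than $1/2$ on every input. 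The total cost is $\log s + c + O(1)$.

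Applying this lemma to $f = \mod_m^A \circ \xor$, any size-$s$ $\THR \circ C$ circuit computing it yields
\[
\UPP(\mod_m^A \circ \xor) \leq \log s + c + O(1).
\]
Combining this with the lower bound from Theorem \ref{thm: actual_main}, namely
\[
\UPP(\mod_m^A \circ \xor) \geq \Omega\!\left(\frac{n - km}{jm}\right) - \frac{2j \log j}{m},
\]
and rearranging gives
\[
s \geq 2^{\,\Omega\left(\frac{n - km}{jm}\right) - \frac{2j\log j}{m} - c},
\]
which is exactly the claimed bound.

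Since the reduction from $\THR \circ C$ circuits to $\UPP$ protocols is standard and Theorem \ref{thm: actual_main} has already been proved, there is essentially no obstacle here; the only small subtlety is making sure the $O(1)$ additive loss in the protocol cost is absorbed into the exponent's constant $c$ term, and that the deterministic communication complexity of the bottom gates is what is charged (rather than, say, their size). I would simply note that the argument works verbatim if $C$ stands for any class with deterministic communication complexity at most $c$, thereby covering $\MAJ$, $\mathsf{SYM}$, and $\mathsf{L_{comm}}$ simultaneously.
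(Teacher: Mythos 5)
Your proof is correct and reaches exactly the claimed bound, but the route is slightly different from the paper's. The paper argues through sign rank: it observes that each $C$-gate's communication matrix has rank at most $2^c$ (the paper's text writes ``$c$,'' which is a harmless typo), so the sign rank of the $\THR \circ C$ output is at most $s \cdot 2^c$, and then combines this with Theorem~\ref{thm: PS} and the sign-rank lower bound implied by Theorem~\ref{thm: actual_main}. You instead construct an explicit $\UPP$ protocol of cost $\log s + c + O(1)$ by sampling a gate index with probability proportional to $|w_i|$ and running the deterministic protocol for that gate; this is precisely the protocol whose existence the sign-rank bound certifies implicitly via Paturi--Simon. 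The two arguments are standard and essentially dual to each other: yours is self-contained and works purely on the protocol side, while the paper's stays in matrix language, which fits naturally with how it derived the lower bound in the first place (via spectral-norm estimates of the communication matrix). Your one-sentence remark that the argument works verbatim for any class $C$ of bottom gates with deterministic communication cost $c$ (covering $\MAJ$, $\mathsf{SYM}$, and $\mathsf{L_{comm}}$ uniformly) is also exactly the observation the paper makes in the surrounding discussion.
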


\begin{proof}
The rank of the communication matrix of each $C$ gate is at most $c$, thus the sign rank of a function computed by a $\THR \circ C$ circuit is at most $sc$, where $s$ is
the size of the circuit.
Theorem \ref{thm: actual_main} and Theorem \ref{thm: PS} tells us that $sr(\mod_m^A \circ \xor) \geq 2^{\Omega\left(\frac{n - km}{jm}\right) - \frac{2j\log j}{m}}$, where $m = j2^k$, and $j$ is either 4 or odd.
Thus,
\[
sc \geq 2^{\Omega\left(\frac{n - km}{jm}\right) - \frac{2j\log j}{m}} \implies s \geq 2^{\Omega\left(\frac{n - km}{jm}\right) - \frac{2j\log j}{m} - c}
\]
\end{proof}
Thus, we obtain that for $m$ up to $O(n^{1/2 - \epsilon})$, and any non-simple $A$, $\mod_m^A \circ \xor$ is not in subexponential sized $\THR \circ \MAJ$.
A similar argument shows that $\mod_m^A \circ \xor$ is not even in subexponential size $\THR \circ \mathsf{SYM}$, where $\mathsf{SYM}$ denotes the class of all symmetric functions.
This is because all symmetric functions have deterministic communication complexity bounded above by $O(\log(n))$.

This generalizes one particular result of Krause and Pudlak \cite{KP97}, and of Zhang \cite{Zhang91} which state that $\mod_m^{\bra{0}} \notin \THR \circ \mathsf{PAR}$, where $\mathsf{PAR}$ denotes the class of all parity gates.
This is because we have shown that $\mod_m^A \circ \xor \notin \THR \circ \mathsf{SYM}$,
which implies $\mod_m^A \circ \xor \notin \THR \circ \mathsf{PAR}$.  This implies $\mod_m^A \notin \THR \circ \mathsf{PAR}$.

\section{Acknowledgements}

We thank anonymous reviewers for providing invaluable comments regarding the presentation of parts of this paper.
We thank Justin Thaler for pointers regarding the connection between margin and threshold weight, and bringing the recent paper of Hatami and Qian \cite{HQ17} to our notice.

\bibliography{bibo}

\end{document}